\newcommand{\ul}[1]{%
  \uline{\phantom{#1}}%
  \llap{\contour{white}{#1}}%
}
\DeclarePairedDelimiter\floor{\lfloor}{\rfloor}
\pgfplotsset{
    name nodes near coords/.style={
        every node near coord/.append style={
            name=#1-\coordindex,
            alias=#1-last,
        },
    },
    name nodes near coords/.default=coordnode
}
\pgfplotsset{compat=1.5.1}
\newenvironment{customlegend}[1][]{%
  \begingroup
  \csname pgfplots@init@cleared@structures\endcsname
  \pgfplotsset{#1}%
}{%
  \csname pgfplots@createlegend\endcsname
  \endgroup
}%
\def\addlegendimage{\csname pgfplots@addlegendimage\endcsname}
\definecolor{aa}{rgb}{0.2,0.7,0.310}
\definecolor{cc}{rgb}{1.0,0.49,0.0}
\definecolor{bb}{rgb}{0.514,0.325,0.831}
\def\E{\mathcal{E}}
\def\D{\mathcal{D}}
\def\U{\mathcal{U}}
\def\I{\mathcal{I}}
\def\W{\mathcal{W}}
\def\C{\mathcal{C}}
\def\P{\mathbf{P}}
\def\Y{\mathbf{Y}}
\def\T{{\scriptscriptstyle\mathsf{T}}}
\newcommand{\PD}{PD}
\def\diffpd{\operatorname{DPD}}
\def\pd{\operatorname{PD}}
\def\dh{\operatorname{DH}}
\newcommand{\ours}{\textsc{\textsf{FADE}}}
\newcommand{\oursabs}{\textsc{\textsf{FADE-Abs}}}
\newcommand{\pretrain}{\textsc{\textsf{Pretrain}}}
\newcommand{\pretrainfair}{\textsc{\textsf{Pretrain-Fair}}}
\newcommand{\full}{\textsc{\textsf{Retrain}}}
\newcommand{\fullfair}{\textsc{\textsf{Retrain-Fair}}}
\newcommand{\fine}{\textsc{\textsf{Finetune}}}
\newcommand{\finefair}{\ours}
\newcommand{\adver}{\textsc{\textsf{Adver}}}
\newcommand{\rerank}{\textsc{\textsf{Rerank}}}
\newtheorem{prob}{Problem}
\newtheorem{defn}{Definition}
\newcommand\TODO[1][]{{\color{orange}[TODO\ifthenelse{\equal{#1}{}}{}{: #1}]}}
\newcommand\SEC\section
\newcommand\SSEC\subsection
\newcommand\SSSEC\subsubsection
\newcommand\RM[1]{\mathrm{#1}}
\newcommand\EQ[1]{\begin{equation}#1\end{equation}}
\newcommand\BM[1]{\boldsymbol{#1}}
\newcommand\BB[1]{\mathbb{#1}}
\newcommand\CAL[1]{\mathcal{#1}}
\newcommand\OP[1]{\operatorname{#1}}
\newcommand\AL[1]{\begin{align}#1\end{align}}
\newcommand\AM[1]{\begin{align*}#1\end{align*}}
\newcommand\TLD[1]{\widetilde{#1}}
\newcommand\Prb{\mathbb{P}}
\newcommand\Exp{\mathbb{E}}
\newtheorem{DEF}{\textbf{Definition}}
\newtheorem{ASS}{\textbf{Assumption}}
\begin{document}

\title{Ensuring User-side Fairness in Dynamic Recommender Systems}




\author{Hyunsik Yoo}
\affiliation{
  \institution{University of Illinois Urbana-Champaign}
  \city{}
  \state{}
  \country{}
}
\email{hy40@illinois.edu}

\author{Zhichen Zeng}
\affiliation{
  \institution{University of Illinois Urbana-Champaign}
    \city{}
  \state{}
  \country{}
}
\email{zhichenz@illinois.edu}

\author{Jian Kang}
\affiliation{
  \institution{University of Rochester}
   \city{}
  \state{}
  \country{}
}
\email{jian.kang@rochester.edu}

\author{Ruizhong Qiu}
\affiliation{
  \institution{University of Illinois Urbana-Champaign}
   \city{}
  \state{}
  \country{}
}
\email{rq5@illinois.edu}

\author{David Zhou}
\affiliation{
  \institution{University of Illinois Urbana-Champaign}
    \city{}
  \state{}
  \country{}
}
\email{david23@illinois.edu}

\author{Zhining Liu}
\affiliation{
  \institution{University of Illinois Urbana-Champaign}
    \city{}
  \state{}
  \country{}
}
\email{liu326@illinois.edu}

\author{Fei Wang}
\affiliation{
  \institution{Amazon.com, Inc.}
    \city{}
  \state{}
  \country{}
}
\email{feiww@amazon.com}

\author{Charlie Xu}
\affiliation{
  \institution{Amazon.com, Inc.}
    \city{}
  \state{}
  \country{}
}
\email{caizhx@amazon.com}

\author{Eunice Chan}
\affiliation{
  \institution{University of Illinois Urbana-Champaign}
    \city{}
  \state{}
  \country{}
}
\email{ecchan2@illinois.edu}

\author{Hanghang Tong}
\affiliation{
  \institution{University of Illinois Urbana-Champaign}
    \city{}
  \state{}
  \country{}
}
\email{htong@illinois.edu}

\renewcommand{\shortauthors}{Hyunsik Yoo et al.}

\begin{abstract}
User-side group fairness is crucial for modern recommender systems, alleviating performance disparities among user groups defined by sensitive attributes like gender, race, or age. In the ever-evolving landscape of user-item interactions, continual adaptation to newly collected data is crucial for recommender systems to stay aligned with the latest user preferences. However, we observe that such continual adaptation often worsen performance disparities. This necessitates a thorough investigation into user-side fairness in dynamic recommender systems. This problem is challenging due to distribution shifts, frequent model updates, and non-differentiability of ranking metrics. To our knowledge, this paper presents the first principled study on ensuring user-side fairness in dynamic recommender systems. We start with theoretical analyses on fine-tuning v.s.\ retraining, showing that the best practice is incremental fine-tuning with restart. Guided by our theoretical analyses, we propose \ul{FA}ir \ul{D}ynamic r\ul{E}commender (\ours), an end-to-end fine-tuning framework to dynamically ensure user-side fairness over time. To overcome the non-differentiability of recommendation metrics in the fairness loss, we further introduce Differentiable Hit (DH) as an improvement over the recent NeuralNDCG method, not only alleviating its gradient vanishing issue but also achieving higher efficiency. Besides that, we also address the instability issue of the fairness loss by leveraging the competing nature between the recommendation loss and the fairness loss. Through extensive experiments on real-world datasets, we demonstrate that \ours\ effectively and efficiently reduces performance disparities with little sacrifice in the overall recommendation performance.\footnote{The code is available at: 
\href{https://github.com/hsyoo32/fade}{https://github.com/hsyoo32/fade}.}
\end{abstract}

%
%

\begin{CCSXML}
<ccs2012>
   <concept>
       <concept_id>10002951.10003227.10003351</concept_id>
       <concept_desc>Information systems~Data mining</concept_desc>
       <concept_significance>500</concept_significance>
       </concept>
   <concept>
       <concept_id>10010147.10010257</concept_id>
       <concept_desc>Computing methodologies~Machine learning</concept_desc>
       <concept_significance>500</concept_significance>
       </concept>
 </ccs2012>
\end{CCSXML}

\ccsdesc[500]{Information systems~Data mining}
\ccsdesc[300]{Computing methodologies~Machine learning}

%
\keywords{recommender systems; user-side fairness; dynamic updates}

\maketitle

\vspace{-2mm}
\section{Introduction}
\label{sec:introduction}

Recommender systems are essential for delivering high-quality personalized recommendations in a two-sided market (i.e., user-side and item-side)~\cite{wu2022graph, zou2019reinforcement,dong2021individual}. 
In this market, users provide feedback on recommended items, and the system refines the recommendations to better reflect their preferences. However, these recommender systems can perform poorly for users from certain demographic groups even while delivering high-quality recommendations on average~\cite{wang2023survey, chen2020bias}. For example, a job recommender system might recommend more irrelevant job opportunities to female engineers in STEM (Science, Technology, Engineering, and Mathematics), which can significantly impact their career growth~\cite{islam2021debiasing,lahoti2019ifair}. Thus, it is important to alleviate the performance disparity between different user groups in recommender systems~\cite{li2021user}.

Although there is a parallel line of research on item-side fairness, those methods do not apply to user-side fairness due to the fundamental distinction between user- and item-side fairness. In essence, user-side fairness is concerned with ensuring equitable recommendation quality for different users, while item-side fairness focuses on providing equal exposure opportunities for items within recommendations, often addressing the so-called popularity bias of items through debiasing techniques. For example, several works for item-side fairness~\cite{zhu2021popularity, steck2019collaborative, steck2011item, zhu2021popularityopportunity} calibrates predicted ratings with item popularity, which does not apply to user-side fairness.

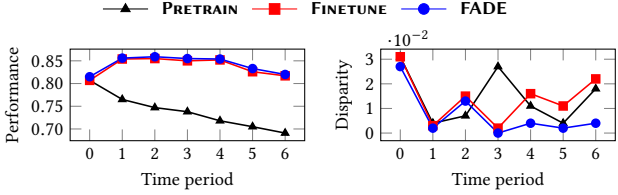
\begin{figure}[t]
\footnotesize
\centering
\begin{tikzpicture}
\begin{customlegend}[legend columns=3,legend style={align=left,draw=none,column sep=1ex},
        legend entries={ \textbf{\pretrain}, \textbf{\fine}, \textbf{\ours}
        }]
        \addlegendimage{draw=black,color=black,mark=triangle*}   
        \addlegendimage{draw=red,color=red,mark=square*} 
        \addlegendimage{draw=blue,color=blue,mark=*} 
        \end{customlegend}
\end{tikzpicture}
\\
\begin{tikzpicture}
\begin{axis}[
height=2.8cm, width=4.7cm, 
xtick={1, 2, 3, 4, 5, 6,7,8,9,10},
xticklabels={0,1,2,3,4,5,6,7,8,9},
ylabel=Performance,
xlabel=Time period,
y tick label style={/pgf/number format/.cd,fixed,fixed zerofill,precision=2,/tikz/.cd},]

\addplot[color=black,mark=triangle*,mark size=1.5pt,line width=0.6pt]
coordinates {(1,0.807) (2,0.765) (3,0.747) (4,0.738) (5,0.718) (6,0.705) (7,0.691) };
\addplot[color=red,mark=square*,mark size=1.5pt,line width=0.6pt]
coordinates {(1,0.807) (2,0.854) (3,0.855) (4,0.850) (5,0.852) (6,0.826) (7,0.817) };
\addplot[color=blue,mark=*,mark size=1.5pt,line width=0.6pt]
coordinates {(1,0.815) (2,0.856) (3,0.859) (4,0.855) (5,0.854) (6,0.833) (7,0.820) };

\end{axis}
\end{tikzpicture}
\hspace{2mm}
\begin{tikzpicture}
\begin{axis}[
height=2.8cm, width=4.7cm,
xtick={1, 2, 3, 4, 5, 6,7,8,9,10},
xticklabels={0,1,2,3,4,5,6,7,8,9},
ylabel=Disparity,xlabel=Time period,
y tick label style={/pgf/number format/.cd,fixed,fixed zerofill,precision=0,/tikz/.cd},]
\addplot[color=black,mark=triangle*,mark size=1.5pt,line width=0.6pt]
coordinates {(1,0.031) (2,0.004) (3,0.007) (4,0.027) (5,0.011) (6,0.004) (7,0.018) };
\addplot[color=red,mark=square*,mark size=1.5pt,line width=0.6pt]
coordinates {(1,0.031) (2,0.003) (3,0.015) (4,0.002) (5,0.016) (6,0.011) (7,0.022) };
\addplot[color=blue,mark=*,mark size=1.5pt,line width=0.6pt]
coordinates {(1,0.027) (2,0.002) (3,0.013) (4,0.000) (5,0.004) (6,0.002) (7,0.004) };

\end{axis}
\end{tikzpicture}\\
{\scriptsize\textbf{(a)} Recommendation performance over time\quad\;\;\textbf{(b)} Performance disparity between user groups}
\caption{
Even though incremental fine-tuning with new data (red curve) upholds recommendation performance compared to pretrain (black curve),
the disparity gradually expands over time without fairness regularization. (See \S\ref{sec:experiment} for detail.)
}\label{fig:observation}
\end{figure}

Furthermore, due to the evolving nature of user-item interactions, real-world recommender systems continually adapt to new data over time to improve recommendation quality~\cite{zhang2020retrain, kim2022meta}. However, as shown in Fig.~\ref{fig:observation}, neglecting fairness during dynamic adaptation leads to performance disparity between user groups persisting or even expanding over time. This highlights the importance of maintaining user-side fairness in dynamic recommendation.

Despite its critical importance, to the best of our knowledge, user-side fairness \cite{li2021user,fu2020fairness} has not been explored in the context of dynamic recommendation, which is in stark contrast to the extensive research effort on item-side fairness in dynamic recommendation \cite{zhu2021popularity, ge2021towards, morik2020controlling}. As item-side methods are inapplicable to user-side fairness, a thorough study of user-side fairness in dynamic recommendation will substantially expand the frontiers of fair dynamic recommendation and establish a prospective foundation for future research on two-sided fairness \cite{wu2021tfrom, do2021two} in dynamic recommendation.

This paper presents the first principled study of \textit{user-side fairness} in \textit{dynamic/continual} recommender systems. We identify and address the following challenges: 
\textbf{(C1)~Distribution shifts.} Constant emergence of new users/items and evolving user preferences lead to distribution shifts. Distribution shifts not only affects recommendation performance but also worsens performance disparity among user groups over time.
\textbf{(C2)~Frequent model updates.} Due to distribution shifts in dynamic recommendation, recommender systems need frequent updates to cater to current user preferences. This imposes \textit{efficiency} requirements on the model updating method. However, existing postprocessing methods involve time-intensive re-ranking \cite{fu2020fairness,li2021user}, which are inefficient for frequent model updates. 
\textbf{(C3)~Non-differentiability of ranking metrics.} 
The sorting operation in ranking metrics is non-differentiable. This raises a critical challenge in end-to-end training because we cannot directly use the non-differentiable performance disparity as the fairness loss. Even if one resorts to postprocessing methods like re-ranking \cite{fu2020fairness,li2021user} which does not involve end-to-end training, they critically suffer from the existing performance disparity in candidate item lists.

To address the challenges, we propose \ul{FA}ir \ul{D}ynamic r\ul{E}commender (\ours), an \textit{end-to-end} framework employing an incremental \textit{fine-tuning} strategy to dynamically alleviate performance disparity between user groups. 
Specifically, our key contributions are: 
\vspace{-4mm}
\begin{itemize}[leftmargin=*]
    \item 
    \textbf{Problem.} We observe that the user-side performance disparity tends to persist or worsen over time, despite improvements in recommendation performance. 
    To our knowledge, we are the first to study user-side fairness in dynamic/continual recommendation. 
    \item\textbf{Theory.}
    To ground the design of our method, we theoretically analyze \textit{fine-tuning} v.s.\ \textit{retraining} in terms of generalization error (recommendation \& fairness) under distribution shifts. Our Theorems~\ref{thm:ft} \& \ref{thm:rt} show that the best practice is incremental fine-tuning with restart. 
    \item \textbf{Algorithm.}
    Based on theoretical analyses, we propose \ours, a novel dynamic recommender based on incremental fine-tuning that balances both recommendation quality and user-side fairness. To overcome the non-differentiability of recommendation metrics in the fairness loss, we further introduce \textit{Differentiable Hit} (DH) as an improvement over the recent NeuralNDCG method \cite{pobrotyn2021neuralndcg}, not only alleviating its gradient vanishing but also achieving higher efficiency. Besides that, we also address the instability of the fairness loss by leveraging the competing nature between the recommendation loss and the fairness loss (Proposition~\ref{prop:fair}).
    \item \textbf{Experiments.} 
    Empirical experiments on real-world datasets demonstrate that \ours\ effectively reduces performance disparity (average decrease of 48.91\%) without significantly compromising overall performance over time (average drop of 2.44\%). 
\end{itemize}

\vspace{-2mm}
\section{Problem Definition}
\label{sec:preliminary}

In this section, we first present the key notations in the paper. Then we provide preliminaries on the settings of dynamic recommendation and user-side fairness. Finally, we formally define the problem of user-side fairness in dynamic recommender systems.

\noindent \textbf{Notations.} 
Table~\ref{tab:notations} provides a list of our symbols. 
We use $\D_t = \{\U_t, \I_t, \E_t, \mathbf{Y}_{t}\}$ to denote the dataset collected at time period $t~\forall t\in\{1, \ldots, T\}$,
\footnote{
Depending on the system, the time period could be either a specific time frame (e.g., daily or weekly) or until a specific number of interactions has been collected.}
where the subscript $_t$ indicates the time period $t$, $\U_t$ is the user set, $\I_t$ is the item set, $\E_t$ is the user-item interaction set, and $\mathbf{Y}_{t}$ is the user-item interaction matrix. 
We consider binary user-item interaction in this work, i.e., $\mathbf{Y}_t[u, i] = 1$ if user $u$ has interacted with item $i$ within the $t$-th time period, and 0 otherwise. 
The initial user set, item set, user-item interaction set, and the user-item interaction matrix \textit{before} the first time period (i.e., pretrain data) is denoted as $\U_0$, $\I_0$, $\E_0$, and $\mathbf{Y}_0$, respectively. 
Lastly, we denote the subscript $_{:t}$ as time period from the beginning up to $t$. 

\noindent \textbf{Dynamic/online recommendation.} We assume that an initial recommendation model has been pre-trained with $\D_0 = \{\U_0, \I_0, \E_0, \mathbf{Y}_0\}$ in an offline manner, and then the model is trained solely with the newly collected data
$\D_t$ at the current time period $t,~\forall t\in\{1, \ldots, T\}$. Once the model has been trained/fine-tuned on $\D_t$, a top-$K$ recommendation list $[i_1, \dots, i_K]$ for each user $u$, ranked by the predicted scores $\mathbf{\widehat Y}_t[u, i], \forall i$, is generated.

\noindent \textbf{User-side fairness.} Given a binary sensitive attribute $a\in\{0, 1\}$ (e.g., gender), we focus on ensuring user-side group fairness, i.e., mitigate the recommendation performance disparity between the advantaged user group ($a = 0$) and the disadvantaged user group ($a = 1$)~\cite{li2021user}. More specifically, the user-side performance disparity at any time period $t$ is defined as follows.
\vspace{-2mm}
\begin{defn}[\textnormal{User-side performance disparity~\cite{li2021user}}] For a time period $t$ 
    with ground-truth test interaction set $\D^{\rm test}_t$ and for a recommendation metric $\operatorname{Perf(\cdot)}$ (such as $\RM{NDCG}@K$ or $\RM{F1@}K$), 
    the \textnormal{user-side performance disparity} is defined by
\vspace{-1mm}
\begin{equation}\label{eq:perrformance_disparity}
    \pd_t:=\operatorname{Perf}(\D^{\rm test}_t \mid a = 0)-\operatorname{Perf}(\D^{\rm test}_t \mid a = 1).
\end{equation}
\end{defn}

\vspace{-1mm}
\noindent\textbf{Problem definition.} 
We formally define our problem as follows:

\begin{prob}[\textnormal{User-side fairness in dynamic recommender systems}]
\textbf{Input:} (1) a pre-trained recommendation model with parameters $\mathcal{W}_0$; (2) a continually collected dataset $\D_t = \{\U_t, \I_t, \E_t, \mathbf{Y}_t\}, \forall t\in\{1, \ldots, T\}$; (3) a binary sensitive attribute $a\in\{0, 1\}$; (4) a specific performance evaluation metric $\text{Perf}(\cdot)$ to calculate $\pd_t$ (see Eq. \eqref{eq:perrformance_disparity}).

\textbf{Output:} For any time period $t$, a fairness-regularized model with the parameters $\mathcal{W}_t$ that (1) optimizes the $\pd_t$ to be close to zero and (2) achieves high-quality recommendations.
\end{prob}
\vspace{-2mm}
\section{\ours: A Fair Dynamic Recommender}
\label{sec:proposed_method}

In this section, we present \ours, a novel fair dynamic recommender system designed to effectively and efficiently reduce performance disparity over time. We begin with theoretical analyses on fine-tuning v.s.\ retraining in the context of dynamic fair recommendation in \S\ref{ssec:theory}, demonstrating that the best practice is incremental fine-tuning with restart. Then in \S\ref{ssec:overall_loss_fine-tune}, we introduce our incremental fine-tuning strategy that balances both recommendation performance and user-side fairness. To address the non-differentiability challenge, we improve NeuralNDCG \cite{pobrotyn2021neuralndcg} and develop \textit{Differentiable Hit} (DH), an efficient approximation scheme of the non-differentiable ranking metric, in \S\ref{ssec:drm}. Building upon DH, we propose a differentiable and lightweight loss function for user-side fairness in \S\ref{ssec:fairness_loss}. Our method is presented in Algorithm~\ref{alg:method}.

\vspace{-2mm}
\subsection{Fine-Tuning v.s.\ Retraining}\label{ssec:theory}
Common practice for evolving data involves incremental \textit{fine-tuning} and \textit{retraining}. To obtain a deeper understanding of their behaviors in fair dynamic recommendation to guide the design of our method, we theoretically analyze their generalization error (recommendation \& fairness) under distribution shift.
Suppose that the model is currently trained with $\CAL D_0\cup\cdots\cup\CAL D_{t_\textnormal{te}-1}$ and is to be tested on $\CAL D_{t_\textnormal{te}}$.
For each time period $t$, let $m_t:=|\CAL E_t|$ 
denote the size of dataset $\CAL D_t$, let $\CAL L^{\CAL D_t}(\CAL W)$ denote the empirical loss (recommendation + fairness (e.g., Eq.~\eqref{eq:final_loss})) over dataset $\CAL D_t$, let $\CAL L_t(\CAL W):=\Exp_{\CAL D_t}[\CAL L^{\CAL D_t}(\CAL W)]$ denote the true generalization loss, and let $\CAL L_t^*:=\inf_{\CAL W}\CAL L_t(\CAL W)$ denote the optimal loss value. To obtain concrete yet non-trivial theoretical results, 
we let $m_1=\cdots=m_{t_\textnormal{te}-1}\ll m_0$ and make mild and realistic assumptions for theoretical analysis (see \S\ref{app:assum}).

Next, we introduce our theoretical measure of distribution shift. There are two sources of distribution shift over time: \textit{covariate shift} 
and \textit{concept drift}. 
In dynamic recommendation, covariate shift corresponds to shift of user/item attribute distributions (i.e., the distribution of $(\U_t,\I_t,\E_t)$), and concept drift corresponds to evolution of user preferences (i.e., the conditional distribution $\Y_t|(\U_t,\I_t,\E_t)$).
\begin{table}[t]
\centering\caption{Main symbols used in this paper. 
}\resizebox{.49\textwidth}{!}{\renewcommand{\arraystretch}{1.15}\begin{tabular}{p{0.16\linewidth}|p{0.9\linewidth}} \toprule
    \hfil \textbf{Symbol} & \hfil \textbf{Description} \\ \midrule
    \hfil $\D_t$ & Dataset collected at time period $t$ \\
    \hfil $\U_t$, $\I_t$, $\E_t$ & Sets of users, items, and their interactions at time period $t$ \\ 
    \hfil $\mathbf{Y}_t$ & User-item interaction matrix at time period $t$ \\
    \hfil $\mathbf{\widehat Y}_t$ & User-item predicted score matrix at time period $t$ \\
    \hfil $\W_t$ & Set of model parameters at time period $t$\\
    \hfil $a$ & Binary sensitive attribute of a user\\ \midrule
    \hfil $\mathcal{L}_{\text{rec}}$, $\mathcal{L}_{\text{fair}}$ & Recommendation loss and fairness loss, respectively \\
    \hfil $\C_u$, $N$ & Set of candidate items for a user $u$ and its size \\
    \hfil $\mathbf{s}_u$ & Unsorted list of recommendation scores of items in $\C_u$ \\
    \hfil $\mathbf{r}_u$ & List of items in $\C_u$ ranked by their scores in $\mathbf{s}_u$ \\
    \hfil $\P_{u}$, $\widehat{\P}_{u}$ & Permutation matrix and relaxed permutation matrix for $\mathbf{s}_u$ \\
    \hfil $\lambda$ & Scaling parameter for $\mathcal{L}_{\text{fair}}$ \\
    \hfil $\tau$ & Temperature parameter for $\widehat{\P}_{\mathbf{s}_u}$ \\
    \hfil $\mu$ & The number of negative items in $\C_u$ \\
    \hfil $n$ & The number of negative items for $\mathcal{L}_{\text{rec}}$\\
    \bottomrule
\end{tabular}
}
\label{tab:notations}
\end{table}


Regarding covariate shift, a classic measure is the \textit{discrepancy distance} \cite{mansour2009domain} (a generalized \textit{$\CAL H\Delta\CAL H$ distance} \cite{ben2010theory}):
\EQ{d^{\CAL H\Delta\CAL H}_{t,t_\textnormal{te}}\!\!:=\!\!\sup_{\CAL W,\CAL W'}\!\!\big||\CAL L_t(\CAL W)-\CAL L_t(\CAL W')|-|\CAL L_{t_\textnormal{te}}(\CAL W)-\CAL L_{t_\textnormal{te}}(\CAL W')|\big|.}
The intuition is that if there is no covariate shift between $t$ and $t_\text{te}$, then for any two models $\W,\W'$, their difference of $\CAL L$ should not differ between $t$ and $t_\text{te}$, leading to $d^{\CAL H\Delta\CAL H}_{t,t_\textnormal{te}}=0$.
Regarding concept drift, we use a classic measure called \textit{combined error} \cite{ben2010theory}:
\EQ{d^\text{comb}_{t,t_\textnormal{te}}\!\!:=\inf_{\CAL W}\!\big(\CAL L_t(\CAL W)+\CAL L_{t_\textnormal{te}}(\CAL W)\big)-\CAL L_t^*-\CAL L_{t_\textnormal{te}}^*.}
The intuition is that if there is no concept drift between $t$ and $t_\text{te}$, then $\CAL L_t$ and $\CAL L_{t_\text{te}}$ can achieve their minimum values with the same model $\CAL W$, leading to $d^\text{comb}_{t,t_\text{te}}=0$. 
Together, we define a unified measure of distribution shift as follows by combining the measures of covariate shift and concept drift:
\EQ{d_{t,t_\textnormal{te}}:=d^{\CAL H\Delta\CAL H}_{t,t_\textnormal{te}}+d^\text{comb}_{t,t_\textnormal{te}}.}

Building upon the measure of distribution shift, we theoretically analyze the generalization error (recommendation performance \& user-side fairness)
of fine-tuning and retraining in the presence of distribution shift (Theorems~\ref{thm:ft} \& \ref{thm:rt}).
\begin{theorem}[Fine-tuning]\label{thm:ft}
Let $\CAL L_{t_\textnormal{te}}^\textnormal{ft}$ denote the best possible loss of fine-tuning till $\CAL D_{t_\textnormal{te}-1}$. Suppose that the number of fine-tuning epochs at each time period $t\ge1$ is decided according to the proximity assumption \cite{rajeswaran2019meta} with some $0<\gamma<1$ (see \S\ref{app:assum} for detail). Then with probability at least $1-\delta$, 
\AL{
\CAL L_{t_\textnormal{te}}^\textnormal{ft}\le{}&\CAL L_{t_\textnormal{te}}^*+\textcolor{blue}{2\gamma^{t_\textnormal{te}-1}d_{0,t_\textnormal{te}}+2\sum\limits_{t=1}^{t_\textnormal{te}-1}(1-\gamma)\gamma^{t_\textnormal{te}-t-1}d_{t,t_\textnormal{te}}}\\
&+4\sqrt{\Big(\textcolor{red}{\frac{\gamma^{2t_\textnormal{te}-2}}{\frac{m_0}{\log m_0}}+\frac{(1+\gamma)(1-\gamma^{2t_\textnormal{te}-4})}{(1-\gamma)\frac{m_1}{\log m_1}}}\Big)\log\frac2\delta}\nonumber
.}
\end{theorem}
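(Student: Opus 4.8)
The plan is to prove the bound through a one-step generalization recursion for the fine-tuned iterate, then unroll it over the $t_\textnormal{te}-1$ fine-tuning rounds and control the accumulated statistical error by a single concentration inequality. Write $\mathcal{W}_t^{\textnormal{ft}}$ for the model after fine-tuning on $\mathcal{D}_t$ starting from $\mathcal{W}_{t-1}^{\textnormal{ft}}$, and $e_t:=\mathcal{L}_{t_\textnormal{te}}(\mathcal{W}_t^{\textnormal{ft}})-\mathcal{L}_{t_\textnormal{te}}^*$, so that the quantity to bound is $\mathcal{L}_{t_\textnormal{te}}^{\textnormal{ft}}-\mathcal{L}_{t_\textnormal{te}}^*=e_{t_\textnormal{te}-1}$. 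The first ingredient is a domain-adaptation bridge, an immediate consequence of the definitions of $d^{\mathcal{H}\Delta\mathcal{H}}_{t,t_\textnormal{te}}$ and $d^{\textnormal{comb}}_{t,t_\textnormal{te}}$: for every $\mathcal{W}$,
\[
\mathcal{L}_{t_\textnormal{te}}(\mathcal{W})-\mathcal{L}_{t_\textnormal{te}}^*\ \le\ \bigl(\mathcal{L}_t(\mathcal{W})-\mathcal{L}_t^*\bigr)+2\,d_{t,t_\textnormal{te}},
\]
obtained by inserting the combined minimizer $\mathcal{W}^{\dagger}:=\argmin_{\mathcal{W}}\bigl(\mathcal{L}_t(\mathcal{W})+\mathcal{L}_{t_\textnormal{te}}(\mathcal{W})\bigr)$, applying the definition of $d^{\mathcal{H}\Delta\mathcal{H}}_{t,t_\textnormal{te}}$ to the pair $(\mathcal{W},\mathcal{W}^{\dagger})$, and bounding the residual excess terms by $d^{\textnormal{comb}}_{t,t_\textnormal{te}}$ via $\mathcal{L}_t(\mathcal{W}^{\dagger})\ge\mathcal{L}_t^*$ and $\mathcal{L}_{t_\textnormal{te}}(\mathcal{W}^{\dagger})\ge\mathcal{L}_{t_\textnormal{te}}^*$.

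Next I would convert the proximity assumption into the one-step recursion. Under the assumptions of \S\ref{app:assum}, choosing the number of epochs per the proximity assumption makes the fine-tuning update an interpolation with weight $\gamma$ toward $\mathcal{W}_{t-1}^{\textnormal{ft}}$ and weight $1-\gamma$ toward the period-$t$ empirical minimizer $\widehat{\mathcal{W}}_t:=\argmin_{\mathcal{W}}\mathcal{L}^{\mathcal{D}_t}(\mathcal{W})$, so that convexity of $\mathcal{L}_{t_\textnormal{te}}$ gives $\mathcal{L}_{t_\textnormal{te}}(\mathcal{W}_t^{\textnormal{ft}})\le\gamma\,\mathcal{L}_{t_\textnormal{te}}(\mathcal{W}_{t-1}^{\textnormal{ft}})+(1-\gamma)\,\mathcal{L}_{t_\textnormal{te}}(\widehat{\mathcal{W}}_t)$. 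Applying the bridge at period $t$ to $\widehat{\mathcal{W}}_t$ together with the ERM bound $\mathcal{L}_t(\widehat{\mathcal{W}}_t)-\mathcal{L}_t^*\le 2\varepsilon_t$, where $\varepsilon_t:=\sup_{\mathcal{W}}|\mathcal{L}^{\mathcal{D}_t}(\mathcal{W})-\mathcal{L}_t(\mathcal{W})|$, this yields
\[
e_t\ \le\ \gamma\,e_{t-1}+2(1-\gamma)\,d_{t,t_\textnormal{te}}+2(1-\gamma)\,\varepsilon_t,\qquad t=1,\dots,t_\textnormal{te}-1,
\]
with base case $e_0\le 2\,d_{0,t_\textnormal{te}}+2\,\varepsilon_0$ since the pretrained $\mathcal{W}_0$ is (near-)optimal on $\mathcal{D}_0$. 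Unrolling gives $e_{t_\textnormal{te}-1}\le 2\gamma^{t_\textnormal{te}-1}d_{0,t_\textnormal{te}}+2(1-\gamma)\sum_{t=1}^{t_\textnormal{te}-1}\gamma^{t_\textnormal{te}-1-t}d_{t,t_\textnormal{te}}+\bigl(2\gamma^{t_\textnormal{te}-1}\varepsilon_0+2(1-\gamma)\sum_{t=1}^{t_\textnormal{te}-1}\gamma^{t_\textnormal{te}-1-t}\varepsilon_t\bigr)$, whose first two terms are exactly the deterministic (distribution-shift) part of the claimed bound.

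It remains to control the weighted sum of the $\varepsilon_t$. Since $\mathcal{D}_0,\dots,\mathcal{D}_{t_\textnormal{te}-1}$ are independent and, by the boundedness and complexity assumptions of \S\ref{app:assum}, each $\varepsilon_t$ is a $\sqrt{(\log m_t)/m_t}$-scale sub-Gaussian deviation, the weighted sum $\sum_t c_t\varepsilon_t$ (with $c_0=2\gamma^{t_\textnormal{te}-1}$ and $c_t=2(1-\gamma)\gamma^{t_\textnormal{te}-1-t}$) is sub-Gaussian with variance proxy $\sum_t c_t^2(\log m_t)/m_t$; a tail bound with the confidence budget split between the $\mathcal{D}_0$ term and the remaining terms, together with $m_1=\cdots=m_{t_\textnormal{te}-1}$ and the geometric sum $\sum_{t}\gamma^{2(t_\textnormal{te}-1-t)}=\frac{1-\gamma^{2(t_\textnormal{te}-1)}}{1-\gamma^2}$, produces the $4\sqrt{(\cdots)\log(2/\delta)}$ term (the precise constants and the exact $\gamma$-powers being routine bookkeeping left to the appendix). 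The main obstacle is the second step above: converting the ``number of epochs / proximity'' assumption into the clean $\gamma$-to-$(1-\gamma)$ recursion, because one must simultaneously absorb three distinct error sources — finite-epoch optimization error, the empirical-vs-population gap of the period-$t$ objective, and the covariate-shift/concept-drift gap between period $t$ and the test period $t_\textnormal{te}$ — while keeping the recursion coefficients exactly $\gamma$ and $1-\gamma$ so that the unrolled sums collapse to the stated closed form; a secondary subtlety is merging the $t_\textnormal{te}$ per-round statistical errors into a \emph{single} square-root term (which forces an independent-sum concentration with squared weights $c_t^2$) rather than a sum of square roots.
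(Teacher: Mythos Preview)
Your high-level plan has the right flavor (unroll a $\gamma$-weighted recursion, apply a domain-adaptation bridge, finish with a single concentration on the weighted statistical error), but the crucial second step misreads the proximity assumption and inserts an unwarranted convexity hypothesis. Assumption~\ref{ASS:prox} does \emph{not} say that the fine-tuned parameter vector is a convex combination $\mathcal{W}_t^{\textnormal{ft}}=\gamma\,\mathcal{W}_{t-1}^{\textnormal{ft}}+(1-\gamma)\,\widehat{\mathcal{W}}_t$; it says that $\mathcal{W}_t^{\textnormal{ft}}$ minimizes the \emph{mixed loss function} $\ell_t(\mathcal{W})=(1-\gamma)\,\mathcal{L}^{\mathcal{D}_t}(\mathcal{W})+\gamma\,\ell_{t-1}(\mathcal{W})$. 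These statements coincide only for quadratic losses. And even under your parameter-interpolation reading, the step $\mathcal{L}_{t_\textnormal{te}}(\mathcal{W}_t^{\textnormal{ft}})\le\gamma\,\mathcal{L}_{t_\textnormal{te}}(\mathcal{W}_{t-1}^{\textnormal{ft}})+(1-\gamma)\,\mathcal{L}_{t_\textnormal{te}}(\widehat{\mathcal{W}}_t)$ requires convexity of $\mathcal{L}_{t_\textnormal{te}}$ in $\mathcal{W}$, which is nowhere assumed in \S\ref{app:assum} and is false for the neural recommenders the paper targets. So your recursion on $e_t$ does not get off the ground.

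The paper sidesteps both obstacles by unrolling at the \emph{loss-function} level rather than the parameter level: expanding $\ell_t=(1-\gamma)\mathcal{L}^{\mathcal{D}_t}+\gamma\ell_{t-1}$ gives $\ell_{t_\textnormal{te}-1}(\mathcal{W})=\gamma^{t_\textnormal{te}-1}\mathcal{L}^{\mathcal{D}_0}(\mathcal{W})+\sum_{t=1}^{t_\textnormal{te}-1}(1-\gamma)\gamma^{t_\textnormal{te}-t-1}\mathcal{L}^{\mathcal{D}_t}(\mathcal{W})$, so $\mathcal{W}_{t_\textnormal{te}-1}^{\textnormal{ft}}$ is (near-)optimal for a \emph{single} weighted empirical objective $\sum_t\alpha_t^{\textnormal{ft}}\mathcal{L}^{\mathcal{D}_t}$ with exactly the weights $\alpha_t^{\textnormal{ft}}$ appearing in the theorem. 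The rest is then a one-shot application of a general lemma (Lemma~\ref{lem:alpha}/Corollary~\ref{cor:alpha}) bounding $\mathcal{L}_{t_\textnormal{te}}$ at any approximate minimizer of such a weighted objective; that lemma already packages the domain-adaptation bridge you wrote and the concentration step together, with no recursion on $e_t$ and no convexity. Your concentration idea (variance proxy $\sum_t c_t^2(\log m_t)/m_t$) is correct in spirit, but note the paper controls $\sup_{\mathcal{W}}\bigl|\sum_t\alpha_t\mathcal{L}^{\mathcal{D}_t}(\mathcal{W})-\sum_t\alpha_t\mathcal{L}_t(\mathcal{W})\bigr|$ in one shot via the bounded-differences structure of Assumption~\ref{ASS:subg}, rather than summing per-period uniform deviations $\varepsilon_t$.
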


\begin{theorem}[Retraining]\label{thm:rt}
Let $\CAL L_{t_\textnormal{te}}^\textnormal{rt}$ be the best possible loss of retraining on $\CAL D_0\cup\cdots\cup\CAL D_{t_\textnormal{te}-1}$. With probability at least $1-\delta$, 
\AL{\CAL L_{t_\textnormal{te}}^\textnormal{rt}\le{}&\CAL L_{t_\textnormal{te}}^*+\textcolor{blue}{\frac{2m_0d_{0,t_\textnormal{te}}\,+2\!\!\sum\limits_{t=1}^{t_\textnormal{te}-1}\!m_1d_{t,t_\textnormal{te}}}{m_0+(t_\textnormal{te}-1)m_1}}\\&+4\sqrt{\textcolor{red}{\frac{\log m_0}{m_0+(t_\textnormal{te}-1)m_1}}\!\log\tfrac2\delta}\nonumber.}
\end{theorem}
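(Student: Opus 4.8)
The plan is to prove Theorem~\ref{thm:rt} as a (much simpler) analogue of Theorem~\ref{thm:ft}, since retraining on $\CAL D_0\cup\cdots\cup\CAL D_{t_\textnormal{te}-1}$ is effectively ``pooled empirical risk minimization'' over a single mixture dataset of size $m:=m_0+(t_\textnormal{te}-1)m_1$. First I would set up the pooled loss $\CAL L^{\rm pool}(\CAL W):=\frac{1}{m}\big(m_0\CAL L^{\CAL D_0}(\CAL W)+\sum_{t=1}^{t_\textnormal{te}-1}m_1\CAL L^{\CAL D_t}(\CAL W)\big)$, with population counterpart $\bar{\CAL L}(\CAL W):=\frac{1}{m}\big(m_0\CAL L_0(\CAL W)+\sum_{t=1}^{t_\textnormal{te}-1}m_1\CAL L_t(\CAL W)\big)$, and let $\CAL W^{\rm rt}$ be the minimizer of $\CAL L^{\rm pool}$. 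The retraining loss is then $\CAL L_{t_\textnormal{te}}^{\rm rt}=\CAL L_{t_\textnormal{te}}(\CAL W^{\rm rt})$.

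The argument then decomposes into three pieces, each matching one term in the bound. (i) \emph{Distribution-shift (bias) term.} By definition of $d^{\CAL H\Delta\CAL H}_{t,t_\textnormal{te}}$ and $d^\text{comb}_{t,t_\textnormal{te}}$, for any $\CAL W$ one has $\CAL L_{t_\textnormal{te}}(\CAL W)\le\CAL L_t(\CAL W)+d_{t,t_\textnormal{te}}+(\CAL L_{t_\textnormal{te}}^*-\CAL L_t^*)$; averaging these inequalities over $t\in\{0,1,\dots,t_\textnormal{te}-1\}$ with weights $m_t/m$ gives $\CAL L_{t_\textnormal{te}}(\CAL W)\le\bar{\CAL L}(\CAL W)+\frac{m_0 d_{0,t_\textnormal{te}}+\sum_{t\ge1}m_1 d_{t,t_\textnormal{te}}}{m}+\big(\CAL L_{t_\textnormal{te}}^*-\frac{1}{m}\sum_t m_t\CAL L_t^*\big)$, and applying the symmetric inequality in the other direction at the optimum controls the residual $\CAL L_{t_\textnormal{te}}^*$-vs-$\CAL L_t^*$ gap, yielding exactly the blue term (with the factor $2$ absorbing the two-sided comparison, as in Theorem~\ref{thm:ft}). (ii) \emph{Generalization (variance) term.} Here I would invoke the uniform-convergence assumption from \S\ref{app:assum}: since the pooled dataset consists of $m$ (conditionally) independent samples, a standard Rademacher/Hoeffding bound gives $\sup_{\CAL W}|\CAL L^{\rm pool}(\CAL W)-\bar{\CAL L}(\CAL W)|\le C\sqrt{\frac{\log m_0}{m}\log\frac2\delta}$ with probability $\ge1-\delta$ — the $\log m_0$ factor coming from the complexity/covering assumption on the hypothesis class (the same one that produces $\frac{m_0}{\log m_0}$ in Theorem~\ref{thm:ft}), and $m_0\le m$ so $\log m\le$ const$\cdot\log m_0$. (iii) \emph{Optimization term.} Retraining solves the pooled ERM exactly (unlike fine-tuning, which incurs the $\gamma$-proximity slack), so there is no extra optimization error. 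Chaining: $\CAL L_{t_\textnormal{te}}(\CAL W^{\rm rt})\le\bar{\CAL L}(\CAL W^{\rm rt})+(\text{bias})\le\CAL L^{\rm pool}(\CAL W^{\rm rt})+(\text{var})+(\text{bias})\le\CAL L^{\rm pool}(\CAL W_{t_\textnormal{te}}^*)+(\text{var})+(\text{bias})\le\bar{\CAL L}(\CAL W_{t_\textnormal{te}}^*)+2(\text{var})+(\text{bias})$, and bounding $\bar{\CAL L}(\CAL W_{t_\textnormal{te}}^*)\le\CAL L_{t_\textnormal{te}}^*+(\text{bias})$ by the same averaged shift inequality applied at $\CAL W_{t_\textnormal{te}}^*$ closes the loop, after collecting constants into the leading $4$.

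The main obstacle I expect is bookkeeping the constant and the exact form of the bias term so that it collapses to the clean expression $\frac{2m_0 d_{0,t_\textnormal{te}}+2\sum_{t=1}^{t_\textnormal{te}-1}m_1 d_{t,t_\textnormal{te}}}{m_0+(t_\textnormal{te}-1)m_1}$: one has to be careful that the $\CAL L_t^*$ terms introduced when relating $\CAL L_{t_\textnormal{te}}$ to $\bar{\CAL L}$ cancel against those reintroduced when relating $\bar{\CAL L}(\CAL W_{t_\textnormal{te}}^*)$ back to $\CAL L_{t_\textnormal{te}}^*$, using that $d^\text{comb}$ is defined with the optimal values subtracted. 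A secondary subtlety is justifying the single $\log m_0$ (rather than a sum $\sum_t\log m_t$) in the variance term: because the samples across periods are pooled into one i.i.d.-within-the-mixture sample of size $m$, a single uniform-convergence bound over the whole pooled sample suffices, and under $m_1\ll m_0$ the complexity term is dominated by $\log m_0$. Everything else is routine once these two alignments are in place; I would relegate the precise constant-tracking to the appendix and present only the three-term decomposition in the main text.
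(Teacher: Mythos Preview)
Your proposal is essentially the paper's approach: both treat retraining as minimizing the weighted empirical loss $\sum_t \alpha_t \CAL L^{\CAL D_t}$ with $\alpha_t = m_t/m$ where $m=m_0+(t_\textnormal{te}-1)m_1$, then chain a bias term (controlled by $d_{t,t_\textnormal{te}}$) with a concentration term. The paper abstracts this as a general lemma (Lemma~\ref{lem:alpha} and Corollary~\ref{cor:alpha}) for arbitrary $\BM\alpha$, applied to both Theorems~\ref{thm:ft} and~\ref{thm:rt}; your direct chain for this specific $\BM\alpha$ is equally valid, and the $\CAL L_t^*$-cancellation you anticipate is exactly what happens there.

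Two corrections. First, your justification of the $\log m_0$ factor is backwards and misreads the assumption. You write ``$m_0\le m$ so $\log m\le\text{const}\cdot\log m_0$'', but $m_0\le m$ gives the opposite inequality. More importantly, the paper does \emph{not} pool into a single sample of size $m$ with a single complexity term $\log m$. Assumption~\ref{ASS:subg} is stated \emph{per period}: each point in $\CAL D_t$ contributes a subgaussian increment with parameter $\frac{\sqrt{\log m_t}}{m_t}$ to $\CAL L^{\CAL D_t}$. The variance proxy for the weighted combination is therefore $\sum_t m_t\big(\alpha_t\tfrac{\sqrt{\log m_t}}{m_t}\big)^2=\sum_t\tfrac{\alpha_t^2\log m_t}{m_t}$, which at $\alpha_t=m_t/m$ equals $\tfrac{1}{m^2}\sum_t m_t\log m_t$. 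The $\log m_0$ enters via $\log m_t\le\log m_0$ (using $m_1\le m_0$, from the standing hypothesis $m_1\ll m_0$), giving $\tfrac{1}{m^2}\sum_t m_t\log m_0=\tfrac{\log m_0}{m}$. So the inequality you need is $m_t\le m_0$, not $m_0\le m$.

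Second, a minor technicality: Assumption~\ref{ASS:inf} only guarantees that infima exist, not that they are attained, so ``let $\CAL W^{\rm rt}$ be the minimizer'' and ``retraining solves the pooled ERM exactly'' are not quite licensed. The paper works with $\epsilon$-approximate minimizers $\CAL W^{\BM\alpha,\epsilon}$ and then takes $\inf_\epsilon$ at the end (Corollary~\ref{cor:alpha}); you should do the same.
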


Proofs are in \S\ref{app:pf-rt-ft}. Theorems~\ref{thm:ft} \& \ref{thm:rt} point out two sources of generalization error: (i) \textcolor{blue}{distribution shift} in terms of $d_{t,t_\textnormal{te}}$ and (ii) \textcolor{red}{learning error} due to the finite dataset size $m_t$. Regarding distribution shift, since larger time differences typically result in larger distribution shifts, we have $d_{0,t_\textnormal{te}}>d_{1,t_\textnormal{te}}>\cdots>d_{t_\textnormal{te}-1,t_\textnormal{te}}$ 
Fine-tuning can exponentially shrink 
(via the \textcolor{blue}{$\gamma^{t_\textnormal{te}-t-1}$} factor) 
the influence of distribution shift while retraining suffers from greater influence of distribution shift.
This is consistent with our intuition since retraining treats old and new data equally while fine-tuning pays more attention to newer data. This suggests that we should use fine-tuning to mitigate the impact of distribution shift. Meanwhile, when $t_\textnormal{te}$ is large, fine-tuning's learning error $\textcolor{red}{\frac{(1+\gamma)(1-\gamma^{2t_\textnormal{te}-4})}{(1-\gamma)\frac{m_1}{\log m_1}}}$ can be greater than retraining's \textcolor{red}{$\frac{\log m_0}{m_0+(t_\text{te}-1)m_1}$} because $m_1\ll m_0$. 
This suggests that the performance of dynamically fine-tuned model will eventually degrade after a number of periods, which is consistent with our empirical observation (refer to Fig.10 in \S\ref{assec:main_result}).

Therefore, to leverage the benefits of fine-tuning without sacrificing performance, it is advisable to fine-tune the model for certain periods $T$ 
until performance degradation is observed, then retrain 
the model from scratch and resume the fine-tuning process.

\vspace{-2mm}
\subsection{Incremental Fine-Tuning Strategy}\label{ssec:overall_loss_fine-tune}
Building upon our theoretical analysis on \textbf{(C1) distribution shifts} and for the sake of \textbf{(C2) time-efficiency}, \ours\ fine-tunes the model parameters incrementally over time only with the new data $\D_t$ collected at $t$. We optimize the following loss functions: 
\begin{align} \label{eq:final_loss}
    &\mathcal{L}^{\D_t} := \mathcal{L}_{\text{rec}}^{\D_t} + \lambda\mathcal{L}_{\text{fair}}^{\D_t},
\end{align}
where $\mathcal{L}_{\text{rec}}$ is for improving the recommendation performance, $\mathcal{L}_{\text{fair}}$ 
is for regularizing the performance disparity between the disadvantaged and advantaged groups, and $\lambda$ is the scaling parameter for controlling the trade-off between the recommendation performance and the fairness. 
In this paper, we use the classic 
Bayesian personalized ranking (BPR) loss~\cite{rendle2012bpr} as $\mathcal{L}_{\text{rec}}$, i.e.,
\begin{equation}
\mathcal{L}_{\text{rec}}^{\D_t}:=-\frac{1}{|\mathcal{E}_t|}\sum_{(u,i)\in \mathcal{E}_t} \frac{1}{|\CAL N_{ui}|} \sum_{i'\in\CAL N_{ui}} \log(\sigma(s_{ui}-s_{ui'})),
\end{equation}
where $\sigma(\cdot)$ is the sigmoid function, and $\CAL N_{ui}$ is a set of sampled negative items for $u$.
Note that this loss can be replaced with any differentiable recommendation loss that can be optimized by gradient descent. We will define $\CAL L_\text{fair}$ in \S\ref{ssec:fairness_loss}.

By jointly optimizing $\mathcal{L}_{\text{rec}}$ and $\mathcal{L}_{\text{fair}}$ in an end-to-end fashion to fine-tune the model parameters for each time period, we can dynamically and efficiently reduce the performance disparity, which may otherwise worsen as the optimization continues, while simultaneously accurately preserving the user preferences over time.

\vspace{-2mm}
\subsection{\textbf{Differentiable Hit}}\label{ssec:drm}
Most evaluation metrics for top-$K$ recommendations, such as NDCG@$K$, are not differentiable due to their reliance on the ranking/sorting operation of items. As discussed in \S\ref{sec:introduction}, this \textbf{(C3) non-differentiability} presents a challenge when optimizing fairness measures, specifically performance disparity, using gradient descent algorithms. To overcome this challenge, several soft ranking losses have been proposed to directly optimize relaxed ranking metrics~\cite{pobrotyn2021neuralndcg, qin2010general, cao2007learning}. NeuralNDCG \cite{pobrotyn2021neuralndcg} is a recent work on differentiable approximation of ranking metrics. However, due to the use of the Sinkhorn algorithm, NeuralNDCG may lead to the gradient vanishing issue and also poses \textbf{(C2) time-inefficiency}. To address these limitations, we propose {\it Differentiable Hit}, a function that is not only effective but also more lightweight than existing methods, making it well-suited for dynamic recommendation.

First, let us define a standard {\it Hit} function.
Suppose a score vector $\mathbf{s}_u=[s_{u1}, s_{u2}, \dots, s_{uN}]^\T$ for a user $u$ represents the \textit{``unsorted''} list of recommendation scores (i.e., $s_{ui}=\mathbf{\hat Y}_t[u, i]$) of $N$ \textit{candidate items} in a set $\C_u$ (with $|\C_u|=N$), a vector $\textbf{r}_u$ represents the \textit{``sorted''} list of items ranked in the descending order by their scores in $\mathbf{s}_u$,
and $\mathbf{r}_u[k]$ represents the $k$-th ranked item.

With the above definitions, we can define the Hit function, $\operatorname{Hit}(\C_u;k)$ for $k\in \{1,\ldots,K\}$, which indicates whether the $k$-th ranked item $\mathbf{r}_u[k]$ is $u$'s ground-truth item, as follows:
\begin{equation} \label{eq:hit}
    \operatorname{Hit}(\C_
    u;k) := \begin{cases} 1 & \text{if } \mathbf{Y}_t[u, \mathbf{r}_u[k]] = 1, \\
    0 & \text{if } \mathbf{Y}_t[u, \mathbf{r}_u[k]] = 0. \end{cases}
\end{equation}
Here, the \textit{sorting operation} used to produce the $\mathbf{r}_u$, which can also be represented as \textit{a permutation matrix}, renders the $\operatorname{Hit}_u(k)$ non-differentiable. 
However, we can overcome this limitation by using the continuous relaxation for permutation matrices to approximate the deterministic sorting operation with a differentiable continuous sorting~\cite{groverstochastic}.
First, for the deterministic sorting, 
the permutation matrix $\mathbf{P}_{u}\in\mathbb{R}^{N\times N}$ is given by~\cite{groverstochastic}: 
\begin{equation} \label{eq:permutation}
    \P_{u}[k,j] := \begin{cases} 1 & \text{if}~ j = \operatorname{argmax} [(N + 1 - 2k)\textbf{s}_u - \textbf{A}_{u} \mathbf{1}],\\0 & \text{otherwise}, \end{cases}
\end{equation}
where $\mathbf{1}$ is the column vector of all ones and $\mathbf{A}_{u}$
is the absolute distance matrix of $\mathbf{s}_u$ with $\mathbf{A}_{u}[k,j] = |s_{uk}-s_{uj}|$. 
For instance, if we set $k=\floor*{(N+1)/2}$, then the non-zero entry in the $k$-th row, $\P_u[k,:]$, corresponds to the element with the minimum sum of absolute distances to the other elements, and this corresponds to the median element, as desired. 

Then, the argmax operator is replaced by Gumbel-softmax~\cite{jang2016categorical} to obtain a continuous relaxation of the permutation matrix; 
the $k$-th row of the permutation matrix is relaxed as follows~\cite{groverstochastic}:
\begin{equation} \label{eq:relaxed_perm}
    \widehat{\P}_{u}[k,:] := \operatorname{softmax} \left[ \left((N+1-2k)\textbf{s}_u - \textbf{A}_{u} \mathbf{1}\right)/\tau \right],
\end{equation} 
where $\tau$ is the temperature parameter, and $\widehat{\P}_{u}$ approaches a permutation matrix (i.e., Eq.~\eqref{eq:permutation}) when $\tau\to0^+$.  
Intuitively,  
each entry of $\widehat{\P}_{u}[k,:]$ indicates the probability that the corresponding item will be the $k$-th ranked item. Since $\widehat{\P}_{u}$ is continuous everywhere and differentiable almost everywhere w.r.t. the elements of $\mathbf{s}_u$, we can define a differentiable Hit, as we elaborate below.

Since the $k$-th row of the 
permutation matrix \mbox{$\P_{u}[k,:]$} (i.e., Eq.~\eqref{eq:permutation}) is equal to the one-hot vector of the $k$-th ranked item,
we can reformulate the Hit function (i.e., Eq.~\eqref{eq:hit}) as follows:
\begin{equation}
    \operatorname{Hit}(\C_u;k) =  \P_{u}[k,:] \cdot \Y_t[u,:]^\T,
\end{equation}
where $\Y_t[u,i]=1$ if the item $i$ is a ground-truth item, and 0 otherwise.
Finally, by replacing $\P_{u}[k,:]$ (Eq.~\eqref{eq:permutation}) with $\widehat{\P}_{u}[k,:]$ (Eq.~\eqref{eq:relaxed_perm}), we 
define a \textit{Differentiable Hit} (DH) as follows:
\begin{equation} \label{eq:re_hit}
    \operatorname{DH}(\C_u;k):=
    \widehat{\P}_{u}[k,:] \cdot \Y_t[u,:]^\T .
\end{equation}
Using DH as a building block, we can differentiably approximate various top-$K$ recommendation metrics. For example,
\begin{equation} \label{eq:drm}
\small
\text{NDCG@}K\approx\frac1{|\U_t|}\sum_{u\in\U_t}\frac1{\OP{maxDCG}(\C_u)}\sum_{k=1}^K \frac{\dh(\C_u;k)}{\text{log}_2(k+1)},
\end{equation}
where $\OP{maxDCG}(\C_u)$ is the maximum possible value of $\sum_{k=1}^K \frac{\dh(\C_u;k)}{\text{log}_2(k+1)}$, computed by decreasingly ordering $i \in \C_u$ by $\Y_t[u,i]$.

\begin{algorithm}[t]
    \caption{Fine-tuning procedure at time period $t$}
    \label{alg:method}
    \begin{algorithmic}[1]
    \State \textbf{Input:} Model parameters $\W_{t-1}$, scaling parameter $\lambda$, temperature parameter $\tau$, the number of negative items $n$ for $\mathcal{L}_{\text{rec}}$ and $\mu$ for $\mathcal{L}_{\text{fair}}$, sensitive attribute $a$, incoming dataset $\D_t = \{\U_t, \I_t, \E_t, \mathbf{Y}_t\}$
    \State \textbf{Output:} Updated model parameters $\W_{t}$
    \State $\W_t \gets \W_{t-1}$;
    \For{epoch}
    \For{mini-batch $\mathcal{B} \text{ obtained from } \E_t$}
        \For{user-item interaction $(u, i) \in \mathcal{B}$}
            \State Sample $n$ negative items as $\CAL N_{ui}$;
            \State Sample $\mu$ negative items as $\CAL N'_{ui}$; $\C_{ui}\gets\{i\}\cup\CAL N'_{ui}$;
        \EndFor
        \State $\mathcal{L}_{\text{rec}}\!\gets{-\frac{1}{|\mathcal{B}|}\sum_{(u,i)\in \mathcal{B}} \frac{1}{|\CAL N_{ui}|} \sum_{i'\in\CAL N_{ui}}\!\! \log(\sigma(s_{ui}-s_{ui'}))}$;
        \State$\diffpd \gets
        \frac{\sum_{(u,i) \in \{\CAL B|a=0\}}\! \dh(\C_{ui};1)}{|\{\CAL B|a=0\}|} - \frac{\sum_{(u,i) \in \{\CAL B|a=1\}}\! \dh(\C_{ui};1)}{|\{\CAL B|a=1\}|}$;
        \State$\mathcal{L}_{\textnormal{fair}}\gets -\log(\sigma({-\diffpd}))$;
        \State Update $\W_t$ based on $\mathcal{L}_{\text{rec}}+\lambda\mathcal{L}_{\text{fair}}$ via gradient descent;
    \EndFor
    \EndFor
    \end{algorithmic}
\end{algorithm}

\vspace{-2mm}
\subsection{\textbf{Fairness Loss}}\label{ssec:fairness_loss}
We design our fairness loss for reducing performance disparity between the advantaged ($a=0$) and disadvantaged ($a=1$) user groups. 
For the sake of training efficiency, we compose each candidate set with only 1 positive item and several negative items and use differentiable Hit@1 in our fairness loss. Formally, for each $(u,i)\in\E_t$, we sample $\mu$ negative items $\CAL N'_{ui}$, compose a candidate set $\C_{ui}:=\{i\}\cup\CAL N'_{ui}$ and use $\dh(\mathcal{C}_{ui}; 1)$ 
as a surrogate of the measure of recommendation quality for a user. 
While this differentiable Hit@1 used for training encourages the top-1 recommendation, it could also potentially benefit 
Hit@$K$-based metrics. 
We will empirically demonstrate that these settings consistently yield effective results across various recommendation metrics that rely on the Hit function. 
Based on $\dh$, 
we define the differentiable performance disparity ($\diffpd$) as follows:
\begin{equation} \label{eq:diff_pd}
        \diffpd^{\D_t} :=\,
        \frac{\sum\limits_{(u,i) \in \{\E_t|a=0\}}\!\!\!\!\!\!\!\!\!\! \dh(\C_{ui};1)}{|\{\E_t|a=0\}|} - \frac{\sum\limits_{(u,i) \in \{\E_t|a=1\}}\!\!\!\!\!\!\!\!\!\! \dh(\C_{ui};1)}{|\{\E_t|a=1\}|} ,
\end{equation}
which is an approximation of $\pd_t$ in Eq.~\eqref{eq:perrformance_disparity} on the sampled item set. 
Then, a na\"ive fairness loss function is to minimize $|{\diffpd}|$:
\begin{equation}\label{eq:fair_loss_abs}
    \mathcal{L}_{\text{fair-abs}}^{\D_t} := -\log(\sigma({-|{\diffpd^{\D_t}}|})),
\end{equation}
where $\sigma(\cdot)$ is the sigmoid function. However, the non-smoothness of $\mathcal{L}_{\text{fair-abs}}$ will cause instability in training, as shown in our experiment (
Fig.~\ref{fig:trunc-tradeoff-taskR}). To address this limitation, we leverage the property of the sigmoid function and surprisingly prove that removing the absolute value operation $|\cdot|$ can still ensure fairness adaptively. Formally, we propose the following fairness loss:
\begin{equation} \label{eq:fair_loss}
    \mathcal{L}_{\textnormal{fair}}^{\D_t} := -\log(\sigma({-\diffpd^{\D_t}})).
\end{equation}
Then, we have Proposition~\ref{prop:fair}.

\begin{proposition}\label{prop:fair}
Let $\TLD{\W}_t:=\W_t-\eta\nabla_{\W_t}(\CAL L^{\D_t}_\textnormal{rec}+\lambda\CAL L^{\D_t}_\textnormal{fair})$ denote a gradient descent step with learning rate $\eta>0$. Suppose that $\CAL L_\textnormal{rec}$ causes unfairness (i.e., $\langle\nabla_{\W_t}\CAL L^{\D_t}_\textnormal{rec},\nabla_{\W_t}\diffpd^{\D_t}\rangle\le0$), and that the fairness loss has influence (i.e., $\nabla_{\W_t}\CAL L^{\D_t}_\textnormal{fair}\ne\BM0$). Then, there exists $\lambda\ge0$ such that
\EQ{\OP{sgn}(\diffpd^{\D_t}\!(\W_t))\cdot\lim_{\eta\to+0}\frac{\diffpd^{\D_t}\!(\TLD{\W}_t)-\diffpd^{\D_t}\!(\W_t)}{\eta}\le0.}
In particular, if $\,\diffpd^{\D_t}\!(\W_t)\!\le\!0$, then $\diffpd^{\D_t}\!(\TLD{\W}_t)\!\ge\!\diffpd^{\D_t}\!(\W_t)$ as $\eta\to+0$.
\end{proposition}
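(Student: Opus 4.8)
The plan is to compute the directional derivative of $\diffpd^{\D_t}$ along the gradient step and show it can be made to have the right sign by choosing $\lambda$ large enough. Write $g:=\nabla_{\W_t}\diffpd^{\D_t}(\W_t)$ and $h:=\nabla_{\W_t}\CAL L^{\D_t}_\textnormal{rec}(\W_t)$. Since $\TLD{\W}_t=\W_t-\eta\nabla_{\W_t}(\CAL L^{\D_t}_\textnormal{rec}+\lambda\CAL L^{\D_t}_\textnormal{fair})$, a first-order Taylor expansion gives
\EQ{\lim_{\eta\to+0}\frac{\diffpd^{\D_t}(\TLD{\W}_t)-\diffpd^{\D_t}(\W_t)}{\eta}=-\big\langle g,\,h+\lambda\nabla_{\W_t}\CAL L^{\D_t}_\textnormal{fair}\big\rangle.}
The first step is therefore to express $\nabla_{\W_t}\CAL L^{\D_t}_\textnormal{fair}$ in terms of $g$. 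Because $\CAL L^{\D_t}_\textnormal{fair}=-\log\sigma(-\diffpd^{\D_t})$ is a composition of a scalar function with $\diffpd^{\D_t}$, the chain rule yields $\nabla_{\W_t}\CAL L^{\D_t}_\textnormal{fair}=\sigma(\diffpd^{\D_t})\,g$, using the identity $\tfrac{d}{dx}\big(-\log\sigma(-x)\big)=\sigma(x)$, which lies in $(0,1)$. Denote $c:=\sigma(\diffpd^{\D_t}(\W_t))>0$. Substituting back, the directional derivative equals $-\langle g,h\rangle-\lambda c\|g\|^2$.

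The second step is the sign analysis. By hypothesis $\CAL L_\textnormal{rec}$ causes unfairness, i.e. $\langle h,g\rangle\le0$ — but this is the wrong direction of inequality to kill the term directly, so instead I exploit that $\|g\|^2>0$: the assumption $\nabla_{\W_t}\CAL L^{\D_t}_\textnormal{fair}\ne\BM0$ combined with $c>0$ forces $g\ne\BM0$, hence $\|g\|^2>0$. Now consider two cases on the sign of $\diffpd^{\D_t}(\W_t)$. If $\diffpd^{\D_t}(\W_t)>0$ we need the derivative $\le0$, i.e. $-\langle g,h\rangle-\lambda c\|g\|^2\le0$, i.e. $\lambda\ge -\langle g,h\rangle/(c\|g\|^2)$; the right side is finite (possibly negative, in which case any $\lambda\ge0$ works), so such $\lambda\ge0$ exists. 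If $\diffpd^{\D_t}(\W_t)<0$ we need the derivative $\ge0$; here I must use the unfairness hypothesis $\langle h,g\rangle\le0$, which gives $-\langle g,h\rangle\ge0$, and $-\lambda c\|g\|^2\le0$, so actually I need $-\langle g,h\rangle\ge\lambda c\|g\|^2$. Hmm — this shows that in this case the natural choice is $\lambda=0$ (or small), for which the derivative is $-\langle g,h\rangle\ge0$ as required. The boundary case $\diffpd^{\D_t}(\W_t)=0$ makes $\OP{sgn}=0$ and the claim is trivial. So in every case there is a valid $\lambda\ge0$; combining the two cases into the single statement $\OP{sgn}(\diffpd^{\D_t}(\W_t))\cdot(\text{derivative})\le0$ is then just bookkeeping.

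For the final sentence, when $\diffpd^{\D_t}(\W_t)\le0$: if it is $<0$, take $\lambda=0$ so the limit of the difference quotient is $-\langle g,h\rangle\ge0$, giving $\diffpd^{\D_t}(\TLD\W_t)\ge\diffpd^{\D_t}(\W_t)$ to first order; if it is $=0$ the claim holds vacuously after noting the derivative is again $-\langle g,h\rangle\ge0$. The main obstacle I anticipate is reconciling the direction of the unfairness hypothesis with the sign goal: the hypothesis $\langle h,g\rangle\le0$ helps in one sign-case but not the other, so the existence of $\lambda$ must be argued case-by-case rather than uniformly, and care is needed because a single $\lambda$ need not work for both signs simultaneously — the statement only asserts existence of \emph{some} $\lambda\ge0$ for the given (fixed) current iterate, which sidesteps this. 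A secondary subtlety is justifying the first-order Taylor expansion and the interchange of limit and differentiation, which requires only that $\diffpd^{\D_t}$ be differentiable at $\W_t$; since $\diffpd^{\D_t}$ is built from $\softmax$ and linear maps it is smooth, so this is routine.
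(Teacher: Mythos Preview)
Your argument is correct: computing the directional derivative via the chain rule, observing $\nabla_{\W_t}\CAL L^{\D_t}_\textnormal{fair}=\sigma(\diffpd^{\D_t})\,g$, and then doing a sign-by-sign case analysis on $\diffpd^{\D_t}(\W_t)$ all go through. One minor slip: in the case $\diffpd^{\D_t}(\W_t)>0$ you allow the threshold $-\langle g,h\rangle/(c\|g\|^2)$ to be ``possibly negative,'' but under the unfairness hypothesis $\langle h,g\rangle\le0$ it is always $\ge0$; this does not affect validity.

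The paper takes a slightly different, more uniform route. Rather than splitting into cases and choosing $\lambda$ separately for each sign, it exhibits a single explicit
\[
\lambda=\frac{-2\langle\nabla_{\W_t}\CAL L^{\D_t}_\textnormal{rec},\nabla_{\W_t}\diffpd^{\D_t}\rangle}{\|\nabla_{\W_t}\diffpd^{\D_t}\|_2^2}\ge0
\]
that works regardless of the sign of $\diffpd^{\D_t}(\W_t)$. With this choice the directional derivative factors as $(1-2\sigma(\diffpd^{\D_t}))\cdot(-\langle h,g\rangle)$, and the conclusion then reduces to the elementary inequality $\OP{sgn}(x)(1-2\sigma(x))\le0$ for all $x\in\mathbb{R}$, together with $-\langle h,g\rangle\ge0$. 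This buys a case-free proof and makes the role of the sigmoid transparent: the sign flip of $1-2\sigma(\cdot)$ at the origin is precisely what lets the same $\lambda$ handle both regimes, and it automatically delivers the ``In particular'' clause (even at $\diffpd^{\D_t}=0$) without a separate check. Your approach is more elementary and arguably more transparent about why large $\lambda$ is needed only on one side; the paper's is slicker and yields a concrete $\lambda$.
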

Proof is in \S\ref{app:prop:fair}.
Intuitively, our $\mathcal{L}_{\textnormal{fair}}$ aims to benefit the disadvantaged user group ($a\!=\!1$) over the advantaged group ($a\!=\!0$).
Meanwhile, whenever $\diffpd\!<\!0$, the influence of $\mathcal{L}_{\textnormal{fair}}$ will be reduced adaptively, so the unfair $\mathcal{L}_{\textnormal{rec}}$ will push $\diffpd$ back to zero.

\vspace{-3mm}
\subsection{\textbf{{Complexity Analysis}}}\label{ssec:complexity}
Our fairness loss only adds a constant amount of complexity to most existing recommendation models. 
Assuming we employ MF-BPR~\cite{rendle2012bpr} as the base recommendation model with  user/item embeddings of dimensionality $d$, the time complexity of minimizing $\mathcal{L}_{\text{rec}}^{\D_t}$ is $\mathcal{O}(|\E_t|nd)$, where $n$ represents the number of negative items.

Regarding our fairness loss, \textit{for each user interaction}, computing the score vector $\mathbf{s}_u$ has a time complexity of $\mathcal{O}(\mu d)$, and computing DH incurs $\mathcal{O}(\mu^2)$ time complexity due to computing $\widehat{\P}_{u}[k,:]$ (i.e., Eq.~\eqref{eq:relaxed_perm}), which involves computing $\textbf{A}_{u}\in\mathbb{R}^{(\mu+1)\times (\mu+1)}$.
As a result, the time complexity of minimizing $\mathcal{L}_{\text{fair}}^{\D_t}$ becomes $\mathcal{O}(|\E_t|(\mu^2+\mu d))$, which can be approximated as $\mathcal{O}(|\E_t|\mu d)$ since $\mu\ll d$.
Therefore, the time complexity of minimizing the recommendation loss, $\mathcal{L}_{\text{rec}}^{\D_t}$, and the fairness loss, $\mathcal{L}_{\text{fair}}^{\D_t}$, are comparable.

\section{Experiments} 
\label{sec:experiment}
We design experiments to answer the key research questions (RQs):

\vspace{0mm}
\begin{itemize}
    \item [\textbf{RQ1.}] How does learning new data affect model overall behavior?
    \item [\textbf{RQ2.}] How effective is the fairness loss and fine-tuning in \ours?
    \item [\textbf{RQ3.}] Does \ours\ outperform its fairness-aware competitors?
    \item [\textbf{RQ4.}] How time-efficient is \ours?
    \item [\textbf{RQ5.}] How effective/efficient is the Differentiable Hit in \ours?
    \item [\textbf{RQ6.}] How sensitive is \ours\ to its hyperparameters?
\end{itemize}

\vspace{-3mm}
\subsection{Experimental Settings}
\subsubsection{Dataset.}
For experiments, we use two real-world recommendation datasets from different domains.
\begin{itemize}[leftmargin=*]
    \item  Movielens\footnote{\href{https://grouplens.org/datasets/movielens/1m/}{https://grouplens.org/datasets/movielens/1m/}}: This dataset contains $836,478$ ratings on $3,628$ movies by $6,039$ users at different timestamps. The sensitive attribute $a$ is determined by the gender of each user, with male users as $a=0$ (adv.) and female users as $a=1$ (disadv.).
    This classification is based on the observation that the dataset is male-dominated, consisting of $4,330$ male users with $627,933$ training instances and $1,709$ female users with $208,545$ training instances~\cite{li2022fairlp}.
    \item ModCloth\footnote{\href{https://github.com/MengtingWan/marketBias}{https://github.com/MengtingWan/marketBias}}~\cite{wan2020addressing}: This e-commerce dataset contains $83,147$ ratings on $1,014$ items (i.e., women's clothing) by $37,142$ users at different timestamps. The sensitive attribute $a$ is determined by the body shape of each user, with "Small" users as $a=0$ (adv.) and "Large" users as $a=1$ (disadv.). The dataset is dominated by "Small" users, comprising $28,374$ "Small" users with $66,663$ training instances and $8,768$ "Large" users with $16,484$ training instances.
\end{itemize}
Following previous works in recommender systems~\cite{zheng2021disentangling,kansal2022flib}, we binarize the 5-star ratings for both datasets. We set $\mathbf{Y}[u, i] = 1$ if user $u$ gives item $i$ a rating greater than 2, and $\mathbf{Y}[u, i] = 0$ otherwise.

To simulate dynamic settings defined in \S\ref{sec:preliminary},  
we first sort the interactions in the dataset 
in 
chronological order and use 60\%/70\% of them as pre-training data, and 28\%/21\% as dynamically observed data for Movielenz/ModCloth. 
We then split the dynamically observed data into 7 periods, each containing an equal number of interactions. This process yields $\{\D_0, \D_1 , \dots, \D_{T}\}$, where $T=7$.


\vspace{1mm}
\subsubsection{Compared methods.}
We use two base system including Matrix Factorization (MF) and Neural Collaborative Filtering (NCF), both with the Bayesian Personalized Ranking (BPR) loss~\cite{rendle2012bpr}.
In this setup, we aim to validate the effectiveness of our \textit{fine-tuning strategy} and the \textit{fairness loss} used in \ours\ in ensuring high recommendation quality and user-side fairness over time. To establish a benchmark, we compare \ours\ with the following six combinations:
\begin{itemize}[leftmargin=*]
    \item \pretrain/\pretrainfair: The static model pre-trained on $\D_0$ w/o and w/ the fairness loss, respectively.
    \item \full/\fullfair: Fully retraining the model using the accumulated historical data $\D_{:t}$ at each time period $t$, w/o and w/ the fairness loss, respectively. 
    \item \fine/\oursabs: Fine-tuning the model based on the current 
    $\D_t$ at each 
    $t$, w/o the fairness loss and w/ the (na\"ive) fairness loss $\mathcal{L}_{\text{fair-abs}}$ that uses $|{\diffpd}|$ (Eq.~\eqref{eq:fair_loss_abs}), respectively.

\end{itemize}

In addition, we also compare \ours\ with the other fairness-aware competitors. To ensure a fair comparison, we implemented these methods with a fine-tuning strategy, even though they were originally not based on fine-tuning. The competitors we consider are:
\begin{itemize}[leftmargin=*]
    \item \adver~\cite{li2021towards}: This method is based on adversarial learning technique. It is originally designed to filter out sensitive attributes from user embeddings, but its primary focus is not on reducing the performance disparity among different user groups.
    \item Re-rank~\cite{li2021user}: This method is a fairness-constrained re-ranking approach. At each time period, a fine-tuned base model generates recommendation lists, which are used as the basis for generating new fair recommendation lists using this method. 
\end{itemize}

\input{figs/trunc-tradeoff-taskR}
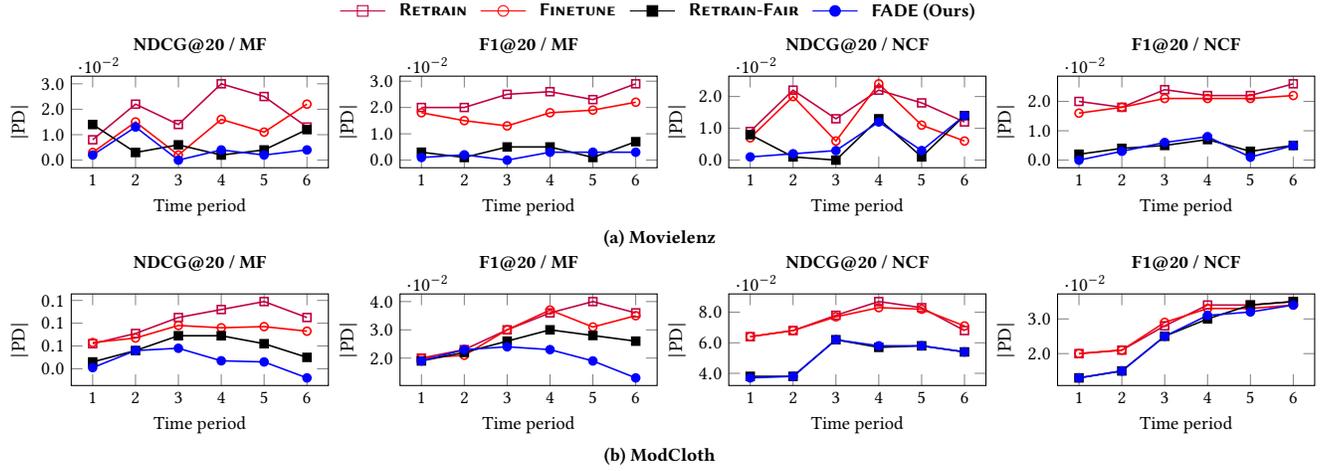
\begin{figure*}[t]
\footnotesize
\centering
\begin{tikzpicture}
\begin{customlegend}[legend columns=5,legend style={align=left,draw=none,column sep=1ex},
        legend entries={\textbf{\full}, \textbf{\fine}, \textbf{\fullfair}, \textbf{\ours\ (Ours)}}]

        \addlegendimage{draw=purple,mark=square}
        \addlegendimage{draw=red,mark=o}
        \addlegendimage{draw=black,mark=square*}   
        \addlegendimage{draw=blue,color=blue,mark=*} 
        \end{customlegend}
\end{tikzpicture}\\
\begin{tikzpicture}
\begin{axis}[
height=2.8cm, width=5cm,
title={\textbf{NDCG@20 / MF}},
xtick={1,2,3,4,5,6,7,8,9,10},
xticklabels={1,2,3,4,5,6,7,8,9,10},
ylabel=$|{\pd}|$,xlabel=Time period,
y tick label style={/pgf/number format/.cd,fixed,fixed zerofill,precision=1,/tikz/.cd},]

\addplot[color=purple,mark=square,mark size=1.5pt,line width=0.6pt]
coordinates {(1,0.008) (2,0.022) (3,0.014) (4,0.030) (5,0.025) (6,0.013) };
\addplot[color=red,mark=o,mark size=1.5pt,line width=0.6pt]
coordinates {(1,0.003) (2,0.015) (3,0.002) (4,0.016) (5,0.011) (6,0.022) };
\addplot[color=black,mark=square*,mark size=1.5pt,line width=0.6pt]
coordinates {(1,0.014) (2,0.003) (3,0.006) (4,0.002) (5,0.004) (6,0.012) };
\addplot[color=blue,mark=*,mark size=1.5pt,line width=0.6pt]
coordinates {(1,0.002) (2,0.013) (3,0.000) (4,0.004) (5,0.002) (6,0.004) };

\end{axis}
\end{tikzpicture}
\begin{tikzpicture}
\begin{axis}[
height=2.8cm, width=5cm,
title={\textbf{F1@20 / MF}},
xtick={1,2,3,4,5,6,7,8,9,10},
xticklabels={1,2,3,4,5,6,7,8,9,10},
ylabel=$|{\pd}|$,xlabel=Time period,
y tick label style={/pgf/number format/.cd,fixed,fixed zerofill,precision=1,/tikz/.cd},]
\addplot[color=purple,mark=square,mark size=1.5pt,line width=0.6pt]
coordinates {(1,0.020) (2,0.020) (3,0.025) (4,0.026) (5,0.023) (6,0.029) };
\addplot[color=red,mark=o,mark size=1.5pt,line width=0.6pt]
coordinates {(1,0.018) (2,0.015) (3,0.013) (4,0.018) (5,0.019) (6,0.022) };
\addplot[color=black,mark=square*,mark size=1.5pt,line width=0.6pt]
coordinates {(1,0.003) (2,0.001) (3,0.005) (4,0.005) (5,0.001) (6,0.007) };
\addplot[color=blue,mark=*,mark size=1.5pt,line width=0.6pt]
coordinates {(1,0.001) (2,0.002) (3,0.000) (4,0.003) (5,0.003) (6,0.003) };

\end{axis}
\end{tikzpicture}
\begin{tikzpicture}
\begin{axis}[
height=2.8cm, width=5cm,
title={\textbf{NDCG@20 / NCF}},
xtick={1,2,3,4,5,6,7,8,9,10},
xticklabels={1,2,3,4,5,6,7,8,9,10},
ylabel=$|{\pd}|$,xlabel=Time period,
y tick label style={/pgf/number format/.cd,fixed,fixed zerofill,precision=1,/tikz/.cd},]

\addplot[color=purple,mark=square,mark size=1.5pt,line width=0.6pt]
coordinates {(1,0.009) (2,0.022) (3,0.013) (4,0.022) (5,0.018) (6,0.012) };
\addplot[color=red,mark=o,mark size=1.5pt,line width=0.6pt]
coordinates {(1,0.007) (2,0.020) (3,0.006) (4,0.024) (5,0.011) (6,0.006) };
\addplot[color=black,mark=square*,mark size=1.5pt,line width=0.6pt]
coordinates {(1,0.008) (2,0.001) (3,0.000) (4,0.013) (5,0.001) (6,0.014) };
\addplot[color=blue,mark=*,mark size=1.5pt,line width=0.6pt]
coordinates {(1,0.001) (2,0.002) (3,0.003) (4,0.012) (5,0.003) (6,0.014) };

\end{axis}
\end{tikzpicture}
\begin{tikzpicture}
\begin{axis}[
height=2.8cm, width=5cm,
title={\textbf{F1@20 / NCF}},
xtick={1,2,3,4,5,6,7,8,9,10},
xticklabels={1,2,3,4,5,6,7,8,9,10},
ylabel=$|{\pd}|$,xlabel=Time period,
y tick label style={/pgf/number format/.cd,fixed,fixed zerofill,precision=1,/tikz/.cd},]

\addplot[color=purple,mark=square,mark size=1.5pt,line width=0.6pt]
coordinates {(1,0.020) (2,0.018) (3,0.024) (4,0.022) (5,0.022) (6,0.026) };
\addplot[color=red,mark=o,mark size=1.5pt,line width=0.6pt]
coordinates {(1,0.016) (2,0.018) (3,0.021) (4,0.021) (5,0.021) (6,0.022) };
\addplot[color=black,mark=square*,mark size=1.5pt,line width=0.6pt]
coordinates {(1,0.002) (2,0.004) (3,0.005) (4,0.007) (5,0.003) (6,0.005) };
\addplot[color=blue,mark=*,mark size=1.5pt,line width=0.6pt]
coordinates {(1,0.000) (2,0.003) (3,0.006) (4,0.008) (5,0.001) (6,0.005) };

\end{axis}
\end{tikzpicture}
\\
\textbf{(a) Movielenz}\\
\begin{tikzpicture}
\begin{axis}[
height=2.8cm, width=5cm,
title={\textbf{NDCG@20 / MF}},
xtick={1,2,3,4,5,6,7,8,9,10},
xticklabels={1,2,3,4,5,6,7,8,9,10},
ylabel=$|{\pd}|$,xlabel=Time period,
y tick label style={/pgf/number format/.cd,fixed,fixed zerofill,precision=1,/tikz/.cd},]

\addplot[color=purple,mark=square,mark size=1.5pt,line width=0.6pt]
coordinates {(1,0.062) (2,0.071) (3,0.085) (4,0.092) (5,0.099) (6,0.085) };
\addplot[color=red,mark=o,mark size=1.5pt,line width=0.6pt]
coordinates {(1,0.063) (2,0.067) (3,0.078) (4,0.076) (5,0.077) (6,0.073) };
\addplot[color=black,mark=square*,mark size=1.5pt,line width=0.6pt]
coordinates {(1,0.046) (2,0.056) (3,0.069) (4,0.069) (5,0.062) (6,0.050) };
\addplot[color=blue,mark=*,mark size=1.5pt,line width=0.6pt]
coordinates {(1,0.041) (2,0.056) (3,0.058) (4,0.047) (5,0.046) (6,0.032) };

\end{axis}
\end{tikzpicture}
\begin{tikzpicture}
\begin{axis}[
height=2.8cm, width=5cm,
title={\textbf{F1@20 / MF}},
xtick={1,2,3,4,5,6,7,8,9,10},
xticklabels={1,2,3,4,5,6,7,8,9,10},
ylabel=$|{\pd}|$,xlabel=Time period,
y tick label style={/pgf/number format/.cd,fixed,fixed zerofill,precision=1,/tikz/.cd},]
\addplot[color=purple,mark=square,mark size=1.5pt,line width=0.6pt]
coordinates {(1,0.020) (2,0.023) (3,0.030) (4,0.036) (5,0.040) (6,0.036) };
\addplot[color=red,mark=o,mark size=1.5pt,line width=0.6pt]
coordinates {(1,0.020) (2,0.021) (3,0.030) (4,0.037) (5,0.031) (6,0.035) };
\addplot[color=black,mark=square*,mark size=1.5pt,line width=0.6pt]
coordinates {(1,0.019) (2,0.022) (3,0.026) (4,0.030) (5,0.028) (6,0.026) };
\addplot[color=blue,mark=*,mark size=1.5pt,line width=0.6pt]
coordinates {(1,0.019) (2,0.023) (3,0.024) (4,0.023) (5,0.019) (6,0.013) };

\end{axis}
\end{tikzpicture}
\begin{tikzpicture}
\begin{axis}[
height=2.8cm, width=5cm,
title={\textbf{NDCG@20 / NCF}},
xtick={1,2,3,4,5,6,7,8,9,10},
xticklabels={1,2,3,4,5,6,7,8,9,10},
ylabel=$|{\pd}|$,xlabel=Time period,
y tick label style={/pgf/number format/.cd,fixed,fixed zerofill,precision=1,/tikz/.cd},]

\addplot[color=purple,mark=square,mark size=1.5pt,line width=0.6pt]
coordinates {(1,0.064) (2,0.068) (3,0.078) (4,0.087) (5,0.083) (6,0.068) };
\addplot[color=red,mark=o,mark size=1.5pt,line width=0.6pt]
coordinates {(1,0.064) (2,0.068) (3,0.077) (4,0.083) (5,0.082) (6,0.071) };
\addplot[color=black,mark=square*,mark size=1.5pt,line width=0.6pt]
coordinates {(1,0.038) (2,0.038) (3,0.062) (4,0.057) (5,0.058) (6,0.054) };
\addplot[color=blue,mark=*,mark size=1.5pt,line width=0.6pt]
coordinates {(1,0.037) (2,0.038) (3,0.062) (4,0.058) (5,0.058) (6,0.054) };

\end{axis}
\end{tikzpicture}
\begin{tikzpicture}
\begin{axis}[
height=2.8cm, width=5cm,
title={\textbf{F1@20 / NCF}},
xtick={1,2,3,4,5,6,7,8,9,10},
xticklabels={1,2,3,4,5,6,7,8,9,10},
ylabel=$|{\pd}|$,xlabel=Time period,
y tick label style={/pgf/number format/.cd,fixed,fixed zerofill,precision=1,/tikz/.cd},]

\addplot[color=purple,mark=square,mark size=1.5pt,line width=0.6pt]
coordinates {(1,0.020) (2,0.021) (3,0.028) (4,0.034) (5,0.034) (6,0.035) };
\addplot[color=red,mark=o,mark size=1.5pt,line width=0.6pt]
coordinates {(1,0.020) (2,0.021) (3,0.029) (4,0.033) (5,0.033) (6,0.034) };
\addplot[color=black,mark=square*,mark size=1.5pt,line width=0.6pt]
coordinates {(1,0.013) (2,0.015) (3,0.025) (4,0.030) (5,0.034) (6,0.035) };
\addplot[color=blue,mark=*,mark size=1.5pt,line width=0.6pt]
coordinates {(1,0.013) (2,0.015) (3,0.025) (4,0.031) (5,0.032) (6,0.034) };

\end{axis}
\end{tikzpicture}
\\
\textbf{(b) ModCloth}\\
\caption{The trend of the absolute performance disparity ($|{\pd}|$) in Task-R. Without the fairness loss, the $|{\pd}|$ is relatively high and often increase,
while with the fairness loss, particularly in FADE, the $|{\pd}|$ tends to remain relatively low.
}\label{fig:trunc-trend-taskR}
\vspace{-1mm}
\end{figure*}

\vspace{-1mm}
\subsubsection{Evaluation tasks.}
To evaluate the recommendation performance and PD,  we design two types of recommendation tasks: 
\begin{itemize}[leftmargin=*]
    \item Task-Remain (Task-R): Given the model trained up until time period $t$, the model is tested by recommending items for the remaining time periods with the test set $\D_t^{\text{test}}=\D_{t+1} \cup \dots \cup \D_T$.
    \item Task-Next (Task-N): Given the model trained up until time period $t$, the model is tested by recommending items for the right-next time period with the test set $\D_t^{\text{test}}=\D_{t+1}$.
\end{itemize}
Note that for both tasks, the data at the last time period, $\D_T$, is only used for testing and not for training purposes.
Due to space issue, we put the full results for Task-N in \S\ref{assec:main_result}. 

We use NDCG@20 and F1@20 to evaluate the top-20 recommendation quality.
We adopt a similar approach as previous studies~\cite{li2021user, kim2022meta}, where we randomly sample 100 items that the user has not interacted with as negative samples. These negative samples, along with the ground-truth items, are used for evaluation. 

\vspace{-1mm}
\subsubsection{Implementation details.}
For all compared methods, we set $n$ (the number of negative samples for BPR loss) to 4, the learning rate to 0.001, and L2 regularization to 0.0001. We use the Adam optimization algorithm~\cite{kingma2014adam} to update model parameters.

For \ours\ and \fullfair\ based on both MF and NCF, we set $\tau=3$, $\mu=4$, and the number of dynamic update epochs to 10, which consistently show excellent trade-off between performance and disparity across all metrics and datasets. The 
$\lambda$ is selected within range $[0,4]$ 
for \pretrainfair, \fullfair, \oursabs, and \ours\ in all cases. 
We use a random seed for better reproducibility.
For the implementation details of \rerank~\cite{li2021user}/\adver~\cite{li2021towards}, refer to \S\ref{assec:implentation}

\vspace{-2mm}
\subsection{The Effect of Learning from New Data}\label{ssec:new_data}
For RQ1 and RQ2,
we compare the recommendation performance and performance disparity, both averaged across each dynamic update data, of the five methods (\pretrain, \full, \fine, \pretrainfair, \fullfair) with \ours. 
Fig.~\ref{fig:trunc-tradeoff-taskR} shows the results w.r.t. different metrics, base recommender, and datasets.

First, compared to \pretrain, \full\ and \fine\ yield an average increase of 9.01\% and 4.61\%, respectively, in recommendation performance in \textit{all} cases, indicating that the new data is indeed useful for improving recommendation performance of the models over time. For \pretrainfair, \fullfair, and \ours, the similar trend is observed: an average increase of 4.66\% and 4.09\%, respectively.
However, in some cases on ModCloth, \ours\ performs worse than \pretrainfair\ due to the initial high disparity of \pretrainfair.

Regarding performance disparity, the \PD s of \full\ tend to exceed those of \pretrain, and those of \fine\ tend to fall below but still remain significant. This highlights the need to incorporate fairness considerations when integrating new data.

\vspace{-2mm}
\subsection{Ablation Study of \ours}

\subsubsection{With and without fairness loss.}
To answer RQ2, we continue comparing \ours\ with aforementioned five methods.
First, regarding disparity, Fig.~\ref{fig:trunc-tradeoff-taskR} shows that \fullfair\ and \ours\ yield significantly lower \PD s compared to \full\ and \fine, in \textit{all} cases, with an average reduction of 47.60\% and 48.91\%, respectively. 
The results indicate that our fairness loss indeed helps reduce the performance disparity at each time period.

Furthermore, we examine how disparities change over time with \ours\ and the three methods, \full, \fullfair, \fine, as shown in Fig.~\ref{fig:trunc-trend-taskR}. We can see that without the fairness loss (\full/\fine), the \PD s tend to persist relatively high over time in all cases. 
However, when augmented with the fairness loss (\fullfair/\ours), the \PD s tend to remain stably low.

Besides significant reduction of \PD s, 
\finefair\ has merely marginal sacrifice (2.44\% on average) in recommendation performance compared to \fine,
and similar results are observed for \full\ and \fullfair, with an average decrease of 0.495\%.
This relatively slight decrease in recommendation performance is because \ours\ improves the performance of the disadvantaged group while reducing the performance of the advantaged group, in \textit{all} cases, with an average increase of 2.06\% and decrease of 3.37\%, respectively.

\vspace{-2mm}
\subsubsection{Fine-tuning v.s. Retraining}\label{ssec:finetuning}
Fig.~\ref{fig:trunc-tradeoff-taskR} shows that \fine\ consistently outperform \full\ w.r.t. both PD (an average decrease of 14.79\%) and recommendation performance (an average increase of 1.38\%) in all cases.
\ours\ outperform \fullfair\ w.r.t. PD (an average decrease of 16.47\%) while only slightly compromising recommendation performance (an average decrease of 0.61\%).
These results are consistent with our theoretical findings in \S\ref{ssec:theory}, indicating that retraining is more affected by distribution shifts, while fine-tuning can exponentially shrink this impact.
The lack of a clear advantage for fine-tuned models in recommendation performance is due to their eventual degradation after multiple periods, which is shown, for example, in the results for Movielenz in Fig.8 in \S\ref{assec:main_result}.

\vspace{-2mm}
\subsection{Comparison with Fairness Competitors}
To answer RQ3, we further compare \ours\ with the two fairness-aware competitors, \adver\ and \rerank, 
in Fig.~\ref{fig:trunc-tradeoff-taskR}. 
Note that all of those methods are implemented based on fine-tuning strategy for fair comparison.
First, \ours\ consistently achieves smaller \PD s, averaging 36.53\%, and it offers comparable recommendation performance on average 1.49\% better than \adver.
This is because \adver\ is not designed to reduce the performance gap between user groups; instead, its focus is on removing information related to sensitive attributes from user representations.

\rerank\ and \fine\ yield similar results in many cases, meaning that its re-ranking algorithm struggle to effectively re-rank the given recommendation lists. This is because the given base recommendation lists are already too unfair. For example, for disadvantaged users, the predicted scores may not accurately reflect the user's true interests, resulting in very low predicted scores for the ground-truth items in the list. This issue is exacerbated when the given recommendation lists are short, which is a common in practice. This observation agrees with our intuition that dynamic adaptation is necessary rather than using post-processing.

\vspace{-4mm}
\subsection{Time-efficiency Comparison}
To answer RQ4, we compare running time of \ours\ with the full-retraining based methods and the other fairness-aware techniques. The results are in Table~\ref{table:running_time} and each entry is the average running time of a model across the dynamic update data at each time period.

We have several observations based on the running time, averaged over base models and datasets. Firstly, \fine/\ours\ achieve approximately 323/270 times faster running time compared to \full/\fullfair, indicating that the fine-tuning strategy employed in \ours\ enables the models to achieve high time efficiency, making them ideal for dynamic settings.
Secondly, \fullfair/\ours\ exhibit approximately 1.06/1.27 times slower running time in comparison to \full/\fine. This suggests that the additional computational cost introduced by our fairness loss is not significant. 
Lastly, \ours\ demonstrates a time efficiency around 10.23 times and 94.11 times faster than \adver\ and \rerank, respectively, highlighting the lightweight design of our fairness loss compared to the existing fairness-aware losses.

\vspace{-3mm}
\subsection{Comparison with Soft Ranking Metrics}
Due to the space limit, the results for RQ5 are deferred to \S\ref{assec:soft_ranking}.
In essense, they show that \ours\ outperforms or matches the variant of \ours\ adapting NeuralNDCG in both recommendation performance and disparity, while being approximately four times faster. 

\begin{table}[t]
\Large
\centering
\caption{Efficiency comparison on the running time (seconds). 
}
\label{table:running_time}
\resizebox{.45\textwidth}{!}{
\begin{tabular}{c|c|cc|cccc}
\toprule
\multirow{2}{*}{\textbf{Data}} & \multirow{2}{*}{\textbf{Models}} & \multicolumn{2}{c|}{Full-retrain-based} & \multicolumn{4}{c}{Fine-tune-based}  \\
&& \textbf{\full} & \textbf{\fullfair} & \textbf{\adver} & \textbf{\rerank} & \textbf{\fine} & \textbf{\ours} \\ \midrule
\multirow{2}{*}{\textbf{Movie.}} & \textbf{MF} & 1373.17 & 1401.18 & 55.16 & 132.46 & 2.57 & 4.08 \\
 & \textbf{NCF} & 1381.59 & 1488.5 & 61.66 &420.54 & 5.07  &5.93 \\
\multirow{2}{*}{\textbf{Mod.}} & \textbf{MF} & 154.22 &163.12  & 4.01  &250.75 & 0.79&0.93 \\
 & \textbf{NCF} &188.58  & 242.29 & 4.01 &344.51  & 1.15 &1.26  \\ \midrule
\multicolumn{2}{c|}{\textbf{Average}} & 774.39  & 823.77 & 31.21 & 287.06  & 2.40 & 3.05  \\
\bottomrule
\end{tabular}
}
\end{table}

\begin{figure}[t]
\footnotesize
\centering
\begin{tikzpicture}
\begin{customlegend}[legend columns=5,legend style={align=left,draw=none,column sep=1ex},
        legend entries={\textbf{Advantaged group}\text{  }, \textbf{Disadvantaged group}}]
        \addlegendimage{draw=black,mark=square*}   
        \addlegendimage{draw=blue,color=blue,mark=*} 
        \end{customlegend}
\end{tikzpicture}
\\
\begin{tikzpicture}
\begin{axis}[
height=2.6cm, width=4.7cm,
xtick={1, 2, 3, 4, 5, 6, 7,8,9,10,11},
xticklabels={0,0.1,0.3,0.5,0.8,1.0,1.5,2.0},
xticklabel style={align=center, font=\footnotesize,}, 
ylabel=NDCG@20 ,xlabel=$\lambda$,
y tick label style={/pgf/number format/.cd,fixed,fixed zerofill,precision=3,/tikz/.cd},]
\addplot[color=black,mark=square*,mark size=1.5pt,line width=0.6pt]
coordinates {(1,0.845) (2,0.845) (3,0.846) (4,0.847) (5,0.843) (6,0.841) (7,0.836) (8,0.825) };
\addplot[color=blue,mark=*,mark size=1.5pt,line width=0.6pt]
coordinates {(1,0.834) (2,0.835) (3,0.839) (4,0.844) (5,0.846) (6,0.848) (7,0.846) (8,0.848)  };
\end{axis}
\end{tikzpicture}
\hspace{-1mm}
\begin{tikzpicture}
\begin{axis}[
height=2.6cm, width=4.7cm,
xtick={1, 2, 3, 4, 5, 6, 7,8,9,10,11},
xticklabels={0,0.1,0.3,0.5,0.8,1.0,1.5,2.0},
xticklabel style={align=center, font=\footnotesize, }, 
ylabel=F1@20 ,xlabel=$\lambda$,
y tick label style={/pgf/number format/.cd,fixed,fixed zerofill,precision=3,/tikz/.cd},]
\addplot[color=black,mark=square*,mark size=1.5pt,line width=0.6pt]
coordinates {(1,0.339) (2,0.339) (3,0.339) (4,0.339) (5,0.336) (6,0.334) (7,0.330) (8,0.324)  };
\addplot[color=blue,mark=*,mark size=1.5pt,line width=0.6pt]
coordinates {(1,0.322) (2,0.322) (3,0.325) (4,0.326) (5,0.329) (6,0.330) (7,0.330) (8,0.330)   };
\end{axis}
\end{tikzpicture}
\hspace{-1mm}
\caption{The effect of the scaling parameter $\lambda$ on the performance of the advantaged and disadvantaged groups. 
}\label{fig:scaling}
\vspace{-4mm}
\end{figure}
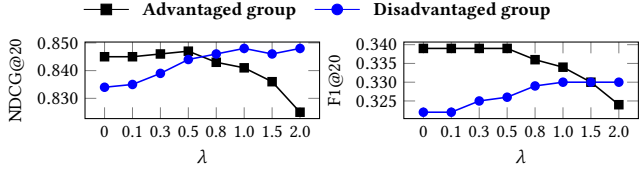
\begin{figure}[t]
\footnotesize
\centering
\begin{tikzpicture}
\begin{customlegend}[legend columns=5,legend style={align=left,draw=none,column sep=1ex},
        legend entries={\textbf{Advantaged group}\text{  }, \textbf{Disadvantaged group}}]
        \addlegendimage{draw=black,mark=square*}   
        \addlegendimage{draw=blue,color=blue,mark=*} 
        \end{customlegend}
\end{tikzpicture}
\\
\begin{tikzpicture}
\begin{axis}[
height=2.6cm, width=4.7cm,
xtick={1,2,3,4,5,6,7,8,9,10,11},
xticklabels={1,5,10,15,20,25,30,35,40,45,50},
ylabel=NDCG@20,xlabel= \textbf{(a)} The number of epochs,
y tick label style={/pgf/number format/.cd,fixed,fixed zerofill,precision=2,/tikz/.cd},]

\addplot[color=black,mark=square*,mark size=1.5pt,line width=0.6pt]
coordinates {(1,0.839) (2,0.851) (3,0.841) (4,0.795) (5,0.762) (6,0.732) (7,0.723) (8,0.717) (9,0.713) (10,0.705) (11,0.708) };
\addplot[color=blue,mark=*,mark size=1.5pt,line width=0.6pt]
coordinates {(1,0.830) (2,0.851) (3,0.848) (4,0.805) (5,0.764) (6,0.740) (7,0.728) (8,0.710) (9,0.708) (10,0.713) (11,0.705) };

\end{axis}
\end{tikzpicture}
\begin{tikzpicture}
\begin{axis}[
height=2.6cm, width=4.7cm,
xtick={1,2,3,4,5,6,7,8,9,10,11},
xticklabels={0.1,0.5,1.0,2.0,3.0,4.0,5.0},
ylabel=NDCG@20,xlabel= \textbf{(b)} Tau $\tau$,
y tick label style={/pgf/number format/.cd,fixed,fixed zerofill,precision=2,/tikz/.cd},]

\addplot[color=black,mark=square*,mark size=1.5pt,line width=0.6pt]
coordinates {(1,0.361) (2,0.666) (3,0.809) (4,0.839) (5,0.841) (6,0.845) (7,0.846) };
\addplot[color=blue,mark=*,mark size=1.5pt,line width=0.6pt]
coordinates {(1,0.855) (2,0.849) (3,0.851) (4,0.849) (5,0.848) (6,0.844) (7,0.842) };

\end{axis}
\end{tikzpicture}
\caption{Effect of hyperparamters. 
}\label{fig:params}
\vspace{-1mm}
\end{figure}
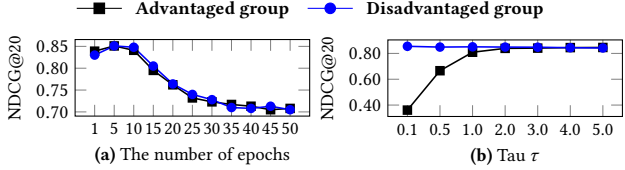

\vspace{-2mm}
\subsection{Hyperparameter Analysis}
For RQ6, we investigate the sensitivity of \ours\ to four hyperparameters: (1) the scaling parameter $\lambda$, (2) the number of epochs of dynamic updates, (3) the temperature parameter $\tau$, and (4) the number of negative items $\mu$. Due to the space limit, we only show the results of \ours\ based on MF on Movielenz for $\lambda$, the number of epochs, and $\tau$ in Figs.~\ref{fig:scaling} \& \ref{fig:params}. Please refer to \S\ref{assec:hyperparam} for the full results. They illustrate the performance of the advantaged and disadvantaged user groups for different values of these hyperparameters. 

\vspace{-1mm}
\subsubsection{Effect of scaling parameter $\lambda$ for the fairness Loss}
Fig.~\ref{fig:scaling} shows that the performance of the advantaged group tend to decrease while that of the disadvantaged group tend to increase as $\lambda$ increases. 
In other words, the performance disparity between the two user groups steadily reduces until $\lambda$ reaches an optimal value, which varies depending on the specific metric used. 
The results indicate that $\lambda$ effectively controls the trade-off between recommendation performance and performance disparity.

\vspace{-1mm}
\subsubsection{Effect of the number of epochs of dynamic updates.}
Fig.~\ref{fig:params}-(a) shows that the performance of both user groups increases as the number of epochs of dynamic fine-tuning increases until reaching a peak around epoch 5 or 10. Subsequently, the performance gradually declines with further increases in the number of epochs. 
We suspect that setting the number of epochs too low may result in the model not learning enough from the current data. Conversely, when the number of epochs is set too high, the model potentially loses the knowledge acquired from historical data.
We argue that this phenomenon is well-suited for the dynamic environment, as setting a low value for the number of epochs results in high 
efficiency.

\vspace{-1mm}
\subsubsection{Effect of temperature parameter $\tau$ in the relaxed permutation matrix.} 
Higher values of $\tau$ 
result in smoother rows in the relaxed permutation matrix, $\widehat{\P}_{u}[i,:]$.
Fig~\ref{fig:params}-(b) shows that the performance of both user groups increases until $\tau=2$, and then stabilizes.
These findings indicate that \ours\ is not highly sensitive to $\tau$, consistently delivering excellent performance for both user groups as long as $\tau$ is not too small.
When $\tau$ is set too low, the Gumbel-softmax distribution becomes sharp, resulting in a nearly deterministic decision-making process for the model, i.e., $\widehat{\P}_{u}[i,:]$ will be close to the one-hot vector of the $i$-th ranked item. As a result, the entry corresponding to the positive item in that vector is likely to have an extremely small value, from the initial phase of training, potentially hindering the the fairness regularization.
\vspace{-2mm}
\section{Related Work}
\label{sec:related_work}

\vspace{0mm}
\noindent\textbf{Dynamic recommender systems.}
Instead of fully retraining with the entire dataset when new data is collected, which can be time-inefficient, we can fine-tune the model parameters using only the new data, which is referred to as dynamic/online recommender systems in the literature. To effectively learn from relatively sparse new data, several methods have been proposed based on reweighting either (1) the impact of each user-item interaction~\cite{he2016fast, shu2019meta} or (2) that of each model parameter~\cite{du2019sequential, li2017meta, zhang2020retrain}; \cite{kim2022meta} utilizes both approaches. One unique advantage of the fairness loss in \ours\ is that it can be easily applied to any existing dynamic recommender systems optimized using gradient-based algorithms.

\noindent\textbf{Fair recommender systems in dynamic scenarios.}
Various fairness demands exist in recommender systems, including user-side~\cite{tang2023fairness}, item-side~\cite{chen2023fairly}, and two-sided fairness~\cite{wu2021tfrom}, as well as fairness on unipartite networks~\cite{akpinar2022long}.
User-side fairness ensures fair recommendation quality for different users, while item-side fairness concentrates on equal exposure opportunities for items 
Two-sided fairness seeks to balance these two aspects. While the literature~\cite{zhu2021popularity, ge2021towards, morik2020controlling} has addressed item-side fairness in dynamic recommendations, such as the work by \cite{zhu2021popularity} that scales predicted ratings by item popularity with higher strength over time, user-side fairness in dynamic settings remains unexplored, to our knowledge.

As described in Section \ref{sec:introduction}, existing user-side fairness-aware re-ranking methods~\cite{li2021user,fu2020fairness} face the difficulties in dynamic settings.
These methods tend to be time-inefficient, involving optimization problem akin to 0-1 integer programming problem. Furthermore, their non-differentiable fairness constraint, separating fairness optimization from that of recommendation quality, precludes model parameters from being regularized by fairness constraints. This hinders adaptation to distribution shifts in dynamic settings.

Another line of research into user-side fairness~\cite{beigi2020privacy, wu2021fairness,wu2021learning} employs adversarial functions to generate fair user representations independent of sensitive user attributes. However, these formulations do not explicitly address the reduction of performance disparity.

\vspace{-2mm}
\section{Conclusion}
\label{sec:conclusion}

In this paper, we study the problem of user-side fairness in the dynamic recommendation scenario. 
We point out three key challenges in this problem: (1) distribution shifts, (2) frequent model updates, and (3) non-differentiability of ranking metrics. 
To address these challenges, we begin with theoretical analyses on fine-tuning v.s. retraining, showing that the best practice is incremental fine-tuning with restart. Guided by these insights, we propose \ul{FA}ir \ul{D}ynamic r\ul{E}commender (\ours), an end-to-end fine-tuning framework that dynamically ensures user-side fairness over time. It incorporates our fairness loss equipped with our lightweight Differentiable Hit, which enhances efficiency over the recent NeuralNDCG method.
Through extensive experiments, we verify that \ours\ effectively and efficiently alleviates the performance disparity without significantly sacrificing recommendation performance.

\vspace{-2mm}
\begin{acks}
This work is partially supported by NSF (1947135, 2134079, 1939725), DHS (17STQAC00001-07-00), and NIFA (2020-67021-32799).
\end{acks}

\normalem
\balance
\bibliographystyle{ACM-Reference-Format}

\bibliography{references}


\newpage
\appendix

\section{Theoretical Analyses}\label{app:pf}

\subsection{Assumptions}\label{app:assum}

In this subsection, we introduce our theoretical assumptions, which are quite mild and realistic.
To ensure that the dataset has a good coverage of the underlying distribution, a common assumption in literature is independence:
\begin{ASS}[Data independence]
For every $t$, the data tuples in $\CAL D_t$ are mutually independent.
\end{ASS}

Regarding the loss function, a well-behaved loss function should be able to be minimized. Common loss functions satisfy this property. This leads us to the following Assumption~\ref{ASS:inf}:
\begin{ASS}[Existence of infima]\label{ASS:inf}
For every $t$, the infimum $\CAL L_t^*:=\inf_{\CAL W}\CAL L_t(\CAL W)$ exists.
\end{ASS}
Note that we do not assume \emph{realizability}, i.e., we do not assume that there exists $\CAL W$ that can achieve this infimum. Our Assumption~\ref{ASS:inf} is realistic in machine learning. For example, neural networks can arbitrarily approximate any continuous function over any compact domain \cite{pinkus1999approximation}, but they may not be exactly equal to that function.

Besides that, since data tuples are mutually independent, each data tuple in the dataset should not have dominant influence on the overall loss function, which means that the loss function should use the whole dataset. This leads us to the following Assumption~\ref{ASS:subg}:
\begin{ASS}[No dominant influence]\label{ASS:subg}
For every $t$, for each data tuple $z\in\CAL D_t$, we assume that $\sup_{\CAL W}|\CAL L^{\CAL D_t}(\CAL W)-\CAL L_t(\CAL W)|$ conditioned on $\CAL D_t\setminus\{z\}$ is $\big(\frac{\varsigma\sqrt{\log m_t}}{m_t}\big)^2$-subgaussian. Without loss of generality, we can assume $\varsigma=1$ by rescaling $\CAL L$.
\end{ASS}
The subgaussian property is a common assumption in machine learning \cite{maurer2021concentration}, and $\varsigma$ in our Assumption~\ref{ASS:subg} can be viewed as a generalization of the Vapnik--Chervonenkis dimension \cite{vapnik1971uniform} and Pollard's pseudodimension \cite{pollard1990section}. Since there exist various definitions of the subgaussian property (yet equivalent up to constant factors), we clarify our definition as follows:
\begin{DEF}[Subgaussian property]
For $\varsigma>0$, a real-valued random variable $X$ is said to be \emph{$\varsigma^2$-subgaussian} if
\EQ{\Exp[\RM e^{v(X-\Exp[X])}]\le\RM e^{\varsigma^2v^2/2},\quad\forall v\in\BB R.}
\end{DEF}
The equality holds for univariate Gaussians with variance $\varsigma^2$.

Finally, we state our assumption on fine-tuning and retraining. For each $t\ge1$, let $\CAL W_t^\textnormal{ft}$ denote the model parameters fine-tuned till $\CAL D_t$. To characterize the fact that the fine-tuned $\CAL W_t^\textnormal{ft}$ does not completely forget the previously learned knowledge in $\CAL W_{t-1}^\textnormal{ft}$, we assume that all time periods share the same parameter space and use the following classic Assumption~\ref{ASS:prox} (adapted from \cite{rajeswaran2019meta}):
\begin{ASS}[Proximal fine-tuning]\label{ASS:prox}
There is $0<\gamma<1$ such that for each $t\ge1$, the number of fine-tuning epochs is decided such that the fine-tuned $\CAL W_t^\textnormal{ft}$ is minimizing
\EQ{\ell_t(\CAL W):=(1-\gamma)\CAL L^{\CAL D_t}(\CAL W)+\gamma\ell_{t-1}(\CAL W),}
where $\ell_0(\CAL W):=\CAL L^{\CAL D_0}(\CAL W)$ denotes the pretraining loss function.
\end{ASS}

For retraining, we assume that the influence of each time period $t$ to the retraining loss is a proportional to the size $m_t$ of $\CAL D_t$:
\begin{ASS}[Retraining loss]\label{ASS:rt}
\EQ{\CAL L^\textnormal{rt}_{t_\textnormal{te}-1}(\CAL W):=\frac{\sum_{t=0}^{t_\textnormal{te}-1}m_t\CAL L^{\CAL D_t}(\CAL W)}{\sum_{t=0}^{t_\textnormal{te}-1}m_t}.}
\end{ASS}
Although this is a simplification of the retraining loss in practice, it still captures the essential properties of retraining.

\subsection{Proofs of Theorems~\ref{thm:ft} \& \ref{thm:rt}}\label{app:pf-rt-ft}

Our proofs of Theorems~\ref{thm:ft} \& \ref{thm:rt} rely on the following Lemma~\ref{lem:alpha}.

\begin{lemma}\label{lem:alpha}
For $\BM\alpha\in\BB R_{\ge0}^{t_\textnormal{te}}$ with $\sum_{t=0}^{t_\textnormal{te}-1}\alpha_t=1$ and for $\epsilon>0$, let $\CAL W^{\BM\alpha,\epsilon}_{t_\textnormal{te}-1}$ denote some model parameters such that
\EQ{\sum_{t=0}^{t_\textnormal{te}-1}\alpha_t\CAL L_t(\CAL W^{\BM\alpha,\epsilon}_{t_\textnormal{te}-1})\le\epsilon+\inf_{\CAL W}\sum_{t=0}^{t_\textnormal{te}-1}\alpha_t\CAL L^{\CAL D_t}\!(\CAL W).}
Then with probability at least $1-\delta$,
\AM{\CAL L_{t_\textnormal{te}}(\CAL W^{\BM\alpha,\epsilon}_{t_\textnormal{te}-1})\le\CAL L_{t_\textnormal{te}}^*+\epsilon+2\sum_{t=0}^{t_\textnormal{te}-1}\alpha_td_{t,t_\textnormal{te}}+4\sqrt{\sum_{t=0}^{t_\textnormal{te}-1}\frac{\alpha_t^2}{\frac{m_t}{\log m_t}}\log\frac2\delta}.}
\end{lemma}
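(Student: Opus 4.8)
The plan is to decompose the generalization gap $\CAL L_{t_\textnormal{te}}(\CAL W^{\BM\alpha,\epsilon}_{t_\textnormal{te}-1})-\CAL L_{t_\textnormal{te}}^*$ into three pieces: an optimization slack $\epsilon$, a distribution-shift term controlled by $\sum_t\alpha_t d_{t,t_\textnormal{te}}$, and a concentration term coming from replacing the empirical weighted loss by its expectation. First I would abbreviate $\CAL W:=\CAL W^{\BM\alpha,\epsilon}_{t_\textnormal{te}-1}$ and introduce the weighted true loss $F(\CAL W'):=\sum_{t=0}^{t_\textnormal{te}-1}\alpha_t\CAL L_t(\CAL W')$ and the weighted empirical loss $\widehat F(\CAL W'):=\sum_{t=0}^{t_\textnormal{te}-1}\alpha_t\CAL L^{\CAL D_t}(\CAL W')$. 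The hypothesis gives $F(\CAL W)\le\epsilon+\inf_{\CAL W'}\widehat F(\CAL W')$, so I would bound $\inf\widehat F$ from above: pick an $\CAL W^\dagger$ nearly attaining $F$'s infimum, and use a uniform (over $\CAL W'$) concentration bound $\sup_{\CAL W'}|\widehat F(\CAL W')-F(\CAL W')|$ to pass from $\widehat F(\CAL W^\dagger)$ back to $F(\CAL W^\dagger)\approx\inf F$. Together this yields $F(\CAL W)\le\epsilon+\inf_{\CAL W'}F(\CAL W')+(\text{concentration})$.

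Next I would convert the bound on the weighted true loss $F(\CAL W)=\sum_t\alpha_t\CAL L_t(\CAL W)$ into a bound on the single target loss $\CAL L_{t_\textnormal{te}}(\CAL W)$. The mechanism is the two distribution-shift quantities: for each $t<t_\textnormal{te}$, the $\CAL H\Delta\CAL H$-discrepancy $d^{\CAL H\Delta\CAL H}_{t,t_\textnormal{te}}$ lets me compare $|\CAL L_{t_\textnormal{te}}(\CAL W)-\CAL L_{t_\textnormal{te}}(\CAL W')|$ with $|\CAL L_t(\CAL W)-\CAL L_t(\CAL W')|$ up to an additive $d^{\CAL H\Delta\CAL H}_{t,t_\textnormal{te}}$, and the combined error $d^\text{comb}_{t,t_\textnormal{te}}$ handles the gap between the minimizers; chaining through a reference parameter (e.g.\ the $t_\textnormal{te}$-optimal one, or the common near-minimizer) gives $\CAL L_{t_\textnormal{te}}(\CAL W)\le\CAL L_t(\CAL W)+2 d_{t,t_\textnormal{te}}+(\CAL L_t^*-\CAL L_{t_\textnormal{te}}^*$-type corrections that telescope). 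Taking the convex combination $\sum_t\alpha_t(\cdots)$ with $\sum_t\alpha_t=1$ collapses the reference-loss corrections (since $\sum_t\alpha_t\CAL L_{t_\textnormal{te}}(\CAL W)=\CAL L_{t_\textnormal{te}}(\CAL W)$) and produces $\CAL L_{t_\textnormal{te}}(\CAL W)\le F(\CAL W)+2\sum_t\alpha_t d_{t,t_\textnormal{te}}$ together with $\inf_{\CAL W'}F(\CAL W')\le\CAL L_{t_\textnormal{te}}^*$ (or something comparable); I would need to be careful that the $\inf F$ term is itself bounded by $\CAL L_{t_\textnormal{te}}^*$ up to the same $\sum_t\alpha_t d_{t,t_\textnormal{te}}$, which again follows from the definitions of $d^{\CAL H\Delta\CAL H}$ and $d^\text{comb}$.

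For the concentration term I would invoke Assumption~\ref{ASS:subg}: conditioning on all but one data tuple, $\sup_{\CAL W'}|\CAL L^{\CAL D_t}(\CAL W')-\CAL L_t(\CAL W')|$ is $(\sqrt{\log m_t}/m_t)^2$-subgaussian, and by Assumption~1 (independence) the tuples across all datasets are independent. A bounded-differences / McDiarmid-style argument for subgaussian increments (or a direct martingale argument) then shows that $\sup_{\CAL W'}|\widehat F(\CAL W')-F(\CAL W')|=\sup_{\CAL W'}|\sum_t\alpha_t(\CAL L^{\CAL D_t}(\CAL W')-\CAL L_t(\CAL W'))|$ concentrates, with sub-gaussian parameter $\sum_t\alpha_t^2(\log m_t/m_t)$ after accounting for the number of tuples $m_t$ per dataset — more precisely each of the $m_t$ tuples in $\CAL D_t$ contributes variance proxy $\alpha_t^2(\sqrt{\log m_t}/m_t)^2$, summing to $\alpha_t^2\log m_t/m_t$ per dataset. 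This gives, with probability $\ge1-\delta$ (absorbing the factor of $2$ for the two-sided/two-appearance use of the bound into the $4$ and the $\log\frac2\delta$), a deviation of order $\sqrt{\sum_t\frac{\alpha_t^2}{m_t/\log m_t}\log\frac2\delta}$. Combining the three pieces yields exactly the stated inequality. The main obstacle I anticipate is the bookkeeping in the distribution-shift conversion: making sure the reference-parameter chaining is valid simultaneously for the $\CAL H\Delta\CAL H$ and combined-error terms, that the optimal-loss corrections genuinely cancel under the weights $\BM\alpha$, and that $\inf_{\CAL W'}F(\CAL W')$ is controlled by $\CAL L_{t_\textnormal{te}}^*+\sum_t\alpha_t d_{t,t_\textnormal{te}}$ with no leftover slack — this is where the precise definitions of $d^{\CAL H\Delta\CAL H}_{t,t_\textnormal{te}}$ and $d^\text{comb}_{t,t_\textnormal{te}}$ (and the factor $2$ in front of $\sum_t\alpha_t d_{t,t_\textnormal{te}}$) have to be threaded exactly.
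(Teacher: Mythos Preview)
Your proposal is correct and follows essentially the same three-part decomposition (optimization slack, distribution shift, concentration) as the paper. The paper resolves exactly the ``reference-parameter chaining'' difficulty you flag by proving the pointwise excess-risk transfer inequality $\big|(\CAL L_t(\CAL W)-\CAL L_t^*)-(\CAL L_{t_\textnormal{te}}(\CAL W)-\CAL L_{t_\textnormal{te}}^*)\big|\le d_{t,t_\textnormal{te}}$ via triangle inequalities through the joint near-minimizer $\CAL W^k_{t,t_\textnormal{te}}\in(\CAL L_t+\CAL L_{t_\textnormal{te}})^{-1}\big((-\infty,\CAL L_t^*+\CAL L_{t_\textnormal{te}}^*+d^{\text{comb}}_{t,t_\textnormal{te}}+\tfrac1k]\big)$, then takes the $\BM\alpha$-weighted combination; the concentration step is Theorem~3 of Maurer (2021), applied to $\sum_t\alpha_t\sup_{\CAL W}|\CAL L^{\CAL D_t}(\CAL W)-\CAL L_t(\CAL W)|$, which is precisely the subgaussian McDiarmid variant you describe.
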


\begin{proof}[Proof of Lemma~\ref{lem:alpha}]
Generalized from \cite{ben2010theory}. For $k\ge1$, let
\AL{\CAL W^k_{t}&\in\CAL L_t^{-1}\big(\big({-\infty},\CAL L_t^*+\tfrac1k\big]\big),\\
\CAL W^k_{t,t_\textnormal{te}}&\in(\CAL L_t+\CAL L_{t_\textnormal{te}})^{-1}\big(\big({-\infty},\CAL L_t^*+\CAL L_{t_\textnormal{te}}^*+d_{t,t_\textnormal{te}}^\text{comb}+\tfrac1k\big]\big).}
Then for any $\CAL W$, by the triangle inequality,
\AL{{}&\big|(\CAL L_t(\CAL W)-\CAL L_t^*)-(\CAL L_{t_\textnormal{te}}(\CAL W)-\CAL L_{t_\textnormal{te}}^*)\big|\\
={}&\big||\CAL L_t(\CAL W)-\CAL L_t^*|-|\CAL L_{t_\textnormal{te}}(\CAL W)-\CAL L_{t_\textnormal{te}}^*|\big|\\
={}&\big|\big(|\CAL L_t(\CAL W)-\CAL L_t(\CAL W^k_{t,t_\textnormal{te}})|-|\CAL L_{t_\textnormal{te}}(\CAL W)-\CAL L_{t_\textnormal{te}}(\CAL W^k_{t,t_\textnormal{te}})|\big)\nonumber\\
&+\big(|\CAL L_t(\CAL W)-\CAL L_t^*|-|\CAL L_t(\CAL W)-\CAL L_t(\CAL W^k_{t,t_\textnormal{te}})|\big)\\
&-\big(|\CAL L_{t_\textnormal{te}}(\CAL W)-\CAL L_{t_\textnormal{te}}^*|-|\CAL L_{t_\textnormal{te}}(\CAL W)-\CAL L_{t_\textnormal{te}}(\CAL W^k_{t,t_\textnormal{te}})|\big)\big|\nonumber\\
\le{}&\big||\CAL L_t(\CAL W)-\CAL L_t(\CAL W^k_{t,t_\textnormal{te}})|-|\CAL L_{t_\textnormal{te}}(\CAL W)-\CAL L_{t_\textnormal{te}}(\CAL W^k_{t,t_\textnormal{te}})|\big|\nonumber\\
&+\big||\CAL L_t(\CAL W)-\CAL L_t^*|-|\CAL L_t(\CAL W)-\CAL L_t(\CAL W^k_{t,t_\textnormal{te}})|\big|\\
&+\big||\CAL L_{t_\textnormal{te}}(\CAL W)-\CAL L_{t_\textnormal{te}}^*|-|\CAL L_{t_\textnormal{te}}(\CAL W)-\CAL L_{t_\textnormal{te}}(\CAL W^k_{t,t_\textnormal{te}})|\big|\nonumber\\
\le{}&d^{\CAL H\Delta\CAL H}_{t,t_\textnormal{te}}+\big|(\CAL L_t(\CAL W)-\CAL L_t^*)-(\CAL L_t(\CAL W)-\CAL L_t(\CAL W^k_{t,t_\textnormal{te}}))\big|\nonumber\\
&+\big|(\CAL L_{t_\textnormal{te}}(\CAL W)-\CAL L_{t_\textnormal{te}}^*)-(\CAL L_{t_\textnormal{te}}(\CAL W)-\CAL L_{t_\textnormal{te}}(\CAL W^k_{t,t_\textnormal{te}}))\big|\\
={}&d^{\CAL H\Delta\CAL H}_{t,t_\textnormal{te}}+|\CAL L_t(\CAL W_{t,t_\textnormal{te}}^k)-\CAL L_t^*|+|\CAL L_{t_\textnormal{te}}(\CAL W_{t,t_\textnormal{te}}^k)-\CAL L_{t_\textnormal{te}}^*|\\
={}&d^{\CAL H\Delta\CAL H}_{t,t_\textnormal{te}}+\CAL L_t(\CAL W_{t,t_\textnormal{te}}^k)-\CAL L_t^*+\CAL L_{t_\textnormal{te}}(\CAL W_{t,t_\textnormal{te}}^k)-\CAL L_{t_\textnormal{te}}^*\\
\le{}&d^{\CAL H\Delta\CAL H}_{t,t_\textnormal{te}}+d^\text{comb}_{t,t_\textnormal{te}}+\frac1k\\
={}&d_{t,t_\textnormal{te}}+\frac1k
.}
Thus,
\AL{
&\bigg|\sum_{t=0}^{t_\textnormal{te}-1}\alpha_t(\CAL L_t(\CAL W)-\CAL L_t^*)-(\CAL L_{t_\textnormal{te}}(\CAL W)-\CAL L_{t_\textnormal{te}}^*)\bigg|\\
={}&\bigg|\sum_{t=0}^{t_\textnormal{te}-1}\alpha_t(\CAL L_t(\CAL W)-\CAL L_t^*)-\sum_{t=0}^{t_\textnormal{te}-1}\alpha_t(\CAL L_{t_\textnormal{te}}(\CAL W)-\CAL L_{t_\textnormal{te}}^*)\bigg|\\
={}&\bigg|\sum_{t=0}^{t_\textnormal{te}-1}\alpha_t((\CAL L_t(\CAL W)-\CAL L_t^*)-(\CAL L_{t_\textnormal{te}}(\CAL W)-\CAL L_{t_\textnormal{te}}^*))\bigg|\\
\le{}&\sum_{t=0}^{t_\textnormal{te}-1}\alpha_t|(\CAL L_t(\CAL W)-\CAL L_t^*)-(\CAL L_{t_\textnormal{te}}(\CAL W)-\CAL L_{t_\textnormal{te}}^*)|\\
\le{}&\sum_{t=0}^{t_\textnormal{te}-1}\alpha_t\Big(d_{t,t_\textnormal{te}}+\frac1k\Big)\\
={}&\sum_{t=0}^{t_\textnormal{te}-1}\alpha_td_{t,t_\textnormal{te}}+\frac1k
.}

Besides that, by Theorem~3 in \cite{maurer2021concentration} and Assumption~\ref{ASS:subg},
\AL{
&\Prb\bigg\{\sup_{\CAL W}\bigg|\sum_{t=0}^{t_\textnormal{te}-1}\alpha_t\CAL L^{D_t}(\CAL W)-\sum_{t=0}^{t_\textnormal{te}-1}\alpha_t\CAL L_t(\CAL W)\bigg|\ge\epsilon\bigg\}\\
\le{}&\Prb\bigg\{\sum_{t=0}^{t_\textnormal{te}-1}\alpha_t\sup_{\CAL W}|\CAL L^{D_t}(\CAL W)-\CAL L_t(\CAL W)|\ge\epsilon\bigg\}\\
\le{}&2\exp\bigg({-\frac{\epsilon^2}{4\sum_{t=0}^{t_\textnormal{te}-1}m_t\big(\alpha_t\frac{\varsigma\sqrt{\log m_t}}{m_t}\big)^2}}\bigg)\\
={}&2\exp\bigg({-\frac{\epsilon^2}{4\varsigma^2\sum_{t=0}^{t_\textnormal{te}-1}\frac{\alpha_t^2}{\frac{m_t}{\log m_t}}}}\bigg).
}

Then for $\varsigma=1$, with probability at least $1-\delta$, for all $\CAL W$, 
\EQ{\bigg|\sum_{t=0}^{t_\textnormal{te}-1}\alpha_t\CAL L^{\CAL D_t}\!(\CAL W)-\sum_{t=0}^{t_\textnormal{te}-1}\alpha_t\CAL L_t(\CAL W)\bigg|\le2\sqrt{\sum_{t=0}^{t_\textnormal{te}-1}\frac{\alpha_t^2}{\frac{m_t}{\log m_t}}\log\frac2\delta}.}
Thus,
\AL{&\CAL L_{t_\textnormal{te}}(\CAL W^{\BM\alpha,\epsilon}_{t_\textnormal{te}-1})\\
={}&\CAL L_{t_\textnormal{te}}^*+\CAL L_{t_\textnormal{te}}(\CAL W^{\BM\alpha,\epsilon}_{t_\textnormal{te}-1})-\CAL L_{t_\textnormal{te}}^*\\
\le{}&\CAL L_{t_\textnormal{te}}^*+\sum_{t=0}^{t_\textnormal{te}-1}\alpha_t(\CAL L_t(\CAL W^{\BM\alpha,\epsilon}_{t_\textnormal{te}-1})-\CAL L_t^*)+\sum_{t=0}^{t_\textnormal{te}-1}\alpha_td_{t,t_\textnormal{te}}+\frac1k\\
\le{}&\CAL L_{t_\textnormal{te}}^*+\sum_{t=0}^{t_\textnormal{te}-1}\alpha_t(\CAL L^{\CAL D_t}\!(\CAL W^{\BM\alpha,\epsilon}_{t_\textnormal{te}-1})-\CAL L_t^*)+\sum_{t=0}^{t_\textnormal{te}-1}\alpha_td_{t,t_\textnormal{te}}\nonumber\\&+\frac1k+2\sqrt{\sum_{t=0}^{t_\textnormal{te}-1}\frac{\alpha_t^2}{\frac{m_t}{\log m_t}}\log\frac2\delta}\\
\le{}&\CAL L_{t_\textnormal{te}}^*+\epsilon+\sum_{t=0}^{t_\textnormal{te}-1}\alpha_t(\CAL L^{\CAL D_t}\!(\CAL W^k_{t_\textnormal{te}})-\CAL L_t^*)+\sum_{t=0}^{t_\textnormal{te}-1}\alpha_td_{t,t_\textnormal{te}}\nonumber\\&+\frac1k+2\sqrt{\sum_{t=0}^{t_\textnormal{te}-1}\frac{\alpha_t^2}{\frac{m_t}{\log m_t}}\log\frac2\delta}\\
\le{}&\CAL L_{t_\textnormal{te}}^*+\epsilon+\sum_{t=0}^{t_\textnormal{te}-1}\alpha_t(\CAL L_t(\CAL W^k_{t_\textnormal{te}})-\CAL L_t^*)+\sum_{t=0}^{t_\textnormal{te}-1}\alpha_td_{t,t_\textnormal{te}}\nonumber\\&+\frac1k+4\sqrt{\sum_{t=0}^{t_\textnormal{te}-1}\frac{\alpha_t^2}{\frac{m_t}{\log m_t}}\log\frac2\delta}\\
\le{}&\CAL L_{t_\textnormal{te}}^*+\epsilon+\CAL L_{t_\textnormal{te}}(\CAL W^k_{t_\textnormal{te}})-\CAL L_{t_\textnormal{te}}^*+2\sum_{t=0}^{t_\textnormal{te}-1}\alpha_td_{t,t_\textnormal{te}}\nonumber\\
&+\frac2k+4\sqrt{\sum_{t=0}^{t_\textnormal{te}-1}\frac{\alpha_t^2}{\frac{m_t}{\log m_t}}\log\frac2\delta}\\
\le{}&\CAL L_{t_\textnormal{te}}^*+\epsilon+2\sum_{t=0}^{t_\textnormal{te}-1}\alpha_td_{t,t_\textnormal{te}}+\frac3k+4\sqrt{\sum_{t=0}^{t_\textnormal{te}-1}\frac{\alpha_t^2}{\frac{m_t}{\log m_t}}\log\frac2\delta}
.} 
It follows from the continuity of probability that
\AL{&\Prb\bigg\{\CAL L_{t_\textnormal{te}}(\CAL W^{\BM\alpha,\epsilon}_{t_\textnormal{te}-1})>\CAL L_{t_\textnormal{te}}^*+\epsilon+2\sum_{t=0}^{t_\textnormal{te}-1}\alpha_td_{t,t_\textnormal{te}}\nonumber\\
&\qquad\qquad\qquad\qquad\;\;+4\sqrt{\sum_{t=0}^{t_\textnormal{te}-1}\frac{\alpha_t^2}{\frac{m_t}{\log m_t}}\log\frac2\delta}\bigg\}\\
={}&\Prb\bigg[\bigcup_{k=1}^\infty\bigg\{\CAL L_{t_\textnormal{te}}(\CAL W^{\BM\alpha,\epsilon}_{t_\textnormal{te}-1})\ge\CAL L_{t_\textnormal{te}}^*+\epsilon+2\sum_{t=0}^{t_\textnormal{te}-1}\alpha_td_{t,t_\textnormal{te}}\nonumber\\
&\qquad\qquad\qquad\qquad\quad\;\;+\frac3k+4\sqrt{\sum_{t=0}^{t_\textnormal{te}-1}\frac{\alpha_t^2}{\frac{m_t}{\log m_t}}\log\frac2\delta}\bigg\}\bigg]\\
={}&\lim_{k\to\infty}\Prb\bigg\{\CAL L_{t_\textnormal{te}}(\CAL W^{\BM\alpha,\epsilon}_{t_\textnormal{te}-1})\ge\CAL L_{t_\textnormal{te}}^*+\epsilon+2\sum_{t=0}^{t_\textnormal{te}-1}\alpha_td_{t,t_\textnormal{te}}\nonumber\\
&\qquad\qquad\qquad\qquad\quad\;\;+\frac3k+4\sqrt{\sum_{t=0}^{t_\textnormal{te}-1}\frac{\alpha_t^2}{\frac{m_t}{\log m_t}}\log\frac2\delta}\bigg\}\\
\le{}&\lim_{k\to\infty}\delta=\delta
.\qedhere}
\end{proof}

\newpage
\begin{corollary}\label{cor:alpha}
Under the setup of Lemma~\ref{lem:alpha}, let
\EQ{\CAL L_{t_\textnormal{te}}^{\BM\alpha}:=\inf_{\begin{subarray}{c}
\epsilon \in \BB Q_{>0}
\end{subarray}}\CAL L_{t_\textnormal{te}}(\CAL W^{\BM\alpha,\epsilon}_{t_\textnormal{te}-1})}
denote the best possible loss w.r.t.\ $\BM\alpha$. With probability at least $1-\delta$,
\EQ{\CAL L_{t_\textnormal{te}}^{\BM\alpha}\le\CAL L_{t_\textnormal{te}}^*+2\sum_{t=0}^{t_\textnormal{te}-1}\alpha_td_{t,t_\textnormal{te}}+4\sqrt{\sum_{t=0}^{t_\textnormal{te}-1}\frac{\alpha_t^2}{\frac{m_t}{\log m_t}}\log\frac2\delta}.}
\end{corollary}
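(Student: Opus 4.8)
\textbf{Proof proposal for Corollary~\ref{cor:alpha}.}

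The plan is to revisit the proof of Lemma~\ref{lem:alpha} instead of treating it as a black box, since a naive union bound of Lemma~\ref{lem:alpha} over the countably many values $\epsilon\in\BB Q_{>0}$ would be vacuous (each failure event has probability up to $\delta$, and there are infinitely many). The key observation I would exploit is that the \emph{only} source of randomness in the proof of Lemma~\ref{lem:alpha} is the uniform concentration event
\[
G:=\bigg\{\,\forall\CAL W:\ \bigg|\sum_{t=0}^{t_\textnormal{te}-1}\alpha_t\CAL L^{\CAL D_t}\!(\CAL W)-\sum_{t=0}^{t_\textnormal{te}-1}\alpha_t\CAL L_t(\CAL W)\bigg|\le2\sqrt{\sum_{t=0}^{t_\textnormal{te}-1}\frac{\alpha_t^2}{\frac{m_t}{\log m_t}}\log\frac2\delta}\,\bigg\},
\]
which, by Theorem~3 of \cite{maurer2021concentration} together with Assumption~\ref{ASS:subg} and the normalization $\varsigma=1$, satisfies $\Prb(G)\ge1-\delta$, and — crucially — does not depend on $\epsilon$. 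The deterministic auxiliary parameters $\CAL W^k_{t,t_\textnormal{te}}$ and $\CAL W^k_{t_\textnormal{te}}$ used in that proof are defined purely through the true losses $\CAL L_t$ and carry no randomness.

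First I would argue that on the event $G$ the chain of inequalities in the proof of Lemma~\ref{lem:alpha} goes through verbatim for \emph{every} $\epsilon\in\BB Q_{>0}$ and every integer $k\ge1$ simultaneously — the uniform bound of $G$ is simply instantiated at the particular parameters $\CAL W^{\BM\alpha,\epsilon}_{t_\textnormal{te}-1}$, which is legitimate precisely because the concentration inequality holds for all $\CAL W$ at once — yielding
\[
\CAL L_{t_\textnormal{te}}(\CAL W^{\BM\alpha,\epsilon}_{t_\textnormal{te}-1})\le\CAL L_{t_\textnormal{te}}^*+\epsilon+2\sum_{t=0}^{t_\textnormal{te}-1}\alpha_td_{t,t_\textnormal{te}}+\frac3k+4\sqrt{\sum_{t=0}^{t_\textnormal{te}-1}\frac{\alpha_t^2}{\frac{m_t}{\log m_t}}\log\frac2\delta}.
\]
Letting $k\to\infty$ eliminates the $3/k$ term, so on $G$ we obtain, for every $\epsilon\in\BB Q_{>0}$,
\[
\CAL L_{t_\textnormal{te}}(\CAL W^{\BM\alpha,\epsilon}_{t_\textnormal{te}-1})\le\CAL L_{t_\textnormal{te}}^*+\epsilon+2\sum_{t=0}^{t_\textnormal{te}-1}\alpha_td_{t,t_\textnormal{te}}+4\sqrt{\sum_{t=0}^{t_\textnormal{te}-1}\frac{\alpha_t^2}{\frac{m_t}{\log m_t}}\log\frac2\delta}.
\]

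Finally I would take the infimum over $\epsilon\in\BB Q_{>0}$ on both sides: since $\inf\BB Q_{>0}=0$ and every other term on the right is independent of $\epsilon$, this gives, on $G$,
\[
\CAL L_{t_\textnormal{te}}^{\BM\alpha}=\inf_{\epsilon\in\BB Q_{>0}}\CAL L_{t_\textnormal{te}}(\CAL W^{\BM\alpha,\epsilon}_{t_\textnormal{te}-1})\le\CAL L_{t_\textnormal{te}}^*+2\sum_{t=0}^{t_\textnormal{te}-1}\alpha_td_{t,t_\textnormal{te}}+4\sqrt{\sum_{t=0}^{t_\textnormal{te}-1}\frac{\alpha_t^2}{\frac{m_t}{\log m_t}}\log\frac2\delta},
\]
and since $\Prb(G)\ge1-\delta$ the stated bound holds with probability at least $1-\delta$. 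The only delicate step — the one I would be careful to spell out — is the first: making explicit that a single good event $G$ underlies Lemma~\ref{lem:alpha} for all $\epsilon$ at once, so that passing to the countable infimum costs nothing in probability; everything afterwards is an elementary limit and an elementary infimum.
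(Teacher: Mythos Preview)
Your argument is correct, but it takes a different route from the paper. You re-open the proof of Lemma~\ref{lem:alpha} to isolate a single high-probability concentration event $G$ that is uniform in $\CAL W$ and hence independent of $\epsilon$; on $G$ the Lemma's chain holds for all $\epsilon\in\BB Q_{>0}$ simultaneously, so you can pass to the infimum deterministically. This is clean and makes explicit where the randomness sits.

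The paper, by contrast, treats Lemma~\ref{lem:alpha} as a black box. It does \emph{not} use the union bound you feared; instead it observes that $\CAL L_{t_\textnormal{te}}^{\BM\alpha}\le\CAL L_{t_\textnormal{te}}(\CAL W^{\BM\alpha,1/k}_{t_\textnormal{te}-1})$ for every $k$, so
\[
\{\CAL L_{t_\textnormal{te}}^{\BM\alpha}\ge B+\tfrac1k\}\subseteq\{\CAL L_{t_\textnormal{te}}(\CAL W^{\BM\alpha,1/k}_{t_\textnormal{te}-1})\ge B+\tfrac1k\},
\]
and the latter has probability at most $\delta$ by Lemma~\ref{lem:alpha} with $\epsilon=1/k$. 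Since the events on the left are increasing in $k$, continuity of probability gives $\Prb\{\CAL L_{t_\textnormal{te}}^{\BM\alpha}>B\}=\lim_k\Prb\{\CAL L_{t_\textnormal{te}}^{\BM\alpha}\ge B+\tfrac1k\}\le\delta$. So the paper spends one invocation of Lemma~\ref{lem:alpha} per $k$ but never sums the failure probabilities. Your approach trades the continuity-of-probability device for a look inside the lemma; the paper's approach keeps the lemma opaque at the cost of that device (which it had already used inside Lemma~\ref{lem:alpha} anyway). Either way the cost is a single $\delta$.
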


\begin{proof}[Proof of Corollary~\ref{cor:alpha}]
By the continuity of probability,
\AL{
&\Prb\bigg\{\CAL L_{t_\textnormal{te}}^{\BM\alpha}>\CAL L_{t_\textnormal{te}}^*+2\sum_{t=0}^{t_\textnormal{te}-1}\alpha_td_{t,t_\textnormal{te}}+4\sqrt{\sum_{t=0}^{t_\textnormal{te}-1}\frac{\alpha_t^2}{\frac{m_t}{\log m_t}}\log\frac2\delta}\bigg\}\\
={}&\Prb\bigg[\bigcup_{k=1}^\infty\bigg\{\CAL L_{t_\textnormal{te}}^{\BM\alpha}\ge\CAL L_{t_\textnormal{te}}^*+\frac1k+2\sum_{t=0}^{t_\textnormal{te}-1}\alpha_td_{t,t_\textnormal{te}}+4\sqrt{\sum_{t=0}^{t_\textnormal{te}-1}\frac{\alpha_t^2}{\frac{m_t}{\log m_t}}\log\frac2\delta}\bigg\}\bigg]\\
={}&\lim_{k\to\infty}\Prb\bigg\{\CAL L_{t_\textnormal{te}}^{\BM\alpha}\ge\CAL L_{t_\textnormal{te}}^*+\frac1k+2\sum_{t=0}^{t_\textnormal{te}-1}\alpha_td_{t,t_\textnormal{te}}+4\sqrt{\sum_{t=0}^{t_\textnormal{te}-1}\frac{\alpha_t^2}{\frac{m_t}{\log m_t}}\log\frac2\delta}\bigg\}\\
\le{}&\limsup_{k\to\infty}\Prb\bigg\{\CAL L_{t_\textnormal{te}}\big(\CAL W^{\BM\alpha,\frac1k}_{t_\textnormal{te}-1}\big)\ge\CAL L_{t_\textnormal{te}}^*+\frac1k+2\sum_{t=0}^{t_\textnormal{te}-1}\alpha_td_{t,t_\textnormal{te}}\nonumber\\&\qquad\qquad\qquad\qquad\qquad\;\;+4\sqrt{\sum_{t=0}^{t_\textnormal{te}-1}\frac{\alpha_t^2}{\frac{m_t}{\log m_t}}\log\frac2\delta}\bigg\}\\
\le{}&\limsup_{k\to\infty}\delta=\delta
.\qedhere}
\end{proof}

Now we give the proofs of Theorems~\ref{thm:ft} \& \ref{thm:rt}.

\begin{proof}[Proof of Theorem~\ref{thm:ft}]
By Assumption~\ref{ASS:prox},
\AL{\ell_{t_\textnormal{te}-1}(\CAL W)&=(1-\gamma)\CAL L^{\CAL D_{t_\textnormal{te}-1}}(\CAL W)+\gamma\ell_{t_\textnormal{te}-2}(\CAL W)\\
&=\gamma^{t_\textnormal{te}-1}\CAL L^{\CAL D_0}(\CAL W)+\sum_{t=1}^{t_\textnormal{te}-1}(1-\gamma)\gamma^{t_\textnormal{te}-t-1}\CAL L^{\CAL D_t}(\CAL W).}
Thus, the coefficients are
\EQ{\alpha_t^\text{ft}:=\begin{cases}
\gamma^{t_\textnormal{te}-1},&\text{for }t=0,\\
(1-\gamma)\gamma^{t_\textnormal{te}-t-1},&\text{for }t=1,\dots,t_\textnormal{te}-1.
\end{cases}}
Since $\sum_{t=0}^{t_\textnormal{te}-1}\alpha_t^\text{ft}=1$, then by Corollary~\ref{cor:alpha},
\AL{&\CAL L^\text{ft}_{t_\textnormal{te}}=\CAL L_{t_\textnormal{te}}^{\BM\alpha^\text{ft}}\\
\le{}&\CAL L_{t_\textnormal{te}}^*+2\sum_{t=0}^{t_\textnormal{te}-1}\alpha_t^\text{ft}d_{t,t_\textnormal{te}}+4\sqrt{\sum_{t=0}^{t_\textnormal{te}-1}\frac{(\alpha_t^\text{ft})^2}{\frac{m_t}{\log m_t}}\log\frac2\delta}\\
={}&\CAL L_{t_\textnormal{te}}^*+2\gamma^{t_\textnormal{te}-1}d_{0,t_\textnormal{te}}+2\sum\limits_{t=1}^{t_\textnormal{te}-1}(1-\gamma)\gamma^{t_\textnormal{te}-t-1}d_{t,t_\textnormal{te}}\\
&+4\sqrt{\Big(\frac{\gamma^{2t_\textnormal{te}-2}}{\frac{m_0}{\log m_0}}+\frac{(1+\gamma)(1-\gamma^{2t_\textnormal{te}-4})}{(1-\gamma)\frac{m_1}{\log m_1}}\Big)\log\frac2\delta}\nonumber
.\qedhere}
\end{proof}

\begin{proof}[Proof of Theorem~\ref{thm:rt}]
By Assumption~\ref{ASS:rt}, we have
\EQ{\alpha_t^\text{rt}:=\frac{m_t}{\sum_{t'=0}^{t_\textnormal{te}-1}m_{t'}}=\frac{m_t}{m_0+(t_\textnormal{te}-1)m_1}.}
It follows from Corollary~\ref{cor:alpha} that
\AL{&\CAL L^\text{rt}_{t_\textnormal{te}}=\CAL L_{t_\textnormal{te}}^{\BM\alpha^\text{rt}}\\
\le{}&\CAL L_{t_\textnormal{te}}^*+2\sum_{t=0}^{t_\textnormal{te}-1}\alpha_t^\text{rt}d_{t,t_\textnormal{te}}+4\sqrt{\sum_{t=0}^{t_\textnormal{te}-1}\frac{(\alpha_t^\text{rt})^2}{\frac{m_t}{\log m_t}}\log\frac2\delta}\\
={}&\CAL L_{t_\textnormal{te}}^*+\frac{2m_0d_{0,t_\textnormal{te}}+2\sum\limits_{t=1}^{t_\textnormal{te}-1}m_1d_{t,t_\textnormal{te}}}{m_0+(t_\textnormal{te}-1)m_1}+4\sqrt{\frac{\sum_{t=0}^{t_\textnormal{te}-1}m_t\log m_t}{(m_0+(t_\textnormal{te}-1)m_1)^2}\log\tfrac2\delta}\\
\le{}&\CAL L_{t_\textnormal{te}}^*+\frac{2m_0d_{0,t_\textnormal{te}}+2\sum\limits_{t=1}^{t_\textnormal{te}-1}m_1d_{t,t_\textnormal{te}}}{m_0+(t_\textnormal{te}-1)m_1}+4\sqrt{\frac{\sum_{t=0}^{t_\textnormal{te}-1}m_t\log m_0}{(m_0+(t_\textnormal{te}-1)m_1)^2}\log\tfrac2\delta}\\
={}&\CAL L_{t_\textnormal{te}}^*+\frac{2m_0d_{0,t_\textnormal{te}}+2\sum\limits_{t=1}^{t_\textnormal{te}-1}m_1d_{t,t_\textnormal{te}}}{m_0+(t_\textnormal{te}-1)m_1}+4\sqrt{\frac{\log m_0}{m_0+(t_\textnormal{te}-1)m_1}\log\tfrac2\delta}
.\qedhere}
\end{proof}

\subsection{Proof of Proposition~\ref{prop:fair}}\label{app:prop:fair}
\begin{proof}[Proof of Proposition~\ref{prop:fair}]
Note that
\AL{\nabla_{\W_t}\CAL L^{\D_t}_\textnormal{fair}(\W_t)&=\nabla_{\W_t}(-\log(\sigma({-\diffpd^{\D_t}(\W_t)})))\\
&=\sigma(\diffpd^{\D_t}(\W_t))\nabla_{\W_t}\diffpd^{\D_t}(\W_t).}
Since $\nabla_{\W_t}\CAL L^{\D_t}_\textnormal{fair}(\W_t)\ne\BM0$, 
then $\nabla_{\W_t}\diffpd^{\D_t}(\W_t)\ne\BM0$. Consider
\EQ{\lambda:=\frac{-2\langle\nabla_{\W_t}\CAL L^{\D_t}_\textnormal{rec}(\W_t),\nabla_{\W_t}\diffpd^{\D_t}(\W_t)\rangle}{\|\nabla_{\W_t}\diffpd^{\D_t}(\W_t)\|_2^2}\ge0.}
By the chain rule,
\AL{&\lim_{\eta\to+0}\frac{\diffpd^{\D_t}(\TLD{\W}_t)-\diffpd^{\D_t}\!(\W_t)}{\eta}\\
={}&{-\langle\nabla_{\W_t}\CAL L^{\D_t}_\textnormal{rec}(\W_t)+\lambda\nabla_{\W_t}\CAL L^{\D_t}_\textnormal{fair}(\W_t),\nabla_{\W_t}\diffpd^{\D_t}(\W_t)\rangle}\\
={}&(1-2\sigma(\diffpd^{\D_t}(\W_t)))(-\langle\nabla_{\W_t}\CAL L^{\D_t}_\textnormal{rec}(\W_t),\nabla_{\W_t}\diffpd^{\D_t}(\W_t)\rangle)
.}
The conclusion follows from the fact that
\EQ{\OP{sgn}(x)(1-2\sigma(x))\le0,\qquad\forall x\in\BB R.\qedhere}
\end{proof}

\newpage
\section{Experiments}\label{asec:experimental_results}


\subsection{Implementation Details of Competitors}\label{assec:implentation}

For \adver, the adversarial coefficient $\gamma$ is selected from the suggested range [1, 10, 20, 50], as mentioned in their paper~\cite{li2021towards}. The filter modules are two-layer neural networks with the LeakyReLU activation. The discriminators are multi-layer perceptrons with 7 layers, LeakyReLU activation function, and a dropout rate of 0.3. The discriminators are trained for 10 steps.

In the original paper of \rerank~\cite{li2021user}, they use a re-ranking technique under a fairness-constraint based on the test positive data, which does not align with our assumption that we cannot access future data when serving the recommendation list. Thus, we adopt this method by designating items with predicted scores above a certain threshold as ground-truth items. In our experiments, the predicted scores are normalized to the range of 0 to 1, and we set the threshold to 0.7.

\subsection{Software and Hardware Configuration.}
All codes are programmed in Python 3.6.9 and PyTorch 1.4.0. All experiments
are performed on a Linux server with 2 Intel Xeon Gold 6240R
CPUs and 1 Nvidia Tesla V100 SXM2 GPU with 32 GB GPU memory.

\subsection{Additional Effectiveness Results}\label{assec:main_result}
\begin{itemize}
    \item Fig.~\ref{appenfig:trunc-tradeoff-taskN} show the results for the trade-off between recommendation performance and absolute performance disparity in Task-N. The results for Task-R is in the main body.
    \item Fig.~\ref{appenfig:trunc-trend-taskN} shows the results for the trend of performance disparity in Task-N. The results for Task-R is in the main body.
    \item Fig.~\ref{appenfig:trunc-trend-perf-taskN} and Fig.~\ref{appenfig:trunc-trend-perf-taskR} show the trend of recommendation performance in Task-R and Task-N, respectively.
    \item Fig.~\ref{appenfig:full-trend-perf-taskR} displays the trend in recommendation performance of \pretrain, \full, \fine, \fullfair, and \ours\ in Task-R on Movielenz. This includes the results immediately after pretraining (i.e., $t=0$) and subsequent time periods (i.e., $t=7, 8, 9$). Notably, the results for MF demonstrate that fine-tuning-based methods outperform retraining-based methods in the earlier time periods because fine-tuning is less affected by distribution shifts. However, in later periods, the performance of fine-tuned models eventually degrades, falling even below that of retrained models due to accumulated learning errors. These observations are consistent with our theoretical analyses in \S\ref{ssec:theory} and suggest that the best practice involves incremental fine-tuning with restart.
\end{itemize}

\subsection{Comparison of Soft Ranking Methods}\label{assec:soft_ranking}
Fig.~\ref{appenfig:trunc-softrank-taskR} presents \ours\ adapting different soft ranking metrics, including ApproxNDCG~\cite{qin2010general} and NeuralNDCG~\cite{pobrotyn2021neuralndcg}, as well as \ours\ incorporating the differentiable Hit in Task-R. The legend also provides the average running time for each method.

First, \ours\ outperforms or matches the NeuralNDCG variant in both recommendation performance and performance disparity, while being approximately four times faster. This is because the differentiable Hit addresses NeuralNDCG's gradient vanishing issue by eliminating several processes, including the sinkhorn algorithm.

In comparison to ApproxNDCG, \ours\ generally achieves smaller performance disparity. Although ApproxNDCG may yield lower disparity in some cases, it excessively sacrifices recommendation quality, which is undesirable.

\subsection{Hyperparameter Analysis}\label{assec:hyperparam}
\subsubsection{Effect of the scaling parameter $\lambda$ for the fairness Loss.}
Fig.~\ref{appenfig:trunc-param-lambda-taskR} and Fig.~\ref{appenfig:trunc-param-lambda-taskN} show the effect of the scaling parameter $\lambda$ on the recommendation performances of the advantaged and disadvantaged groups in Task-R and Task-N, respectively.

\subsubsection{Effect of the number of dynamic update epochs.}
Fig.~\ref{appenfig:trunc-param-tepoch-taskR} and Fig.~\ref{appenfig:trunc-param-tepoch-taskN} show the effect of the number of dynamic update epochs on the recommendation performances of the advantaged and disadvantaged groups in Task-R and Task-N, respectively.

\subsubsection{Effect of temperature parameter $\tau$ in the relaxed permutation matrix.}
Fig.~\ref{appenfig:trunc-param-tau-taskR} and Fig.~\ref{appenfig:trunc-param-tau-taskN} show the effect of the temperature parameter $\tau$ on the recommendation performances of the advantaged and disadvantaged groups in Task-R and Task-N, respectively.

\subsubsection{Effect of the number of negative items $\mu$.}
Fig.~\ref{appenfig:trunc-param-numneg-taskR} and Fig.~\ref{appenfig:trunc-param-numneg-taskN} show the effect of the number of negative candidate items $\mu$ for a user in our fairness loss on the recommendation performances of the advantaged and disadvantaged groups in Task-R and Task-N, respectively.

In general, the results suggest that \ours\ performance remains relatively stable when varying the number of negative items in most cases. Thus, setting $\mu$ to 4 results in comparable performance while also enhancing the execution time of \ours.

\input{appenfig/trunc-tradeoff-taskN}
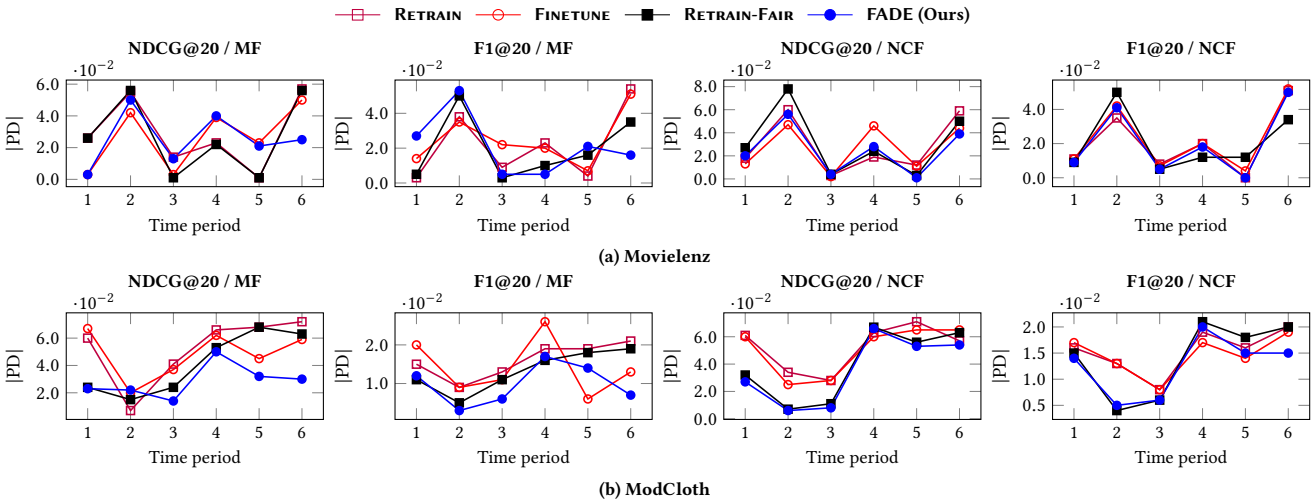
\begin{figure*}[t]
\footnotesize
\centering
\begin{tikzpicture}
\begin{customlegend}[legend columns=5,legend style={align=left,draw=none,column sep=1ex},
        legend entries={\textbf{\full}, \textbf{\fine}, \textbf{\fullfair}, \textbf{\ours\ (Ours)}}]

        \addlegendimage{draw=purple,mark=square}
        \addlegendimage{draw=red,mark=o}
        \addlegendimage{draw=black,mark=square*}   
        \addlegendimage{draw=blue,color=blue,mark=*} 
        \end{customlegend}
\end{tikzpicture}\\
\begin{tikzpicture}
\begin{axis}[
height=3.0cm, width=5cm,
title={\textbf{NDCG@20 / MF}},
xtick={1,2,3,4,5,6,7,8,9,10},
xticklabels={1,2,3,4,5,6,7,8,9,10},
ylabel=$|{\pd}|$,xlabel=Time period,
y tick label style={/pgf/number format/.cd,fixed,fixed zerofill,precision=1,/tikz/.cd},]

\addplot[color=purple,mark=square,mark size=1.5pt,line width=0.6pt]
coordinates {(1,0.026) (2,0.055) (3,0.014) (4,0.023) (5,0.001) (6,0.057) };
\addplot[color=red,mark=o,mark size=1.5pt,line width=0.6pt]
coordinates {(1,0.003) (2,0.042) (3,0.003) (4,0.039) (5,0.023) (6,0.050) };
\addplot[color=black,mark=square*,mark size=1.5pt,line width=0.6pt]
coordinates {(1,0.026) (2,0.056) (3,0.001) (4,0.022) (5,0.001) (6,0.056) };
\addplot[color=blue,mark=*,mark size=1.5pt,line width=0.6pt]
coordinates {(1,0.003) (2,0.050) (3,0.013) (4,0.040) (5,0.021) (6,0.025) };

\end{axis}
\end{tikzpicture}
\begin{tikzpicture}
\begin{axis}[
height=3.0cm, width=5cm,
title={\textbf{F1@20 / MF}},
xtick={1,2,3,4,5,6,7,8,9,10},
xticklabels={1,2,3,4,5,6,7,8,9,10},
ylabel=$|{\pd}|$,xlabel=Time period,
y tick label style={/pgf/number format/.cd,fixed,fixed zerofill,precision=1,/tikz/.cd},]
\addplot[color=purple,mark=square,mark size=1.5pt,line width=0.6pt]
coordinates {(1,0.003) (2,0.038) (3,0.009) (4,0.023) (5,0.004) (6,0.054) };
\addplot[color=red,mark=o,mark size=1.5pt,line width=0.6pt]
coordinates {(1,0.014) (2,0.035) (3,0.022) (4,0.020) (5,0.007) (6,0.051) };
\addplot[color=black,mark=square*,mark size=1.5pt,line width=0.6pt]
coordinates {(1,0.005) (2,0.050) (3,0.003) (4,0.010) (5,0.016) (6,0.035) };
\addplot[color=blue,mark=*,mark size=1.5pt,line width=0.6pt]
coordinates {(1,0.027) (2,0.053) (3,0.005) (4,0.005) (5,0.021) (6,0.016) };

\end{axis}
\end{tikzpicture}
\begin{tikzpicture}
\begin{axis}[
height=3.0cm, width=5cm,
title={\textbf{NDCG@20 / NCF}},
xtick={1,2,3,4,5,6,7,8,9,10},
xticklabels={1,2,3,4,5,6,7,8,9,10},
ylabel=$|{\pd}|$,xlabel=Time period,
y tick label style={/pgf/number format/.cd,fixed,fixed zerofill,precision=1,/tikz/.cd},]

\addplot[color=purple,mark=square,mark size=1.5pt,line width=0.6pt]
coordinates {(1,0.018) (2,0.060) (3,0.003) (4,0.019) (5,0.012) (6,0.059) };
\addplot[color=red,mark=o,mark size=1.5pt,line width=0.6pt]
coordinates {(1,0.013) (2,0.047) (3,0.002) (4,0.046) (5,0.011) (6,0.044) };
\addplot[color=black,mark=square*,mark size=1.5pt,line width=0.6pt]
coordinates {(1,0.027) (2,0.078) (3,0.004) (4,0.024) (5,0.003) (6,0.050) };
\addplot[color=blue,mark=*,mark size=1.5pt,line width=0.6pt]
coordinates {(1,0.020) (2,0.056) (3,0.004) (4,0.028) (5,0.001) (6,0.039) };

\end{axis}
\end{tikzpicture}
\begin{tikzpicture}
\begin{axis}[
height=3.0cm, width=5cm,
title={\textbf{F1@20 / NCF}},
xtick={1,2,3,4,5,6,7,8,9,10},
xticklabels={1,2,3,4,5,6,7,8,9,10},
ylabel=$|{\pd}|$,xlabel=Time period,
y tick label style={/pgf/number format/.cd,fixed,fixed zerofill,precision=1,/tikz/.cd},]

\addplot[color=purple,mark=square,mark size=1.5pt,line width=0.6pt]
coordinates {(1,0.011) (2,0.035) (3,0.008) (4,0.020) (5,0.000) (6,0.051) };
\addplot[color=red,mark=o,mark size=1.5pt,line width=0.6pt]
coordinates {(1,0.011) (2,0.042) (3,0.007) (4,0.020) (5,0.004) (6,0.052) };
\addplot[color=black,mark=square*,mark size=1.5pt,line width=0.6pt]
coordinates {(1,0.009) (2,0.050) (3,0.005) (4,0.012) (5,0.012) (6,0.034) };
\addplot[color=blue,mark=*,mark size=1.5pt,line width=0.6pt]
coordinates {(1,0.009) (2,0.041) (3,0.005) (4,0.018) (5,0.000) (6,0.050) };

\end{axis}
\end{tikzpicture}
\\
\textbf{(a) Movielenz}\\
\begin{tikzpicture}
\begin{axis}[
height=3.0cm, width=5cm,
title={\textbf{NDCG@20 / MF}},
xtick={1,2,3,4,5,6,7,8,9,10},
xticklabels={1,2,3,4,5,6,7,8,9,10},
ylabel=$|{\pd}|$,xlabel=Time period,
y tick label style={/pgf/number format/.cd,fixed,fixed zerofill,precision=1,/tikz/.cd},]

\addplot[color=purple,mark=square,mark size=1.5pt,line width=0.6pt]
coordinates {(1,0.060) (2,0.007) (3,0.041) (4,0.066) (5,0.068) (6,0.072) };
\addplot[color=red,mark=o,mark size=1.5pt,line width=0.6pt]
coordinates {(1,0.067) (2,0.020) (3,0.037) (4,0.062) (5,0.045) (6,0.059) };
\addplot[color=black,mark=square*,mark size=1.5pt,line width=0.6pt]
coordinates {(1,0.024) (2,0.015) (3,0.024) (4,0.053) (5,0.068) (6,0.063) };
\addplot[color=blue,mark=*,mark size=1.5pt,line width=0.6pt]
coordinates {(1,0.023) (2,0.022) (3,0.014) (4,0.050) (5,0.032) (6,0.030) };

\end{axis}
\end{tikzpicture}
\begin{tikzpicture}
\begin{axis}[
height=3.0cm, width=5cm,
title={\textbf{F1@20 / MF}},
xtick={1,2,3,4,5,6,7,8,9,10},
xticklabels={1,2,3,4,5,6,7,8,9,10},
ylabel=$|{\pd}|$,xlabel=Time period,
y tick label style={/pgf/number format/.cd,fixed,fixed zerofill,precision=1,/tikz/.cd},]
\addplot[color=purple,mark=square,mark size=1.5pt,line width=0.6pt]
coordinates {(1,0.015) (2,0.009) (3,0.013) (4,0.019) (5,0.019) (6,0.021) };
\addplot[color=red,mark=o,mark size=1.5pt,line width=0.6pt]
coordinates {(1,0.020) (2,0.009) (3,0.011) (4,0.026) (5,0.006) (6,0.013) };
\addplot[color=black,mark=square*,mark size=1.5pt,line width=0.6pt]
coordinates {(1,0.011) (2,0.005) (3,0.011) (4,0.016) (5,0.018) (6,0.019) };
\addplot[color=blue,mark=*,mark size=1.5pt,line width=0.6pt]
coordinates {(1,0.012) (2,0.003) (3,0.006) (4,0.017) (5,0.014) (6,0.007) };

\end{axis}
\end{tikzpicture}
\begin{tikzpicture}
\begin{axis}[
height=3.0cm, width=5cm,
title={\textbf{NDCG@20 / NCF}},
xtick={1,2,3,4,5,6,7,8,9,10},
xticklabels={1,2,3,4,5,6,7,8,9,10},
ylabel=$|{\pd}|$,xlabel=Time period,
y tick label style={/pgf/number format/.cd,fixed,fixed zerofill,precision=1,/tikz/.cd},]

\addplot[color=purple,mark=square,mark size=1.5pt,line width=0.6pt]
coordinates {(1,0.061) (2,0.034) (3,0.028) (4,0.063) (5,0.071) (6,0.057) };
\addplot[color=red,mark=o,mark size=1.5pt,line width=0.6pt]
coordinates {(1,0.060) (2,0.025) (3,0.028) (4,0.060) (5,0.065) (6,0.065) };
\addplot[color=black,mark=square*,mark size=1.5pt,line width=0.6pt]
coordinates {(1,0.032) (2,0.007) (3,0.011) (4,0.067) (5,0.056) (6,0.063) };
\addplot[color=blue,mark=*,mark size=1.5pt,line width=0.6pt]
coordinates {(1,0.027) (2,0.006) (3,0.008) (4,0.066) (5,0.053) (6,0.054) };

\end{axis}
\end{tikzpicture}
\begin{tikzpicture}
\begin{axis}[
height=3.0cm, width=5cm,
title={\textbf{F1@20 / NCF}},
xtick={1,2,3,4,5,6,7,8,9,10},
xticklabels={1,2,3,4,5,6,7,8,9,10},
ylabel=$|{\pd}|$,xlabel=Time period,
y tick label style={/pgf/number format/.cd,fixed,fixed zerofill,precision=1,/tikz/.cd},]

\addplot[color=purple,mark=square,mark size=1.5pt,line width=0.6pt]
coordinates {(1,0.016) (2,0.013) (3,0.008) (4,0.019) (5,0.016) (6,0.020) };
\addplot[color=red,mark=o,mark size=1.5pt,line width=0.6pt]
coordinates {(1,0.017) (2,0.013) (3,0.008) (4,0.017) (5,0.014) (6,0.019) };
\addplot[color=black,mark=square*,mark size=1.5pt,line width=0.6pt]
coordinates {(1,0.015) (2,0.004) (3,0.006) (4,0.021) (5,0.018) (6,0.020) };
\addplot[color=blue,mark=*,mark size=1.5pt,line width=0.6pt]
coordinates {(1,0.014) (2,0.005) (3,0.006) (4,0.020) (5,0.015) (6,0.015) };

\end{axis}
\end{tikzpicture}
\\
\textbf{(b) ModCloth}\\
\caption{Trend of absolute performance disparity in Task-N.
}\label{appenfig:trunc-trend-taskN}
\vspace{-1mm}
\end{figure*}
\input{appenfig/trunc-trend-perf-taskR}
\input{appenfig/trunc-trend-perf-taskN}
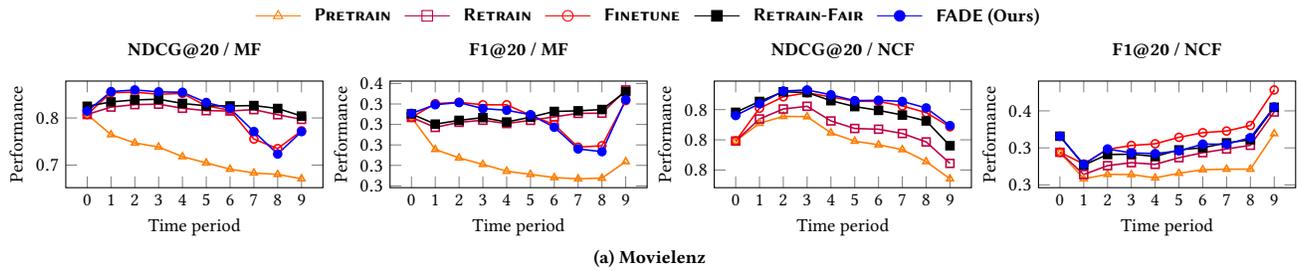
\begin{figure*}[t]
\footnotesize
\centering
\begin{tikzpicture}
\begin{customlegend}[legend columns=6,legend style={align=left,draw=none,column sep=1ex},
        legend entries={\textbf{\pretrain}, \textbf{\full}, \textbf{\fine}, 
        \textbf{\fullfair}, \textbf{\ours\ (Ours)}}]
        \addlegendimage{draw=cc, mark=triangle}
        \addlegendimage{draw=purple,mark=square}
        \addlegendimage{draw=red,mark=o}
        \addlegendimage{draw=black,mark=square*}   
        \addlegendimage{draw=blue,color=blue,mark=*} 
        \end{customlegend}
\end{tikzpicture}\\
\begin{tikzpicture}
\begin{axis}[
height=3.0cm, width=5cm,
title={\textbf{NDCG@20 / MF}},
xtick={1,2,3,4,5,6,7,8,9,10},
xticklabels={0,1,2,3,4,5,6,7,8,9},
ylabel=Performance,xlabel=Time period,
y tick label style={/pgf/number format/.cd,fixed,fixed zerofill,precision=1,/tikz/.cd},]

\addplot[color=cc,mark=triangle,mark size=1.5pt,line width=0.6pt]
coordinates {(1,0.8074) (2,0.7646) (3,0.747) (4,0.7384) (5,0.718) (6,0.7047) (7,0.6914) (8,0.6833) (9,0.6803) (10,0.671) };
\addplot[color=purple,mark=square,mark size=1.5pt,line width=0.6pt]
coordinates {(1,0.8074) (2,0.8231) (3,0.8283) (4,0.8293) (5,0.8206) (6,0.8155) (7,0.8148) (8,0.8177) (9,0.8069) (10,0.797) };
\addplot[color=red,mark=o,mark size=1.5pt,line width=0.6pt]
coordinates {(1,0.8074) (2,0.8537) (3,0.8547) (4,0.8503) (5,0.8521) (6,0.8255) (7,0.8166) (8,0.7553) (9,0.735) (10,0.7728) };
\addplot[color=black,mark=square*,mark size=1.5pt,line width=0.6pt]
coordinates {(1,0.8248) (2,0.834) (3,0.8383) (4,0.8397) (5,0.8308) (6,0.8258) (7,0.8255) (8,0.8263) (9,0.8201) (10,0.804) };
\addplot[color=blue,mark=*,mark size=1.5pt,line width=0.6pt]
coordinates {(1,0.8145) (2,0.8559) (3,0.8594) (4,0.8554) (5,0.8544) (6,0.8328) (7,0.8201) (8,0.7712) (9,0.7235) (10,0.7711) };

\end{axis}
\end{tikzpicture}
\begin{tikzpicture}
\begin{axis}[
height=3.0cm, width=5cm,
title={\textbf{F1@20 / MF}},
xtick={1,2,3,4,5,6,7,8,9,10},
xticklabels={0,1,2,3,4,5,6,7,8,9},
ylabel=Performance,xlabel=Time period,
y tick label style={/pgf/number format/.cd,fixed,fixed zerofill,precision=1,/tikz/.cd},]
\addplot[color=cc,mark=triangle,mark size=1.5pt,line width=0.6pt]
coordinates {(1,0.3269) (2,0.2956) (3,0.2873) (4,0.281) (5,0.2743) (6,0.2712) (7,0.2681) (8,0.2669) (9,0.2675) (10,0.2837) };
\addplot[color=purple,mark=square,mark size=1.5pt,line width=0.6pt]
coordinates {(1,0.3269) (2,0.3171) (3,0.3221) (4,0.3241) (5,0.3209) (6,0.324) (7,0.3274) (8,0.3308) (9,0.331) (10,0.3537) };
\addplot[color=red,mark=o,mark size=1.5pt,line width=0.6pt]
coordinates {(1,0.3269) (2,0.3403) (3,0.3417) (4,0.3392) (5,0.3392) (6,0.329) (7,0.3199) (8,0.2978) (9,0.2992) (10,0.3433) };
\addplot[color=black,mark=square*,mark size=1.5pt,line width=0.6pt]
coordinates {(1,0.3303) (2,0.3203) (3,0.324) (4,0.3267) (5,0.3224) (6,0.3271) (7,0.3327) (8,0.3333) (9,0.3346) (10,0.3521) };
\addplot[color=blue,mark=*,mark size=1.5pt,line width=0.6pt]
coordinates {(1,0.3308) (2,0.3393) (3,0.3414) (4,0.3355) (5,0.3339) (6,0.3295) (7,0.3173) (8,0.2961) (9,0.2934) (10,0.3442) };

\end{axis}
\end{tikzpicture}
\begin{tikzpicture}
\begin{axis}[
height=3.0cm, width=5cm,
title={\textbf{NDCG@20 / NCF}},
xtick={1,2,3,4,5,6,7,8,9,10},
xticklabels={0,1,2,3,4,5,6,7,8,9},
ylabel=Performance,xlabel=Time period,
y tick label style={/pgf/number format/.cd,fixed,fixed zerofill,precision=1,/tikz/.cd},]

\addplot[color=cc,mark=triangle,mark size=1.5pt,line width=0.6pt]
coordinates {(1,0.8192) (2,0.8309) (3,0.8355) (4,0.8353) (5,0.8246) (6,0.8191) (7,0.8167) (8,0.8135) (9,0.8056) (10,0.7942) };
\addplot[color=purple,mark=square,mark size=1.5pt,line width=0.6pt]
coordinates {(1,0.8192) (2,0.8338) (3,0.8404) (4,0.8422) (5,0.8325) (6,0.8275) (7,0.827) (8,0.8242) (9,0.8186) (10,0.8044) };
\addplot[color=red,mark=o,mark size=1.5pt,line width=0.6pt]
coordinates {(1,0.8192) (2,0.8409) (3,0.8485) (4,0.8511) (5,0.8484) (6,0.8456) (7,0.8454) (8,0.8424) (9,0.838) (10,0.8286) };
\addplot[color=black,mark=square*,mark size=1.5pt,line width=0.6pt]
coordinates {(1,0.8382) (2,0.8454) (3,0.8518) (4,0.8513) (5,0.8459) (6,0.842) (7,0.8395) (8,0.8364) (9,0.8325) (10,0.8161) };
\addplot[color=blue,mark=*,mark size=1.5pt,line width=0.6pt]
coordinates {(1,0.8361) (2,0.8439) (3,0.8523) (4,0.853) (5,0.8498) (6,0.8459) (7,0.8462) (8,0.8454) (9,0.8411) (10,0.8294) };

\end{axis}
\end{tikzpicture}
\begin{tikzpicture}
\begin{axis}[
height=3.0cm, width=5cm,
title={\textbf{F1@20 / NCF}},
xtick={1,2,3,4,5,6,7,8,9,10},
xticklabels={0,1,2,3,4,5,6,7,8,9},
ylabel=Performance,xlabel=Time period,
y tick label style={/pgf/number format/.cd,fixed,fixed zerofill,precision=1,/tikz/.cd},]

\addplot[color=cc,mark=triangle,mark size=1.5pt,line width=0.6pt]
coordinates {(1,0.3376) (2,0.3233) (3,0.3257) (4,0.3256) (5,0.3238) (6,0.3263) (7,0.3282) (8,0.3285) (9,0.3285) (10,0.3477) };
\addplot[color=purple,mark=square,mark size=1.5pt,line width=0.6pt]
coordinates {(1,0.3376) (2,0.3256) (3,0.3304) (4,0.332) (5,0.3311) (6,0.3347) (7,0.3374) (8,0.3394) (9,0.3414) (10,0.3594) };
\addplot[color=red,mark=o,mark size=1.5pt,line width=0.6pt]
coordinates {(1,0.3376) (2,0.3313) (3,0.3392) (4,0.3413) (5,0.3422) (6,0.3457) (7,0.3482) (8,0.3491) (9,0.3521) (10,0.3713) };
\addplot[color=black,mark=square*,mark size=1.5pt,line width=0.6pt]
coordinates {(1,0.3463) (2,0.3307) (3,0.3364) (4,0.3366) (5,0.3354) (6,0.3389) (7,0.3401) (8,0.3427) (9,0.3444) (10,0.3619) };
\addplot[color=blue,mark=*,mark size=1.5pt,line width=0.6pt]
coordinates {(1,0.3463) (2,0.3311) (3,0.3394) (4,0.3373) (5,0.3369) (6,0.3384) (7,0.342) (8,0.342) (9,0.3455) (10,0.3621) };

\end{axis}
\end{tikzpicture}
\\
\textbf{(a) Movielenz}\\
\caption{Trend of recommendation performance in Task-R on Movielenz, including subsequent time periods.
}\label{appenfig:full-trend-perf-taskR}
\vspace{-1mm}
\end{figure*}

\begin{figure*}[t]
\footnotesize
\centering
\begin{tikzpicture}
\begin{customlegend}[legend columns=7,legend style={align=left,draw=none,column sep=1ex},
        legend entries={
        \textbf{ApproxNDCG}\text{ (avg. runtime: 3.13s)  }, \textbf{NeuralNDCG}\text{ (avg. runtime: 12.15)  },  \textbf{\ours\ (Ours)}\text{ (avg. runtime: 3.05s)}}]
        \addlegendimage{draw=violet,color=violet,mark=triangle*, only marks} 
        \addlegendimage{draw=black,mark=square*, only marks}   
        \addlegendimage{draw=blue,color=blue,mark=*, only marks} 
        \end{customlegend}
\end{tikzpicture}
\\
\begin{tikzpicture}
\begin{axis}[
height=3.3cm, width=3.1cm,
title={\textbf{MF}},
ylabel=$|{\pd}|$, 
xlabel=NDCG@20,
y tick label style={/pgf/number format/.cd,fixed,fixed zerofill,precision=1,/tikz/.cd},
x tick label style={align=center, /pgf/number format/.cd,fixed,fixed zerofill,precision=2,/tikz/.cd},]
\addplot[only marks, color=violet,mark=triangle*,]
coordinates {(0.840,0.007) };
\addplot[only marks, color=black,mark=square*,]
coordinates {(0.843,0.004) };
\addplot[only marks, color=blue,mark=*,]
coordinates {(0.846,0.004) };
\end{axis}
\end{tikzpicture}
\hspace{-2mm}
\begin{tikzpicture}
\begin{axis}[
height=3.3cm, width=3.1cm,
title={\textbf{MF}},
xlabel=F1@20,
y tick label style={/pgf/number format/.cd,fixed,fixed zerofill,precision=1,/tikz/.cd},
x tick label style={align=center, /pgf/number format/.cd,fixed,fixed zerofill,precision=2,/tikz/.cd},]
\addplot[only marks, color=violet,mark=triangle*,]
coordinates {(0.334,0.013) };
\addplot[only marks, color=black,mark=square*,]
coordinates {(0.327,0.006) };
\addplot[only marks, color=blue,mark=*,]
coordinates {(0.330,0.002) };
\end{axis}
\end{tikzpicture}
\hspace{-2mm}
\begin{tikzpicture}
\begin{axis}[
height=3.3cm, width=3.1cm,
title={\textbf{NCF}},
xlabel=NDCG@20,
y tick label style={/pgf/number format/.cd,fixed,fixed zerofill,precision=1,/tikz/.cd},
x tick label style={align=center, /pgf/number format/.cd,fixed,fixed zerofill,precision=2,/tikz/.cd},]
\addplot[only marks, color=violet,mark=triangle*,]
coordinates {(0.836,0.005) };
\addplot[only marks, color=black,mark=square*,]
coordinates {(0.847,0.007) };
\addplot[only marks, color=blue,mark=*,]
coordinates {(0.848,0.006) };
\end{axis}
\end{tikzpicture}
\hspace{-2mm}
\begin{tikzpicture}
\begin{axis}[
height=3.3cm, width=3.1cm,
title={\textbf{NCF}},
xlabel=F1@20,
y tick label style={/pgf/number format/.cd,fixed,fixed zerofill,precision=1,/tikz/.cd},
x tick label style={align=center, /pgf/number format/.cd,fixed,fixed zerofill,precision=2,/tikz/.cd},]
\addplot[only marks, color=violet,mark=triangle*,]
coordinates {(0.331,0.006) };
\addplot[only marks, color=black,mark=square*,]
coordinates {(0.340,0.004) };
\addplot[only marks, color=blue,mark=*,]
coordinates {(0.339,0.004) };
\end{axis}
\end{tikzpicture}
\hspace{-2mm}
\begin{tikzpicture}
\begin{axis}[
height=3.3cm, width=3.1cm,
title={\textbf{MF}},
xlabel=NDCG@20,
y tick label style={/pgf/number format/.cd,fixed,fixed zerofill,precision=1,/tikz/.cd},
x tick label style={align=center, /pgf/number format/.cd,fixed,fixed zerofill,precision=2,/tikz/.cd},]
\addplot[only marks, color=violet,mark=triangle*,]
coordinates {(0.236,0.017) };
\addplot[only marks, color=black,mark=square*,]
coordinates {(0.269,0.057) };
\addplot[only marks, color=blue,mark=*,]
coordinates {(0.270,0.047) };
\end{axis}
\end{tikzpicture}
\hspace{-2mm}
\begin{tikzpicture}
\begin{axis}[
height=3.3cm, width=3.1cm,
title={\textbf{MF}},
xlabel=F1@20,
xmin=0.068, xmax=0.1,
y tick label style={/pgf/number format/.cd,fixed,fixed zerofill,precision=1,/tikz/.cd},
x tick label style={align=center, /pgf/number format/.cd,fixed,fixed zerofill,precision=2,/tikz/.cd},]
\addplot[only marks, color=violet,mark=triangle*,]
coordinates {(0.070,0.006) };
\addplot[only marks, color=black,mark=square*,]
coordinates {(0.084,0.020) };
\addplot[only marks, color=blue,mark=*,]
coordinates {(0.087,0.020) };
\end{axis}
\end{tikzpicture}
\hspace{-2mm}
\begin{tikzpicture}
\begin{axis}[
height=3.3cm, width=3.1cm,
title={\textbf{NCF}},
xlabel=NDCG@20,
xmin=0.260, xmax=0.290,
y tick label style={/pgf/number format/.cd,fixed,fixed zerofill,precision=1,/tikz/.cd},
x tick label style={align=center, /pgf/number format/.cd,fixed,fixed zerofill,precision=2,/tikz/.cd},]
\addplot[only marks, color=violet,mark=triangle*,]
coordinates {(0.282,0.054) };
\addplot[only marks, color=black,mark=square*,]
coordinates {(0.285,0.062) };
\addplot[only marks, color=blue,mark=*,]
coordinates {(0.272,0.051) };
\end{axis}
\end{tikzpicture}
\hspace{-2mm}
\begin{tikzpicture}
\begin{axis}[
height=3.3cm, width=3.1cm,
title={\textbf{NCF}},
xlabel=F1@20,
y tick label style={/pgf/number format/.cd,fixed,fixed zerofill,precision=1,/tikz/.cd},
x tick label style={align=center, /pgf/number format/.cd,fixed,fixed zerofill,precision=2,/tikz/.cd},]
\addplot[only marks, color=violet,mark=triangle*,]
coordinates {(0.088,0.023) };
\addplot[only marks, color=black,mark=square*,]
coordinates {(0.092,0.027) };
\addplot[only marks, color=blue,mark=*,]
coordinates {(0.090,0.025) };
\end{axis}
\end{tikzpicture}
\\
\textbf{(a) Movielenz} \hspace{7cm} \textbf{(b) ModCloth}
\caption{Trade-off between recommendation performance and absolute performance disparity in Task-R.}
\label{appenfig:trunc-softrank-taskR}
\vspace{-2mm}
\end{figure*}
\begin{figure*}[t]
\footnotesize
\centering
\begin{tikzpicture}
\begin{customlegend}[legend columns=7,legend style={align=left,draw=none,column sep=1ex},
        legend entries={
        \textbf{ApproxNDCG}\text{ (avg. runtime: 3.13s)  }, \textbf{NeuralNDCG}\text{ (avg. runtime: 12.15)  },  \textbf{\ours\ (Ours)}\text{ (avg. runtime: 3.05s)}}]
        \addlegendimage{draw=violet,color=violet,mark=triangle*, only marks} 
        \addlegendimage{draw=black,mark=square*, only marks}   
        \addlegendimage{draw=blue,color=blue,mark=*, only marks} 
        \end{customlegend}
\end{tikzpicture}
\\
\begin{tikzpicture}
\begin{axis}[
height=3.3cm, width=3.1cm,
title={\textbf{MF}},
ylabel=$|{\pd}|$, 
xlabel=NDCG@20,
y tick label style={/pgf/number format/.cd,fixed,fixed zerofill,precision=1,/tikz/.cd},
x tick label style={align=center, /pgf/number format/.cd,fixed,fixed zerofill,precision=2,/tikz/.cd},]
\addplot[only marks, color=violet,mark=triangle*,]
coordinates {(0.734,0.025) };
\addplot[only marks, color=black,mark=square*,]
coordinates {(0.741,0.024) };
\addplot[only marks, color=blue,mark=*,]
coordinates {(0.740,0.025) };
\end{axis}
\end{tikzpicture}
\hspace{-2mm}
\begin{tikzpicture}
\begin{axis}[
height=3.3cm, width=3.1cm,
title={\textbf{MF}},
xlabel=F1@20,
y tick label style={/pgf/number format/.cd,fixed,fixed zerofill,precision=1,/tikz/.cd},
x tick label style={align=center, /pgf/number format/.cd,fixed,fixed zerofill,precision=2,/tikz/.cd},]
\addplot[only marks, color=violet,mark=triangle*,]
coordinates {(0.290,0.019) };
\addplot[only marks, color=black,mark=square*,]
coordinates {(0.282,0.021) };
\addplot[only marks, color=blue,mark=*,]
coordinates {(0.292,0.021) };
\end{axis}
\end{tikzpicture}
\hspace{-2mm}
\begin{tikzpicture}
\begin{axis}[
height=3.3cm, width=3.1cm,
title={\textbf{NCF}},
xlabel=NDCG@20,
y tick label style={/pgf/number format/.cd,fixed,fixed zerofill,precision=1,/tikz/.cd},
x tick label style={align=center, /pgf/number format/.cd,fixed,fixed zerofill,precision=2,/tikz/.cd},]
\addplot[only marks, color=violet,mark=triangle*,]
coordinates {(0.730,0.026) };
\addplot[only marks, color=black,mark=square*,]
coordinates {(0.739,0.026) };
\addplot[only marks, color=blue,mark=*,]
coordinates {(0.740,0.025) };
\end{axis}
\end{tikzpicture}
\hspace{-2mm}
\begin{tikzpicture}
\begin{axis}[
height=3.3cm, width=3.1cm,
title={\textbf{NCF}},
xlabel=F1@20,
y tick label style={/pgf/number format/.cd,fixed,fixed zerofill,precision=1,/tikz/.cd},
x tick label style={align=center, /pgf/number format/.cd,fixed,fixed zerofill,precision=2,/tikz/.cd},]
\addplot[only marks, color=violet,mark=triangle*,]
coordinates {(0.302,0.020) };
\addplot[only marks, color=black,mark=square*,]
coordinates {(0.303,0.020) };
\addplot[only marks, color=blue,mark=*,]
coordinates {(0.303,0.020) };
\end{axis}
\end{tikzpicture}
\hspace{-2mm}
\begin{tikzpicture}
\begin{axis}[
height=3.3cm, width=3.1cm,
title={\textbf{MF}},
xlabel=NDCG@20,
y tick label style={/pgf/number format/.cd,fixed,fixed zerofill,precision=1,/tikz/.cd},
x tick label style={align=center, /pgf/number format/.cd,fixed,fixed zerofill,precision=2,/tikz/.cd},]
\addplot[only marks, color=violet,mark=triangle*,]
coordinates {(0.161,0.020) };
\addplot[only marks, color=black,mark=square*,]
coordinates {(0.188,0.033) };
\addplot[only marks, color=blue,mark=*,]
coordinates {(0.179,0.028) };
\end{axis}
\end{tikzpicture}
\hspace{-2mm}
\begin{tikzpicture}
\begin{axis}[
height=3.3cm, width=3.1cm,
title={\textbf{MF}},
xlabel=F1@20,
y tick label style={/pgf/number format/.cd,fixed,fixed zerofill,precision=1,/tikz/.cd},
x tick label style={align=center, /pgf/number format/.cd,fixed,fixed zerofill,precision=2,/tikz/.cd},]
\addplot[only marks, color=violet,mark=triangle*,]
coordinates {(0.044,0.003) };
\addplot[only marks, color=black,mark=square*,]
coordinates {(0.050,0.011) };
\addplot[only marks, color=blue,mark=*,]
coordinates {(0.050,0.010) };
\end{axis}
\end{tikzpicture}
\hspace{-2mm}
\begin{tikzpicture}
\begin{axis}[
height=3.3cm, width=3.1cm,
title={\textbf{NCF}},
xlabel=NDCG@20,
y tick label style={/pgf/number format/.cd,fixed,fixed zerofill,precision=1,/tikz/.cd},
x tick label style={align=center, /pgf/number format/.cd,fixed,fixed zerofill,precision=2,/tikz/.cd},]
\addplot[only marks, color=violet,mark=triangle*,]
coordinates {(0.185,0.033) };
\addplot[only marks, color=black,mark=square*,]
coordinates {(0.186,0.041) };
\addplot[only marks, color=blue,mark=*,]
coordinates {(0.176,0.036) };
\end{axis}
\end{tikzpicture}
\hspace{-2mm}
\begin{tikzpicture}
\begin{axis}[
height=3.3cm, width=3.1cm,
title={\textbf{NCF}},
xlabel=F1@20,
y tick label style={/pgf/number format/.cd,fixed,fixed zerofill,precision=1,/tikz/.cd},
x tick label style={align=center, /pgf/number format/.cd,fixed,fixed zerofill,precision=2,/tikz/.cd},]
\addplot[only marks, color=violet,mark=triangle*,]
coordinates {(0.048,0.010) };
\addplot[only marks, color=black,mark=square*,]
coordinates {(0.049,0.013) };
\addplot[only marks, color=blue,mark=*,]
coordinates {(0.048,0.012) };
\end{axis}
\end{tikzpicture}
\\
\textbf{(a) Movielenz} \hspace{7cm} \textbf{(b) ModCloth}
\caption{Trade-off between recommendation performance and absolute performance disparity in Task-N.}
\label{appenfig:trunc-softrank-taskN}
\vspace{-2mm}
\end{figure*}

\begin{figure*}[t]
\footnotesize
\centering
\begin{tikzpicture}
\begin{customlegend}[legend columns=5,legend style={align=left,draw=none,column sep=1ex},
        legend entries={\textbf{Advantaged group}\text{  }, \textbf{Disadvantaged group}}]
        \addlegendimage{draw=black,mark=square*}   
        \addlegendimage{draw=blue,color=blue,mark=*} 
        \end{customlegend}
\end{tikzpicture}\\
\begin{tikzpicture}
\begin{axis}[
height=3cm, width=5cm,
title={\textbf{NDCG@20 / MF}},
xtick={1,2,3,4,5,6,7,8,9,10,11},
xticklabels={0,0.1,0.3,0.5,0.8,1.0,1.5,2.0,2.5,3.0,4.0},
ylabel=Performance,xlabel=$\lambda$,
y tick label style={/pgf/number format/.cd,fixed,fixed zerofill,precision=2,/tikz/.cd},]

\addplot[color=black,mark=square*,mark size=1.5pt,line width=0.6pt]
coordinates {(1,0.845) (2,0.845) (3,0.846) (4,0.847) (5,0.843) (6,0.841) (7,0.836) (8,0.825) (9,0.808) (10,0.793)  };
\addplot[color=blue,mark=*,mark size=1.5pt,line width=0.6pt]
coordinates {(1,0.834) (2,0.835) (3,0.839) (4,0.844) (5,0.846) (6,0.848) (7,0.846) (8,0.848) (9,0.851) (10,0.850) };

\end{axis}
\end{tikzpicture}
\begin{tikzpicture}
\begin{axis}[
height=3cm, width=5cm,
title={\textbf{F1@20 / MF}},
xtick={1,2,3,4,5,6,7,8,9,10,11},
xticklabels={0,0.1,0.3,0.5,0.8,1.0,1.5,2.0,2.5,3.0,4.0},
ylabel=Performance,xlabel=$\lambda$,
y tick label style={/pgf/number format/.cd,fixed,fixed zerofill,precision=2,/tikz/.cd},]

\addplot[color=black,mark=square*,mark size=1.5pt,line width=0.6pt]
coordinates {(1,0.339) (2,0.339) (3,0.339) (4,0.339) (5,0.336) (6,0.334) (7,0.330) (8,0.324) (9,0.313) (10,0.304)  };
\addplot[color=blue,mark=*,mark size=1.5pt,line width=0.6pt]
coordinates {(1,0.322) (2,0.322) (3,0.325) (4,0.326) (5,0.329) (6,0.330) (7,0.330) (8,0.330) (9,0.332) (10,0.330) };

\end{axis}
\end{tikzpicture}
\begin{tikzpicture}
\begin{axis}[
height=3cm, width=5cm,
title={\textbf{NDCG@20 / NCF}},
xtick={1,2,3,4,5,6,7,8,9,10,11},
xticklabels={0,0.1,0.3,0.5,0.8,1.0,1.5,2.0,2.5,3.0,4.0},
ylabel=Performance,xlabel=$\lambda$,
y tick label style={/pgf/number format/.cd,fixed,fixed zerofill,precision=2,/tikz/.cd},]

\addplot[color=black,mark=square*,mark size=1.5pt,line width=0.6pt]
coordinates {(1,0.850) (2,0.851) (3,0.851) (4,0.851) (5,0.851) (6,0.851) (7,0.850) (8,0.848) (9,0.846) (10,0.844) };
\addplot[color=blue,mark=*,mark size=1.5pt,line width=0.6pt]
coordinates {(1,0.837) (2,0.839) (3,0.840) (4,0.841) (5,0.842) (6,0.843) (7,0.845) (8,0.847) (9,0.850) (10,0.850)  };

\end{axis}
\end{tikzpicture}
\begin{tikzpicture}
\begin{axis}[
height=3cm, width=5cm,
title={\textbf{F1@20 / NCF}},
xtick={1,2,3,4,5,6,7,8,9,10,11},
xticklabels={0,0.1,0.3,0.5,0.8,1.0,1.5,2.0,2.5,3.0,4.0},
ylabel=Performance,xlabel=$\lambda$,
y tick label style={/pgf/number format/.cd,fixed,fixed zerofill,precision=2,/tikz/.cd},]

\addplot[color=black,mark=square*,mark size=1.5pt,line width=0.6pt]
coordinates {(1,0.346) (2,0.346) (3,0.347) (4,0.346) (5,0.346) (6,0.345) (7,0.343) (8,0.341) (9,0.340) (10,0.337)  };
\addplot[color=blue,mark=*,mark size=1.5pt,line width=0.6pt]
coordinates {(1,0.327) (2,0.327) (3,0.329) (4,0.330) (5,0.331) (6,0.332) (7,0.334) (8,0.336) (9,0.337) (10,0.338)};

\end{axis}
\end{tikzpicture}
\\
\textbf{(a) Movielenz}\\
\begin{tikzpicture}
\begin{axis}[
height=3cm, width=5cm,
title={\textbf{NDCG@20 / MF}},
xtick={1,2,3,4,5,6,7,8,9,10,11},
xticklabels={0,0.1,0.3,0.5,0.8,1.0,1.5,2.0,2.5,3.0,4.0},
ylabel=Performance,xlabel=$\lambda$,
y tick label style={/pgf/number format/.cd,fixed,fixed zerofill,precision=2,/tikz/.cd},]

\addplot[color=black,mark=square*,mark size=1.5pt,line width=0.6pt]
coordinates {(1,0.298) (2,0.298) (3,0.296) (4,0.294) (5,0.293) (6,0.290) (7,0.286) (8,0.281) (9,0.279) (10,0.281)  };
\addplot[color=blue,mark=*,mark size=1.5pt,line width=0.6pt]
coordinates {(1,0.225) (2,0.229) (3,0.228) (4,0.229) (5,0.228) (6,0.229) (7,0.226) (8,0.229) (9,0.230) (10,0.232) };

\end{axis}
\end{tikzpicture}
\begin{tikzpicture}
\begin{axis}[
height=3cm, width=5cm,
title={\textbf{F1@20 / MF}},
xtick={1,2,3,4,5,6,7,8,9,10,11},
xticklabels={0,0.1,0.3,0.5,0.8,1.0,1.5,2.0,2.5,3.0,4.0},
ylabel=Performance,xlabel=$\lambda$,
y tick label style={/pgf/number format/.cd,fixed,fixed zerofill,precision=2,/tikz/.cd},]
\addplot[color=black,mark=square*,mark size=1.5pt,line width=0.6pt]
coordinates {(1,0.098) (2,0.098) (3,0.097) (4,0.097) (5,0.096) (6,0.094) (7,0.092) (8,0.090) (9,0.089) (10,0.090) };
\addplot[color=blue,mark=*,mark size=1.5pt,line width=0.6pt]
coordinates {(1,0.069) (2,0.070) (3,0.070) (4,0.070) (5,0.070) (6,0.071) (7,0.069) (8,0.069) (9,0.069) (10,0.069)  };

\end{axis}
\end{tikzpicture}
\begin{tikzpicture}
\begin{axis}[
height=3cm, width=5cm,
title={\textbf{NDCG@20 / NCF}},
xtick={1,2,3,4,5,6,7,8,9,10,11},
xticklabels={0,0.1,0.3,0.5,0.8,1.0,1.5,2.0,2.5,3.0,4.0},
ylabel=Performance,xlabel=$\lambda$,
y tick label style={/pgf/number format/.cd,fixed,fixed zerofill,precision=2,/tikz/.cd},]
\addplot[color=black,mark=square*,mark size=1.5pt,line width=0.6pt]
coordinates {(1,0.297) (2,0.297) (3,0.299) (4,0.300) (5,0.299) (6,0.298) (7,0.295) (8,0.293) (9,0.290) (10,0.287)  };
\addplot[color=blue,mark=*,mark size=1.5pt,line width=0.6pt]
coordinates {(1,0.223) (2,0.225) (3,0.229) (4,0.232) (5,0.235) (6,0.236) (7,0.235) (8,0.233) (9,0.232) (10,0.231) };

\end{axis}
\end{tikzpicture}
\begin{tikzpicture}
\begin{axis}[
height=3cm, width=5cm,
title={\textbf{F1@20 / NCF}},
xtick={1,2,3,4,5,6,7,8,9,10,11},
xticklabels={0,0.1,0.3,0.5,0.8,1.0,1.5,2.0,2.5,3.0,4.0},
ylabel=Performance,xlabel=$\lambda$,
y tick label style={/pgf/number format/.cd,fixed,fixed zerofill,precision=2,/tikz/.cd},]
\addplot[color=black,mark=square*,mark size=1.5pt,line width=0.6pt]
coordinates {(1,0.097) (2,0.097) (3,0.097) (4,0.097) (5,0.097) (6,0.097) (7,0.096) (8,0.096) (9,0.095) (10,0.095) };
\addplot[color=blue,mark=*,mark size=1.5pt,line width=0.6pt]
coordinates {(1,0.069) (2,0.069) (3,0.069) (4,0.069) (5,0.070) (6,0.070) (7,0.070) (8,0.069) (9,0.069) (10,0.069)  };

\end{axis}
\end{tikzpicture}
\\
\textbf{(b) ModCloth}\\
\caption{The effect of the scaling factor $\lambda$ on the recommendation performances of the advantaged and disadvantaged groups in Task-R.
}\label{appenfig:trunc-param-lambda-taskR}
\vspace{-1mm}
\end{figure*}
\begin{figure*}[t]
\footnotesize
\centering
\begin{tikzpicture}
\begin{customlegend}[legend columns=5,legend style={align=left,draw=none,column sep=1ex},
        legend entries={\textbf{Advantaged group}\text{  }, \textbf{Disadvantaged group}}]
        \addlegendimage{draw=black,mark=square*}   
        \addlegendimage{draw=blue,color=blue,mark=*} 
        \end{customlegend}
\end{tikzpicture}\\
\begin{tikzpicture}
\begin{axis}[
height=3cm, width=5cm,
title={\textbf{NDCG@20 / MF}},
xtick={1,2,3,4,5,6,7,8,9,10,11},
xticklabels={1,5,10,15,20,25,30,35,40,45,50},
ylabel=Performance,xlabel=The number of epochs,
y tick label style={/pgf/number format/.cd,fixed,fixed zerofill,precision=2,/tikz/.cd},]

\addplot[color=black,mark=square*,mark size=1.5pt,line width=0.6pt]
coordinates {(1,0.839) (2,0.851) (3,0.841) (4,0.795) (5,0.762) (6,0.732) (7,0.723) (8,0.717) (9,0.713) (10,0.705) (11,0.708) };
\addplot[color=blue,mark=*,mark size=1.5pt,line width=0.6pt]
coordinates {(1,0.830) (2,0.851) (3,0.848) (4,0.805) (5,0.764) (6,0.740) (7,0.728) (8,0.710) (9,0.708) (10,0.713) (11,0.705) };

\end{axis}
\end{tikzpicture}
\begin{tikzpicture}
\begin{axis}[
height=3cm, width=5cm,
title={\textbf{F1@20 / MF}},
xtick={1,2,3,4,5,6,7,8,9,10,11},
xticklabels={1,5,10,15,20,25,30,35,40,45,50},
ylabel=Performance,xlabel=The number of epochs,
y tick label style={/pgf/number format/.cd,fixed,fixed zerofill,precision=2,/tikz/.cd},]

\addplot[color=black,mark=square*,mark size=1.5pt,line width=0.6pt]
coordinates {(1,0.328) (2,0.341) (3,0.334) (4,0.306) (5,0.285) (6,0.266) (7,0.258) (8,0.250) (9,0.246) (10,0.242) (11,0.243) };
\addplot[color=blue,mark=*,mark size=1.5pt,line width=0.6pt]
coordinates {(1,0.313) (2,0.332) (3,0.330) (4,0.307) (5,0.279) (6,0.264) (7,0.253) (8,0.243) (9,0.241) (10,0.244) (11,0.240) };

\end{axis}
\end{tikzpicture}
\begin{tikzpicture}
\begin{axis}[
height=3cm, width=5cm,
title={\textbf{NDCG@20 / NCF}},
xtick={1,2,3,4,5,6,7,8,9,10,11},
xticklabels={1,5,10,15,20,25,30,35,40,45,50},
ylabel=Performance,xlabel=The number of epochs,
y tick label style={/pgf/number format/.cd,fixed,fixed zerofill,precision=2,/tikz/.cd},]

\addplot[color=black,mark=square*,mark size=1.5pt,line width=0.6pt]
coordinates {(1,0.841) (2,0.846) (3,0.851) (4,0.854) (5,0.855) (6,0.856) (7,0.857) (8,0.857) (9,0.857) (10,0.857) (11,0.857) };
\addplot[color=blue,mark=*,mark size=1.5pt,line width=0.6pt]
coordinates {(1,0.830) (2,0.837) (3,0.843) (4,0.848) (5,0.851) (6,0.852) (7,0.853) (8,0.854) (9,0.854) (10,0.855) (11,0.854) };

\end{axis}
\end{tikzpicture}
\begin{tikzpicture}
\begin{axis}[
height=3cm, width=5cm,
title={\textbf{F1@20 / NCF}},
xtick={1,2,3,4,5,6,7,8,9,10,11},
xticklabels={1,5,10,15,20,25,30,35,40,45,50},
ylabel=Performance,xlabel=The number of epochs,
y tick label style={/pgf/number format/.cd,fixed,fixed zerofill,precision=2,/tikz/.cd},]

\addplot[color=black,mark=square*,mark size=1.5pt,line width=0.6pt]
coordinates {(1,0.335) (2,0.341) (3,0.345) (4,0.347) (5,0.348) (6,0.348) (7,0.348) (8,0.348) (9,0.348) (10,0.348) (11,0.348) };
\addplot[color=blue,mark=*,mark size=1.5pt,line width=0.6pt]
coordinates {(1,0.321) (2,0.328) (3,0.332) (4,0.334) (5,0.335) (6,0.336) (7,0.336) (8,0.337) (9,0.337) (10,0.337) (11,0.337) };

\end{axis}
\end{tikzpicture}
\\
\textbf{(a) Movielenz}\\
\begin{tikzpicture}
\begin{axis}[
height=3cm, width=5cm,
title={\textbf{NDCG@20 / MF}},
xtick={1,2,3,4,5,6,7,8,9,10,11},
xticklabels={1,5,10,15,20,25,30,35,40,45,50},
ylabel=Performance,xlabel=The number of epochs,
y tick label style={/pgf/number format/.cd,fixed,fixed zerofill,precision=2,/tikz/.cd},]

\addplot[color=black,mark=square*,mark size=1.5pt,line width=0.6pt]
coordinates {(1,0.294) (2,0.289) (3,0.290) (4,0.292) (5,0.288) (6,0.287) (7,0.283) (8,0.280) (9,0.278) (10,0.277) (11,0.276) };
\addplot[color=blue,mark=*,mark size=1.5pt,line width=0.6pt]
coordinates {(1,0.219) (2,0.225) (3,0.229) (4,0.234) (5,0.236) (6,0.237) (7,0.237) (8,0.233) (9,0.233) (10,0.228) (11,0.226) };

\end{axis}
\end{tikzpicture}
\begin{tikzpicture}
\begin{axis}[
height=3cm, width=5cm,
title={\textbf{F1@20 / MF}},
xtick={1,2,3,4,5,6,7,8,9,10,11},
xticklabels={1,5,10,15,20,25,30,35,40,45,50},
ylabel=Performance,xlabel=The number of epochs,
y tick label style={/pgf/number format/.cd,fixed,fixed zerofill,precision=2,/tikz/.cd},]
\addplot[color=black,mark=square*,mark size=1.5pt,line width=0.6pt]
coordinates {(1,0.096) (2,0.096) (3,0.094) (4,0.093) (5,0.091) (6,0.090) (7,0.089) (8,0.087) (9,0.086) (10,0.085) (11,0.084) };
\addplot[color=blue,mark=*,mark size=1.5pt,line width=0.6pt]
coordinates {(1,0.068) (2,0.069) (3,0.071) (4,0.071) (5,0.070) (6,0.069) (7,0.068) (8,0.067) (9,0.067) (10,0.065) (11,0.065) };
\end{axis}
\end{tikzpicture}
\begin{tikzpicture}
\begin{axis}[
height=3cm, width=5cm,
title={\textbf{NDCG@20 / NCF}},
xtick={1,2,3,4,5,6,7,8,9,10,11},
xticklabels={1,5,10,15,20,25,30,35,40,45,50},
ylabel=Performance,xlabel=The number of epochs,
y tick label style={/pgf/number format/.cd,fixed,fixed zerofill,precision=2,/tikz/.cd},]
\addplot[color=black,mark=square*,mark size=1.5pt,line width=0.6pt]
coordinates {(1,0.297) (2,0.297) (3,0.298) (4,0.299) (5,0.299) (6,0.299) (7,0.300) (8,0.300) (9,0.300) (10,0.300) (11,0.301) };
\addplot[color=blue,mark=*,mark size=1.5pt,line width=0.6pt]
coordinates {(1,0.232) (2,0.234) (3,0.236) (4,0.237) (5,0.237) (6,0.237) (7,0.238) (8,0.238) (9,0.239) (10,0.239) (11,0.240) };

\end{axis}
\end{tikzpicture}
\begin{tikzpicture}
\begin{axis}[
height=3cm, width=5cm,
title={\textbf{F1@20 / NCF}},
xtick={1,2,3,4,5,6,7,8,9,10,11},
xticklabels={1,5,10,15,20,25,30,35,40,45,50},
ylabel=Performance,xlabel=The number of epochs,
y tick label style={/pgf/number format/.cd,fixed,fixed zerofill,precision=2,/tikz/.cd},]

\addplot[color=black,mark=square*,mark size=1.5pt,line width=0.6pt]
coordinates {(1,0.096) (2,0.096) (3,0.097) (4,0.097) (5,0.097) (6,0.097) (7,0.098) (8,0.098) (9,0.098) (10,0.099) (11,0.099) };
\addplot[color=blue,mark=*,mark size=1.5pt,line width=0.6pt]
coordinates {(1,0.069) (2,0.069) (3,0.070) (4,0.070) (5,0.071) (6,0.071) (7,0.071) (8,0.071) (9,0.071) (10,0.071) (11,0.072) };

\end{axis}
\end{tikzpicture}
\\
\textbf{(b) ModCloth}\\

\caption{The effect of the number of dynamic update epochs on the recommendation performances of the advantaged and disadvantaged groups in Task-R.
}\label{appenfig:trunc-param-tepoch-taskR}
\vspace{-1mm}
\end{figure*}
\begin{figure*}[t]
\footnotesize
\centering
\begin{tikzpicture}
\begin{customlegend}[legend columns=5,legend style={align=left,draw=none,column sep=1ex},
        legend entries={\textbf{Advantaged group}\text{  }, \textbf{Disadvantaged group}}]
        \addlegendimage{draw=black,mark=square*}   
        \addlegendimage{draw=blue,color=blue,mark=*} 
        \end{customlegend}
\end{tikzpicture}\\
\begin{tikzpicture}
\begin{axis}[
height=3cm, width=5cm,
title={\textbf{NDCG@20 / MF}},
xtick={1,2,3,4,5,6,7,8,9,10,11},
xticklabels={0.1,0.5,1.0,2.0,3.0,4.0,5.0},
ylabel=Performance,xlabel=Tau $\tau$,
y tick label style={/pgf/number format/.cd,fixed,fixed zerofill,precision=2,/tikz/.cd},]

\addplot[color=black,mark=square*,mark size=1.5pt,line width=0.6pt]
coordinates {(1,0.361) (2,0.666) (3,0.809) (4,0.839) (5,0.841) (6,0.845) (7,0.846) };
\addplot[color=blue,mark=*,mark size=1.5pt,line width=0.6pt]
coordinates {(1,0.855) (2,0.849) (3,0.851) (4,0.849) (5,0.848) (6,0.844) (7,0.842) };

\end{axis}
\end{tikzpicture}
\begin{tikzpicture}
\begin{axis}[
height=3cm, width=5cm,
title={\textbf{F1@20 / MF}},
xtick={1,2,3,4,5,6,7,8,9,10,11},
xticklabels={0.1,0.5,1.0,2.0,3.0,4.0,5.0},
ylabel=Performance,xlabel=Tau $\tau$,
y tick label style={/pgf/number format/.cd,fixed,fixed zerofill,precision=2,/tikz/.cd},]

\addplot[color=black,mark=square*,mark size=1.5pt,line width=0.6pt]
coordinates {(1,0.079) (2,0.236) (3,0.314) (4,0.331) (5,0.334) (6,0.337) (7,0.338) };
\addplot[color=blue,mark=*,mark size=1.5pt,line width=0.6pt]
coordinates {(1,0.343) (2,0.334) (3,0.330) (4,0.330) (5,0.330) (6,0.327) (7,0.326) };

\end{axis}
\end{tikzpicture}
\begin{tikzpicture}
\begin{axis}[
height=3cm, width=5cm,
title={\textbf{NDCG@20 / NCF}},
xtick={1,2,3,4,5,6,7,8,9,10,11},
xticklabels={0.1,0.5,1.0,2.0,3.0,4.0,5.0},
ylabel=Performance,xlabel=Tau $\tau$,
y tick label style={/pgf/number format/.cd,fixed,fixed zerofill,precision=2,/tikz/.cd},]

\addplot[color=black,mark=square*,mark size=1.5pt,line width=0.6pt]
coordinates {(1,0.839) (2,0.844) (3,0.847) (4,0.850) (5,0.851) (6,0.851) (7,0.851) };
\addplot[color=blue,mark=*,mark size=1.5pt,line width=0.6pt]
coordinates {(1,0.836) (2,0.842) (3,0.845) (4,0.845) (5,0.843) (6,0.842) (7,0.841) };
\end{axis}
\end{tikzpicture}
\begin{tikzpicture}
\begin{axis}[
height=3cm, width=5cm,
title={\textbf{F1@20 / NCF}},
xtick={1,2,3,4,5,6,7,8,9,10,11},
xticklabels={0.1,0.5,1.0,2.0,3.0,4.0,5.0},
ylabel=Performance,xlabel=Tau $\tau$,
y tick label style={/pgf/number format/.cd,fixed,fixed zerofill,precision=2,/tikz/.cd},]

\addplot[color=black,mark=square*,mark size=1.5pt,line width=0.6pt]
coordinates {(1,0.319) (2,0.336) (3,0.340) (4,0.344) (5,0.345) (6,0.346) (7,0.346) };
\addplot[color=blue,mark=*,mark size=1.5pt,line width=0.6pt]
coordinates {(1,0.333) (2,0.338) (3,0.337) (4,0.334) (5,0.332) (6,0.331) (7,0.330) };

\end{axis}
\end{tikzpicture}
\\
\textbf{(a) Movielenz}\\
\begin{tikzpicture}
\begin{axis}[
height=3cm, width=5cm,
title={\textbf{NDCG@20 / MF}},
xtick={1,2,3,4,5,6,7,8,9,10,11},
xticklabels={0.1,0.5,1.0,2.0,3.0,4.0,5.0},
ylabel=Performance,xlabel=Tau $\tau$,
y tick label style={/pgf/number format/.cd,fixed,fixed zerofill,precision=2,/tikz/.cd},]

\addplot[color=black,mark=square*,mark size=1.5pt,line width=0.6pt]
coordinates {(1,0.217) (2,0.288) (3,0.288) (4,0.286) (5,0.290) (6,0.292) (7,0.294) };
\addplot[color=blue,mark=*,mark size=1.5pt,line width=0.6pt]
coordinates {(1,0.226) (2,0.232) (3,0.232) (4,0.229) (5,0.229) (6,0.227) (7,0.229) };

\end{axis}
\end{tikzpicture}
\begin{tikzpicture}
\begin{axis}[
height=3cm, width=5cm,
title={\textbf{F1@20 / MF}},
xtick={1,2,3,4,5,6,7,8,9,10,11},
xticklabels={0.1,0.5,1.0,2.0,3.0,4.0,5.0},
ylabel=Performance,xlabel=Tau $\tau$,
y tick label style={/pgf/number format/.cd,fixed,fixed zerofill,precision=2,/tikz/.cd},]
\addplot[color=black,mark=square*,mark size=1.5pt,line width=0.6pt]
coordinates {(1,0.072) (2,0.093) (3,0.092) (4,0.093) (5,0.094) (6,0.095) (7,0.096) };
\addplot[color=blue,mark=*,mark size=1.5pt,line width=0.6pt]
coordinates {(1,0.067) (2,0.070) (3,0.070) (4,0.069) (5,0.071) (6,0.070) (7,0.070) };
\end{axis}
\end{tikzpicture}
\begin{tikzpicture}
\begin{axis}[
height=3cm, width=5cm,
title={\textbf{NDCG@20 / NCF}},
xtick={1,2,3,4,5,6,7,8,9,10,11},
xticklabels={0.1,0.5,1.0,2.0,3.0,4.0,5.0},
ylabel=Performance,xlabel=Tau $\tau$,
y tick label style={/pgf/number format/.cd,fixed,fixed zerofill,precision=2,/tikz/.cd},]
\addplot[color=black,mark=square*,mark size=1.5pt,line width=0.6pt]
coordinates {(1,0.294) (2,0.294) (3,0.296) (4,0.297) (5,0.298) (6,0.299) (7,0.299) };
\addplot[color=blue,mark=*,mark size=1.5pt,line width=0.6pt]
coordinates {(1,0.239) (2,0.234) (3,0.235) (4,0.236) (5,0.236) (6,0.235) (7,0.233) };

\end{axis}
\end{tikzpicture}
\begin{tikzpicture}
\begin{axis}[
height=3cm, width=5cm,
title={\textbf{F1@20 / NCF}},
xtick={1,2,3,4,5,6,7,8,9,10,11},
xticklabels={0.1,0.5,1.0,2.0,3.0,4.0,5.0},
ylabel=Performance,xlabel=Tau $\tau$,
y tick label style={/pgf/number format/.cd,fixed,fixed zerofill,precision=2,/tikz/.cd},]

\addplot[color=black,mark=square*,mark size=1.5pt,line width=0.6pt]
coordinates {(1,0.093) (2,0.096) (3,0.096) (4,0.096) (5,0.097) (6,0.097) (7,0.097) };
\addplot[color=blue,mark=*,mark size=1.5pt,line width=0.6pt]
coordinates {(1,0.069) (2,0.069) (3,0.070) (4,0.070) (5,0.070) (6,0.070) (7,0.069) };

\end{axis}
\end{tikzpicture}
\\
\textbf{(b) ModCloth}\\

\caption{The effect of the hyperparameter $\tau$ in our Differentiable Hit (DH) on the recommendation performances of the advantaged and disadvantaged groups in Task-R.
}\label{appenfig:trunc-param-tau-taskR}
\vspace{-1mm}
\end{figure*}
\begin{figure*}[t]
\footnotesize
\centering
\begin{tikzpicture}
\begin{customlegend}[legend columns=5,legend style={align=left,draw=none,column sep=1ex},
        legend entries={\textbf{Advantaged group}\text{  }, \textbf{Disadvantaged group}}]
        \addlegendimage{draw=black,mark=square*}   
        \addlegendimage{draw=blue,color=blue,mark=*} 
        \end{customlegend}
\end{tikzpicture}\\
\begin{tikzpicture}
\begin{axis}[
height=3cm, width=5cm,
title={\textbf{NDCG@20 / MF}},
xtick={1,2,3,4,5,6,7,8,9,10,11},
xticklabels={4,8,12,16,20,24,28,32,36,40},
ylabel=Performance,xlabel=The number of negative items $\mu$,
y tick label style={/pgf/number format/.cd,fixed,fixed zerofill,precision=2,/tikz/.cd},]
\addplot[color=black,mark=square*,mark size=1.5pt,line width=0.6pt]
coordinates {(1,0.841) (2,0.841) (3,0.841) (4,0.841) (5,0.840) (6,0.840) (7,0.844) (8,0.843) (9,0.842) (10,0.840) };
\addplot[color=blue,mark=*,mark size=1.5pt,line width=0.6pt]
coordinates {(1,0.848) (2,0.845) (3,0.847) (4,0.845) (5,0.849) (6,0.846) (7,0.847) (8,0.850) (9,0.846) (10,0.845) };

\end{axis}
\end{tikzpicture}
\begin{tikzpicture}
\begin{axis}[
height=3cm, width=5cm,
title={\textbf{F1@20 / MF}},
xtick={1,2,3,4,5,6,7,8,9,10,11},
xticklabels={4,8,12,16,20,24,28,32,36,40},
ylabel=Performance,xlabel=The number of negative items $\mu$,
y tick label style={/pgf/number format/.cd,fixed,fixed zerofill,precision=2,/tikz/.cd},]

\addplot[color=black,mark=square*,mark size=1.5pt,line width=0.6pt]
coordinates {(1,0.334) (2,0.334) (3,0.335) (4,0.335) (5,0.334) (6,0.334) (7,0.336) (8,0.334) (9,0.333) (10,0.334) };
\addplot[color=blue,mark=*,mark size=1.5pt,line width=0.6pt]
coordinates {(1,0.330) (2,0.329) (3,0.330) (4,0.330) (5,0.330) (6,0.330) (7,0.332) (8,0.330) (9,0.331) (10,0.330) };

\end{axis}
\end{tikzpicture}
\begin{tikzpicture}
\begin{axis}[
height=3cm, width=5cm,
title={\textbf{NDCG@20 / NCF}},
xtick={1,2,3,4,5,6,7,8,9,10,11},
xticklabels={4,8,12,16,20,24,28,32,36,40},
ylabel=Performance,xlabel=The number of negative items $\mu$,
y tick label style={/pgf/number format/.cd,fixed,fixed zerofill,precision=2,/tikz/.cd},]

\addplot[color=black,mark=square*,mark size=1.5pt,line width=0.6pt]
coordinates {(1,0.851) (2,0.850) (3,0.851) (4,0.850) (5,0.851) (6,0.850) (7,0.851) (8,0.851) (9,0.849) (10,0.849) };
\addplot[color=blue,mark=*,mark size=1.5pt,line width=0.6pt]
coordinates {(1,0.843) (2,0.842) (3,0.845) (4,0.842) (5,0.847) (6,0.844) (7,0.845) (8,0.844) (9,0.845) (10,0.844) };
\end{axis}
\end{tikzpicture}
\begin{tikzpicture}
\begin{axis}[
height=3cm, width=5cm,
title={\textbf{F1@20 / NCF}},
xtick={1,2,3,4,5,6,7,8,9,10,11},
xticklabels={4,8,12,16,20,24,28,32,36,40},
ylabel=Performance,xlabel=The number of negative items $\mu$,
y tick label style={/pgf/number format/.cd,fixed,fixed zerofill,precision=2,/tikz/.cd},]

\addplot[color=black,mark=square*,mark size=1.5pt,line width=0.6pt]
coordinates {(1,0.345) (2,0.344) (3,0.345) (4,0.345) (5,0.344) (6,0.344) (7,0.344) (8,0.344) (9,0.344) (10,0.344) };
\addplot[color=blue,mark=*,mark size=1.5pt,line width=0.6pt]
coordinates {(1,0.332) (2,0.332) (3,0.334) (4,0.334) (5,0.335) (6,0.335) (7,0.336) (8,0.335) (9,0.335) (10,0.334) };

\end{axis}
\end{tikzpicture}
\\
\textbf{(a) Movielenz}\\
\begin{tikzpicture}
\begin{axis}[
height=3cm, width=5cm,
title={\textbf{NDCG@20 / MF}},
xtick={1,2,3,4,5,6,7,8,9,10,11},
xticklabels={4,8,12,16,20,24,28,32,36,40},
ylabel=Performance,xlabel=The number of negative items $\mu$,
y tick label style={/pgf/number format/.cd,fixed,fixed zerofill,precision=2,/tikz/.cd},]

\addplot[color=black,mark=square*,mark size=1.5pt,line width=0.6pt]
coordinates {(1,0.290) (2,0.291) (3,0.291) (4,0.291) (5,0.290) (6,0.288) (7,0.289) (8,0.289) (9,0.291) (10,0.288) };
\addplot[color=blue,mark=*,mark size=1.5pt,line width=0.6pt]
coordinates {(1,0.229) (2,0.225) (3,0.224) (4,0.226) (5,0.229) (6,0.229) (7,0.225) (8,0.231) (9,0.223) (10,0.222) };

\end{axis}
\end{tikzpicture}
\begin{tikzpicture}
\begin{axis}[
height=3cm, width=5cm,
title={\textbf{F1@20 / MF}},
xtick={1,2,3,4,5,6,7,8,9,10,11},
xticklabels={4,8,12,16,20,24,28,32,36,40},
ylabel=Performance,xlabel=The number of negative items $\mu$,
y tick label style={/pgf/number format/.cd,fixed,fixed zerofill,precision=2,/tikz/.cd},]
\addplot[color=black,mark=square*,mark size=1.5pt,line width=0.6pt]
coordinates {(1,0.094) (2,0.093) (3,0.095) (4,0.094) (5,0.095) (6,0.095) (7,0.095) (8,0.094) (9,0.096) (10,0.094) };
\addplot[color=blue,mark=*,mark size=1.5pt,line width=0.6pt]
coordinates {(1,0.071) (2,0.069) (3,0.071) (4,0.071) (5,0.071) (6,0.071) (7,0.069) (8,0.072) (9,0.070) (10,0.069) };
\end{axis}
\end{tikzpicture}
\begin{tikzpicture}
\begin{axis}[
height=3cm, width=5cm,
title={\textbf{NDCG@20 / NCF}},
xtick={1,2,3,4,5,6,7,8,9,10,11},
xticklabels={4,8,12,16,20,24,28,32,36,40},
ylabel=Performance,xlabel=The number of negative items $\mu$,
y tick label style={/pgf/number format/.cd,fixed,fixed zerofill,precision=2,/tikz/.cd},]
\addplot[color=black,mark=square*,mark size=1.5pt,line width=0.6pt]
coordinates {(1,0.298) (2,0.298) (3,0.297) (4,0.296) (5,0.295) (6,0.295) (7,0.294) (8,0.294) (9,0.294) (10,0.293) };
\addplot[color=blue,mark=*,mark size=1.5pt,line width=0.6pt]
coordinates {(1,0.236) (2,0.235) (3,0.234) (4,0.233) (5,0.234) (6,0.232) (7,0.232) (8,0.233) (9,0.234) (10,0.232) };

\end{axis}
\end{tikzpicture}
\begin{tikzpicture}
\begin{axis}[
height=3cm, width=5cm,
title={\textbf{F1@20 / NCF}},
xtick={1,2,3,4,5,6,7,8,9,10,11},
xticklabels={4,8,12,16,20,24,28,32,36,40},
ylabel=Performance,xlabel=The number of negative items $\mu$,
y tick label style={/pgf/number format/.cd,fixed,fixed zerofill,precision=2,/tikz/.cd},]

\addplot[color=black,mark=square*,mark size=1.5pt,line width=0.6pt]
coordinates {(1,0.097) (2,0.096) (3,0.096) (4,0.096) (5,0.096) (6,0.096) (7,0.095) (8,0.096) (9,0.095) (10,0.095) };
\addplot[color=blue,mark=*,mark size=1.5pt,line width=0.6pt]
coordinates {(1,0.070) (2,0.070) (3,0.069) (4,0.069) (5,0.069) (6,0.069) (7,0.069) (8,0.069) (9,0.069) (10,0.069) };

\end{axis}
\end{tikzpicture}
\\
\textbf{(b) ModCloth}\\

\caption{The effect of the number of negative items for each user in our fairness loss on the recommendation performances of the advantaged and disadvantaged groups in Task-R.
}\label{appenfig:trunc-param-numneg-taskR}
\vspace{-1mm}
\end{figure*}

\begin{figure*}[t]
\footnotesize
\centering
\begin{tikzpicture}
\begin{customlegend}[legend columns=5,legend style={align=left,draw=none,column sep=1ex},
        legend entries={\textbf{Advantaged group}\text{  }, \textbf{Disadvantaged group}}]
        \addlegendimage{draw=black,mark=square*}   
        \addlegendimage{draw=blue,color=blue,mark=*} 
        \end{customlegend}
\end{tikzpicture}\\
\begin{tikzpicture}
\begin{axis}[
height=3cm, width=5cm,
title={\textbf{NDCG@20 / MF}},
xtick={1,2,3,4,5,6,7,8,9,10,11},
xticklabels={0,0.1,0.3,0.5,0.8,1.0,1.5,2.0,2.5,3.0,4.0},
ylabel=Performance,xlabel=$\lambda$,
y tick label style={/pgf/number format/.cd,fixed,fixed zerofill,precision=2,/tikz/.cd},]

\addplot[color=black,mark=square*,mark size=1.5pt,line width=0.6pt]
coordinates {(1,0.739) (2,0.740) (3,0.740) (4,0.739) (5,0.737) (6,0.735) (7,0.732) (8,0.726) (9,0.713) (10,0.702) };
\addplot[color=blue,mark=*,mark size=1.5pt,line width=0.6pt]
coordinates {(1,0.734) (2,0.737) (3,0.742) (4,0.746) (5,0.747) (6,0.751) (7,0.752) (8,0.754) (9,0.752) (10,0.751) };

\end{axis}
\end{tikzpicture}
\begin{tikzpicture}
\begin{axis}[
height=3cm, width=5cm,
title={\textbf{F1@20 / MF}},
xtick={1,2,3,4,5,6,7,8,9,10,11},
xticklabels={0,0.1,0.3,0.5,0.8,1.0,1.5,2.0,2.5,3.0,4.0},
ylabel=Performance,xlabel=$\lambda$,
y tick label style={/pgf/number format/.cd,fixed,fixed zerofill,precision=2,/tikz/.cd},]

\addplot[color=black,mark=square*,mark size=1.5pt,line width=0.6pt]
coordinates {(1,0.301) (2,0.301) (3,0.301) (4,0.300) (5,0.299) (6,0.297) (7,0.295) (8,0.292) (9,0.283) (10,0.277) };
\addplot[color=blue,mark=*,mark size=1.5pt,line width=0.6pt]
coordinates {(1,0.290) (2,0.291) (3,0.292) (4,0.291) (5,0.294) (6,0.295) (7,0.293) (8,0.295) (9,0.294) (10,0.292) };

\end{axis}
\end{tikzpicture}
\begin{tikzpicture}
\begin{axis}[
height=3cm, width=5cm,
title={\textbf{NDCG@20 / NCF}},
xtick={1,2,3,4,5,6,7,8,9,10,11},
xticklabels={0,0.1,0.3,0.5,0.8,1.0,1.5,2.0,2.5,3.0,4.0},
ylabel=Performance,xlabel=$\lambda$,
y tick label style={/pgf/number format/.cd,fixed,fixed zerofill,precision=2,/tikz/.cd},]

\addplot[color=black,mark=square*,mark size=1.5pt,line width=0.6pt]
coordinates {(1,0.742) (2,0.742) (3,0.742) (4,0.740) (5,0.739) (6,0.738) (7,0.735) (8,0.733) (9,0.728) (10,0.725) };
\addplot[color=blue,mark=*,mark size=1.5pt,line width=0.6pt]
coordinates {(1,0.736) (2,0.737) (3,0.739) (4,0.742) (5,0.741) (6,0.743) (7,0.743) (8,0.746) (9,0.748) (10,0.750) };

\end{axis}
\end{tikzpicture}
\begin{tikzpicture}
\begin{axis}[
height=3cm, width=5cm,
title={\textbf{F1@20 / NCF}},
xtick={1,2,3,4,5,6,7,8,9,10,11},
xticklabels={0,0.1,0.3,0.5,0.8,1.0,1.5,2.0,2.5,3.0,4.0},
ylabel=Performance,xlabel=$\lambda$,
y tick label style={/pgf/number format/.cd,fixed,fixed zerofill,precision=2,/tikz/.cd},]

\addplot[color=black,mark=square*,mark size=1.5pt,line width=0.6pt]
coordinates {(1,0.305) (2,0.305) (3,0.305) (4,0.305) (5,0.303) (6,0.303) (7,0.301) (8,0.298) (9,0.296) (10,0.294) };
\addplot[color=blue,mark=*,mark size=1.5pt,line width=0.6pt]
coordinates {(1,0.296) (2,0.297) (3,0.297) (4,0.298) (5,0.298) (6,0.300) (7,0.302) (8,0.303) (9,0.303) (10,0.302) };

\end{axis}
\end{tikzpicture}
\\
\textbf{(a) Movielenz}\\
\begin{tikzpicture}
\begin{axis}[
height=3cm, width=5cm,
title={\textbf{NDCG@20 / MF}},
xtick={1,2,3,4,5,6,7,8,9,10,11},
xticklabels={0,0.1,0.3,0.5,0.8,1.0,1.5,2.0,2.5,3.0,4.0},
ylabel=Performance,xlabel=$\lambda$,
y tick label style={/pgf/number format/.cd,fixed,fixed zerofill,precision=2,/tikz/.cd},]

\addplot[color=black,mark=square*,mark size=1.5pt,line width=0.6pt]
coordinates {(1,0.205) (2,0.204) (3,0.203) (4,0.203) (5,0.200) (6,0.199) (7,0.193) (8,0.187) (9,0.188) (10,0.186) };
\addplot[color=blue,mark=*,mark size=1.5pt,line width=0.6pt]
coordinates {(1,0.157) (2,0.159) (3,0.157) (4,0.158) (5,0.160) (6,0.159) (7,0.156) (8,0.153) (9,0.155) (10,0.155) };

\end{axis}
\end{tikzpicture}
\begin{tikzpicture}
\begin{axis}[
height=3cm, width=5cm,
title={\textbf{F1@20 / MF}},
xtick={1,2,3,4,5,6,7,8,9,10,11},
xticklabels={0,0.1,0.3,0.5,0.8,1.0,1.5,2.0,2.5,3.0,4.0},
ylabel=Performance,xlabel=$\lambda$,
y tick label style={/pgf/number format/.cd,fixed,fixed zerofill,precision=2,/tikz/.cd},]
\addplot[color=black,mark=square*,mark size=1.5pt,line width=0.6pt]
coordinates {(1,0.057) (2,0.057) (3,0.057) (4,0.056) (5,0.056) (6,0.056) (7,0.053) (8,0.052) (9,0.053) (10,0.052) };
\addplot[color=blue,mark=*,mark size=1.5pt,line width=0.6pt]
coordinates {(1,0.043) (2,0.043) (3,0.042) (4,0.042) (5,0.042) (6,0.042) (7,0.041) (8,0.039) (9,0.040) (10,0.040) };

\end{axis}
\end{tikzpicture}
\begin{tikzpicture}
\begin{axis}[
height=3cm, width=5cm,
title={\textbf{NDCG@20 / NCF}},
xtick={1,2,3,4,5,6,7,8,9,10,11},
xticklabels={0,0.1,0.3,0.5,0.8,1.0,1.5,2.0,2.5,3.0,4.0},
ylabel=Performance,xlabel=$\lambda$,
y tick label style={/pgf/number format/.cd,fixed,fixed zerofill,precision=2,/tikz/.cd},]
\addplot[color=black,mark=square*,mark size=1.5pt,line width=0.6pt]
coordinates {(1,0.193) (2,0.195) (3,0.196) (4,0.198) (5,0.196) (6,0.195) (7,0.193) (8,0.189) (9,0.186) (10,0.185) };
\addplot[color=blue,mark=*,mark size=1.5pt,line width=0.6pt]
coordinates {(1,0.143) (2,0.144) (3,0.150) (4,0.155) (5,0.153) (6,0.152) (7,0.152) (8,0.150) (9,0.148) (10,0.147) };

\end{axis}
\end{tikzpicture}
\begin{tikzpicture}
\begin{axis}[
height=3cm, width=5cm,
title={\textbf{F1@20 / NCF}},
xtick={1,2,3,4,5,6,7,8,9,10,11},
xticklabels={0,0.1,0.3,0.5,0.8,1.0,1.5,2.0,2.5,3.0,4.0},
ylabel=Performance,xlabel=$\lambda$,
y tick label style={/pgf/number format/.cd,fixed,fixed zerofill,precision=2,/tikz/.cd},]
\addplot[color=black,mark=square*,mark size=1.5pt,line width=0.6pt]
coordinates {(1,0.053) (2,0.053) (3,0.053) (4,0.052) (5,0.052) (6,0.052) (7,0.051) (8,0.051) (9,0.051) (10,0.051) };
\addplot[color=blue,mark=*,mark size=1.5pt,line width=0.6pt]
coordinates {(1,0.038) (2,0.039) (3,0.039) (4,0.039) (5,0.039) (6,0.038) (7,0.039) (8,0.039) (9,0.038) (10,0.038) };

\end{axis}
\end{tikzpicture}
\\
\textbf{(b) ModCloth}\\
\caption{The effect of the scaling factor $\lambda$ on the recommendation performances of the advantaged and disadvantaged groups in Task-N.
}\label{appenfig:trunc-param-lambda-taskN}
\vspace{-1mm}
\end{figure*}
\begin{figure*}[t]
\footnotesize
\centering
\begin{tikzpicture}
\begin{customlegend}[legend columns=5,legend style={align=left,draw=none,column sep=1ex},
        legend entries={\textbf{Advantaged group}\text{  }, \textbf{Disadvantaged group}}]
        \addlegendimage{draw=black,mark=square*}   
        \addlegendimage{draw=blue,color=blue,mark=*} 
        \end{customlegend}
\end{tikzpicture}\\
\begin{tikzpicture}
\begin{axis}[
height=3cm, width=5cm,
title={\textbf{NDCG@20 / MF}},
xtick={1,2,3,4,5,6,7,8,9,10,11},
xticklabels={1,5,10,15,20,25,30,35,40,45,50},
ylabel=Performance,xlabel=The number of epochs,
y tick label style={/pgf/number format/.cd,fixed,fixed zerofill,precision=2,/tikz/.cd},]

\addplot[color=black,mark=square*,mark size=1.5pt,line width=0.6pt]
coordinates {(1,0.726) (2,0.742) (3,0.735) (4,0.710) (5,0.680) (6,0.664) (7,0.662) (8,0.661) (9,0.648) (10,0.638) (11,0.647) };
\addplot[color=blue,mark=*,mark size=1.5pt,line width=0.6pt]
coordinates {(1,0.724) (2,0.747) (3,0.751) (4,0.732) (5,0.684) (6,0.655) (7,0.667) (8,0.645) (9,0.649) (10,0.655) (11,0.633) };

\end{axis}
\end{tikzpicture}
\begin{tikzpicture}
\begin{axis}[
height=3cm, width=5cm,
title={\textbf{F1@20 / MF}},
xtick={1,2,3,4,5,6,7,8,9,10,11},
xticklabels={1,5,10,15,20,25,30,35,40,45,50},
ylabel=Performance,xlabel=The number of epochs,
y tick label style={/pgf/number format/.cd,fixed,fixed zerofill,precision=2,/tikz/.cd},]

\addplot[color=black,mark=square*,mark size=1.5pt,line width=0.6pt]
coordinates {(1,0.294) (2,0.300) (3,0.297) (4,0.287) (5,0.273) (6,0.260) (7,0.258) (8,0.256) (9,0.251) (10,0.245) (11,0.248) };
\addplot[color=blue,mark=*,mark size=1.5pt,line width=0.6pt]
coordinates {(1,0.289) (2,0.295) (3,0.295) (4,0.284) (5,0.267) (6,0.253) (7,0.256) (8,0.248) (9,0.249) (10,0.254) (11,0.245) };

\end{axis}
\end{tikzpicture}
\begin{tikzpicture}
\begin{axis}[
height=3cm, width=5cm,
title={\textbf{NDCG@20 / NCF}},
xtick={1,2,3,4,5,6,7,8,9,10,11},
xticklabels={1,5,10,15,20,25,30,35,40,45,50},
ylabel=Performance,xlabel=The number of epochs,
y tick label style={/pgf/number format/.cd,fixed,fixed zerofill,precision=2,/tikz/.cd},]

\addplot[color=black,mark=square*,mark size=1.5pt,line width=0.6pt]
coordinates {(1,0.723) (2,0.731) (3,0.738) (4,0.743) (5,0.746) (6,0.748) (7,0.747) (8,0.749) (9,0.749) (10,0.748) (11,0.749) };
\addplot[color=blue,mark=*,mark size=1.5pt,line width=0.6pt]
coordinates {(1,0.728) (2,0.736) (3,0.743) (4,0.748) (5,0.750) (6,0.753) (7,0.754) (8,0.752) (9,0.752) (10,0.751) (11,0.750) };

\end{axis}
\end{tikzpicture}
\begin{tikzpicture}
\begin{axis}[
height=3cm, width=5cm,
title={\textbf{F1@20 / NCF}},
xtick={1,2,3,4,5,6,7,8,9,10,11},
xticklabels={1,5,10,15,20,25,30,35,40,45,50},
ylabel=Performance,xlabel=The number of epochs,
y tick label style={/pgf/number format/.cd,fixed,fixed zerofill,precision=2,/tikz/.cd},]

\addplot[color=black,mark=square*,mark size=1.5pt,line width=0.6pt]
coordinates {(1,0.296) (2,0.301) (3,0.303) (4,0.304) (5,0.305) (6,0.305) (7,0.305) (8,0.305) (9,0.305) (10,0.305) (11,0.304) };
\addplot[color=blue,mark=*,mark size=1.5pt,line width=0.6pt]
coordinates {(1,0.294) (2,0.299) (3,0.300) (4,0.300) (5,0.300) (6,0.298) (7,0.298) (8,0.297) (9,0.296) (10,0.296) (11,0.296) };

\end{axis}
\end{tikzpicture}
\\
\textbf{(a) Movielenz}\\
\begin{tikzpicture}
\begin{axis}[
height=3cm, width=5cm,
title={\textbf{NDCG@20 / MF}},
xtick={1,2,3,4,5,6,7,8,9,10,11},
xticklabels={1,5,10,15,20,25,30,35,40,45,50},
ylabel=Performance,xlabel=The number of epochs,
y tick label style={/pgf/number format/.cd,fixed,fixed zerofill,precision=2,/tikz/.cd},]

\addplot[color=black,mark=square*,mark size=1.5pt,line width=0.6pt]
coordinates {(1,0.195) (2,0.195) (3,0.199) (4,0.201) (5,0.197) (6,0.199) (7,0.195) (8,0.194) (9,0.195) (10,0.196) (11,0.196) };
\addplot[color=blue,mark=*,mark size=1.5pt,line width=0.6pt]
coordinates {(1,0.145) (2,0.153) (3,0.159) (4,0.165) (5,0.166) (6,0.170) (7,0.172) (8,0.167) (9,0.163) (10,0.159) (11,0.160) };

\end{axis}
\end{tikzpicture}
\begin{tikzpicture}
\begin{axis}[
height=3cm, width=5cm,
title={\textbf{F1@20 / MF}},
xtick={1,2,3,4,5,6,7,8,9,10,11},
xticklabels={1,5,10,15,20,25,30,35,40,45,50},
ylabel=Performance,xlabel=The number of epochs,
y tick label style={/pgf/number format/.cd,fixed,fixed zerofill,precision=2,/tikz/.cd},]
\addplot[color=black,mark=square*,mark size=1.5pt,line width=0.6pt]
coordinates {(1,0.053) (2,0.055) (3,0.056) (4,0.056) (5,0.055) (6,0.055) (7,0.054) (8,0.054) (9,0.053) (10,0.053) (11,0.053) };
\addplot[color=blue,mark=*,mark size=1.5pt,line width=0.6pt]
coordinates {(1,0.038) (2,0.040) (3,0.042) (4,0.042) (5,0.043) (6,0.043) (7,0.042) (8,0.042) (9,0.042) (10,0.042) (11,0.041) };

\end{axis}
\end{tikzpicture}
\begin{tikzpicture}
\begin{axis}[
height=3cm, width=5cm,
title={\textbf{NDCG@20 / NCF}},
xtick={1,2,3,4,5,6,7,8,9,10,11},
xticklabels={1,5,10,15,20,25,30,35,40,45,50},
ylabel=Performance,xlabel=The number of epochs,
y tick label style={/pgf/number format/.cd,fixed,fixed zerofill,precision=2,/tikz/.cd},]
\addplot[color=black,mark=square*,mark size=1.5pt,line width=0.6pt]
coordinates {(1,0.193) (2,0.194) (3,0.195) (4,0.195) (5,0.196) (6,0.196) (7,0.197) (8,0.198) (9,0.199) (10,0.199) (11,0.200) };
\addplot[color=blue,mark=*,mark size=1.5pt,line width=0.6pt]
coordinates {(1,0.150) (2,0.151) (3,0.152) (4,0.154) (5,0.155) (6,0.158) (7,0.158) (8,0.158) (9,0.159) (10,0.160) (11,0.161) };

\end{axis}
\end{tikzpicture}
\begin{tikzpicture}
\begin{axis}[
height=3cm, width=5cm,
title={\textbf{F1@20 / NCF}},
xtick={1,2,3,4,5,6,7,8,9,10,11},
xticklabels={1,5,10,15,20,25,30,35,40,45,50},
ylabel=Performance,xlabel=The number of epochs,
y tick label style={/pgf/number format/.cd,fixed,fixed zerofill,precision=2,/tikz/.cd},]
\addplot[color=black,mark=square*,mark size=1.5pt,line width=0.6pt]
coordinates {(1,0.051) (2,0.051) (3,0.052) (4,0.052) (5,0.052) (6,0.053) (7,0.053) (8,0.053) (9,0.053) (10,0.054) (11,0.054) };
\addplot[color=blue,mark=*,mark size=1.5pt,line width=0.6pt]
coordinates {(1,0.038) (2,0.038) (3,0.038) (4,0.039) (5,0.039) (6,0.039) (7,0.040) (8,0.040) (9,0.040) (10,0.041) (11,0.041) };

\end{axis}
\end{tikzpicture}
\\
\textbf{(b) ModCloth}\\

\caption{The effect of the number of dynamic update epochs on the recommendation performances of the advantaged and disadvantaged groups in Task-N.
}\label{appenfig:trunc-param-tepoch-taskN}
\vspace{-1mm}
\end{figure*}
\begin{figure*}[t]
\footnotesize
\centering
\begin{tikzpicture}
\begin{customlegend}[legend columns=5,legend style={align=left,draw=none,column sep=1ex},
        legend entries={\textbf{Advantaged group}\text{  }, \textbf{Disadvantaged group}}]
        \addlegendimage{draw=black,mark=square*}   
        \addlegendimage{draw=blue,color=blue,mark=*} 
        \end{customlegend}
\end{tikzpicture}\\
\begin{tikzpicture}
\begin{axis}[
height=3cm, width=5cm,
title={\textbf{NDCG@20 / MF}},
xtick={1,2,3,4,5,6,7,8,9,10,11},
xticklabels={0.1,0.5,1.0,2.0,3.0,4.0,5.0},
ylabel=Performance,xlabel=Tau $\tau$,
y tick label style={/pgf/number format/.cd,fixed,fixed zerofill,precision=2,/tikz/.cd},]

\addplot[color=black,mark=square*,mark size=1.5pt,line width=0.6pt]
coordinates {(1,0.235) (2,0.589) (3,0.712) (4,0.731) (5,0.735) (6,0.737) (7,0.738) };
\addplot[color=blue,mark=*,mark size=1.5pt,line width=0.6pt]
coordinates {(1,0.740) (2,0.760) (3,0.756) (4,0.755) (5,0.751) (6,0.750) (7,0.747) };

\end{axis}
\end{tikzpicture}
\begin{tikzpicture}
\begin{axis}[
height=3cm, width=5cm,
title={\textbf{F1@20 / MF}},
xtick={1,2,3,4,5,6,7,8,9,10,11},
xticklabels={0.1,0.5,1.0,2.0,3.0,4.0,5.0},
ylabel=Performance,xlabel=Tau $\tau$,
y tick label style={/pgf/number format/.cd,fixed,fixed zerofill,precision=2,/tikz/.cd},]
\addplot[color=black,mark=square*,mark size=1.5pt,line width=0.6pt]
coordinates {(1,0.049) (2,0.224) (3,0.282) (4,0.294) (5,0.297) (6,0.299) (7,0.300) };
\addplot[color=blue,mark=*,mark size=1.5pt,line width=0.6pt]
coordinates {(1,0.304) (2,0.297) (3,0.295) (4,0.294) (5,0.295) (6,0.296) (7,0.294) };

\end{axis}
\end{tikzpicture}
\begin{tikzpicture}
\begin{axis}[
height=3cm, width=5cm,
title={\textbf{NDCG@20 / NCF}},
xtick={1,2,3,4,5,6,7,8,9,10,11},
xticklabels={0.1,0.5,1.0,2.0,3.0,4.0,5.0},
ylabel=Performance,xlabel=Tau $\tau$,
y tick label style={/pgf/number format/.cd,fixed,fixed zerofill,precision=2,/tikz/.cd},]

\addplot[color=black,mark=square*,mark size=1.5pt,line width=0.6pt]
coordinates {(1,0.719) (2,0.725) (3,0.731) (4,0.735) (5,0.738) (6,0.739) (7,0.740) };
\addplot[color=blue,mark=*,mark size=1.5pt,line width=0.6pt]
coordinates {(1,0.740) (2,0.743) (3,0.745) (4,0.744) (5,0.743) (6,0.742) (7,0.742) };
\end{axis}
\end{tikzpicture}
\begin{tikzpicture}
\begin{axis}[
height=3cm, width=5cm,
title={\textbf{F1@20 / NCF}},
xtick={1,2,3,4,5,6,7,8,9,10,11},
xticklabels={0.1,0.5,1.0,2.0,3.0,4.0,5.0},
ylabel=Performance,xlabel=Tau $\tau$,
y tick label style={/pgf/number format/.cd,fixed,fixed zerofill,precision=2,/tikz/.cd},]

\addplot[color=black,mark=square*,mark size=1.5pt,line width=0.6pt]
coordinates {(1,0.272) (2,0.294) (3,0.297) (4,0.300) (5,0.303) (6,0.303) (7,0.305) };
\addplot[color=blue,mark=*,mark size=1.5pt,line width=0.6pt]
coordinates {(1,0.293) (2,0.302) (3,0.300) (4,0.302) (5,0.300) (6,0.298) (7,0.298) };

\end{axis}
\end{tikzpicture}
\\
\textbf{(a) Movielenz}\\
\begin{tikzpicture}
\begin{axis}[
height=3cm, width=5cm,
title={\textbf{NDCG@20 / MF}},
xtick={1,2,3,4,5,6,7,8,9,10,11},
xticklabels={0.1,0.5,1.0,2.0,3.0,4.0,5.0},
ylabel=Performance,xlabel=Tau $\tau$,
y tick label style={/pgf/number format/.cd,fixed,fixed zerofill,precision=2,/tikz/.cd},]

\addplot[color=black,mark=square*,mark size=1.5pt,line width=0.6pt]
coordinates {(1,0.137) (2,0.192) (3,0.195) (4,0.196) (5,0.199) (6,0.200) (7,0.199) };
\addplot[color=blue,mark=*,mark size=1.5pt,line width=0.6pt]
coordinates {(1,0.159) (2,0.156) (3,0.159) (4,0.162) (5,0.159) (6,0.156) (7,0.158) };

\end{axis}
\end{tikzpicture}
\begin{tikzpicture}
\begin{axis}[
height=3cm, width=5cm,
title={\textbf{F1@20 / MF}},
xtick={1,2,3,4,5,6,7,8,9,10,11},
xticklabels={0.1,0.5,1.0,2.0,3.0,4.0,5.0},
ylabel=Performance,xlabel=Tau $\tau$,
y tick label style={/pgf/number format/.cd,fixed,fixed zerofill,precision=2,/tikz/.cd},]
\addplot[color=black,mark=square*,mark size=1.5pt,line width=0.6pt]
coordinates {(1,0.038) (2,0.053) (3,0.053) (4,0.054) (5,0.056) (6,0.056) (7,0.056) };
\addplot[color=blue,mark=*,mark size=1.5pt,line width=0.6pt]
coordinates {(1,0.042) (2,0.043) (3,0.041) (4,0.041) (5,0.042) (6,0.040) (7,0.041) };
\end{axis}
\end{tikzpicture}
\begin{tikzpicture}
\begin{axis}[
height=3cm, width=5cm,
title={\textbf{NDCG@20 / NCF}},
xtick={1,2,3,4,5,6,7,8,9,10,11},
xticklabels={0.1,0.5,1.0,2.0,3.0,4.0,5.0},
ylabel=Performance,xlabel=Tau $\tau$,
y tick label style={/pgf/number format/.cd,fixed,fixed zerofill,precision=2,/tikz/.cd},]
\addplot[color=black,mark=square*,mark size=1.5pt,line width=0.6pt]
coordinates {(1,0.192) (2,0.191) (3,0.193) (4,0.194) (5,0.195) (6,0.196) (7,0.197) };
\addplot[color=blue,mark=*,mark size=1.5pt,line width=0.6pt]
coordinates {(1,0.156) (2,0.152) (3,0.153) (4,0.152) (5,0.152) (6,0.152) (7,0.153) };

\end{axis}
\end{tikzpicture}
\begin{tikzpicture}
\begin{axis}[
height=3cm, width=5cm,
title={\textbf{F1@20 / NCF}},
xtick={1,2,3,4,5,6,7,8,9,10,11},
xticklabels={0.1,0.5,1.0,2.0,3.0,4.0,5.0},
ylabel=Performance,xlabel=Tau $\tau$,
y tick label style={/pgf/number format/.cd,fixed,fixed zerofill,precision=2,/tikz/.cd},]

\addplot[color=black,mark=square*,mark size=1.5pt,line width=0.6pt]
coordinates {(1,0.051) (2,0.051) (3,0.051) (4,0.052) (5,0.052) (6,0.052) (7,0.052) };
\addplot[color=blue,mark=*,mark size=1.5pt,line width=0.6pt]
coordinates {(1,0.038) (2,0.038) (3,0.038) (4,0.039) (5,0.038) (6,0.039) (7,0.039) };

\end{axis}
\end{tikzpicture}
\\
\textbf{(b) ModCloth}\\

\caption{The effect of the hyperparameter $\tau$ in our Differentiable Hit (DH) on the recommendation performances of the advantaged and disadvantaged groups in Task-N.
}\label{appenfig:trunc-param-tau-taskN}
\vspace{-1mm}
\end{figure*}
\begin{figure*}[t]
\footnotesize
\centering
\begin{tikzpicture}
\begin{customlegend}[legend columns=5,legend style={align=left,draw=none,column sep=1ex},
        legend entries={\textbf{Advantaged group}\text{  }, \textbf{Disadvantaged group}}]
        \addlegendimage{draw=black,mark=square*}   
        \addlegendimage{draw=blue,color=blue,mark=*} 
        \end{customlegend}
\end{tikzpicture}\\
\begin{tikzpicture}
\begin{axis}[
height=3cm, width=5cm,
title={\textbf{NDCG@20 / MF}},
xtick={1,2,3,4,5,6,7,8,9,10,11},
xticklabels={4,8,12,16,20,24,28,32,36,40},
ylabel=Performance,xlabel=The number of negative items $\mu$,
y tick label style={/pgf/number format/.cd,fixed,fixed zerofill,precision=2,/tikz/.cd},]
\addplot[color=black,mark=square*,mark size=1.5pt,line width=0.6pt]
coordinates {(1,0.735) (2,0.735) (3,0.739) (4,0.741) (5,0.732) (6,0.732) (7,0.738) (8,0.738) (9,0.737) (10,0.734) };
\addplot[color=blue,mark=*,mark size=1.5pt,line width=0.6pt]
coordinates {(1,0.751) (2,0.743) (3,0.743) (4,0.743) (5,0.743) (6,0.741) (7,0.743) (8,0.747) (9,0.752) (10,0.743) };

\end{axis}
\end{tikzpicture}
\begin{tikzpicture}
\begin{axis}[
height=3cm, width=5cm,
title={\textbf{F1@20 / MF}},
xtick={1,2,3,4,5,6,7,8,9,10,11},
xticklabels={4,8,12,16,20,24,28,32,36,40},
ylabel=Performance,xlabel=The number of negative items $\mu$,
y tick label style={/pgf/number format/.cd,fixed,fixed zerofill,precision=2,/tikz/.cd},]

\addplot[color=black,mark=square*,mark size=1.5pt,line width=0.6pt]
coordinates {(1,0.297) (2,0.297) (3,0.297) (4,0.298) (5,0.295) (6,0.294) (7,0.297) (8,0.295) (9,0.297) (10,0.295) };
\addplot[color=blue,mark=*,mark size=1.5pt,line width=0.6pt]
coordinates {(1,0.295) (2,0.294) (3,0.295) (4,0.293) (5,0.293) (6,0.294) (7,0.294) (8,0.291) (9,0.298) (10,0.294) };

\end{axis}
\end{tikzpicture}
\begin{tikzpicture}
\begin{axis}[
height=3cm, width=5cm,
title={\textbf{NDCG@20 / NCF}},
xtick={1,2,3,4,5,6,7,8,9,10,11},
xticklabels={4,8,12,16,20,24,28,32,36,40},
ylabel=Performance,xlabel=The number of negative items $\mu$,
y tick label style={/pgf/number format/.cd,fixed,fixed zerofill,precision=2,/tikz/.cd},]
\addplot[color=black,mark=square*,mark size=1.5pt,line width=0.6pt]
coordinates {(1,0.738) (2,0.735) (3,0.742) (4,0.738) (5,0.739) (6,0.740) (7,0.737) (8,0.737) (9,0.736) (10,0.737) };
\addplot[color=blue,mark=*,mark size=1.5pt,line width=0.6pt]
coordinates {(1,0.743) (2,0.744) (3,0.744) (4,0.744) (5,0.747) (6,0.743) (7,0.746) (8,0.748) (9,0.746) (10,0.742) };
\end{axis}
\end{tikzpicture}
\begin{tikzpicture}
\begin{axis}[
height=3cm, width=5cm,
title={\textbf{F1@20 / NCF}},
xtick={1,2,3,4,5,6,7,8,9,10,11},
xticklabels={4,8,12,16,20,24,28,32,36,40},
ylabel=Performance,xlabel=The number of negative items $\mu$,
y tick label style={/pgf/number format/.cd,fixed,fixed zerofill,precision=2,/tikz/.cd},]

\addplot[color=black,mark=square*,mark size=1.5pt,line width=0.6pt]
coordinates {(1,0.303) (2,0.302) (3,0.302) (4,0.302) (5,0.302) (6,0.301) (7,0.301) (8,0.301) (9,0.301) (10,0.300) };
\addplot[color=blue,mark=*,mark size=1.5pt,line width=0.6pt]
coordinates {(1,0.300) (2,0.300) (3,0.301) (4,0.302) (5,0.303) (6,0.303) (7,0.303) (8,0.302) (9,0.302) (10,0.301) };

\end{axis}
\end{tikzpicture}
\\
\textbf{(a) Movielenz}\\
\begin{tikzpicture}
\begin{axis}[
height=3cm, width=5cm,
title={\textbf{NDCG@20 / MF}},
xtick={1,2,3,4,5,6,7,8,9,10,11},
xticklabels={4,8,12,16,20,24,28,32,36,40},
ylabel=Performance,xlabel=The number of negative items $\mu$,
y tick label style={/pgf/number format/.cd,fixed,fixed zerofill,precision=2,/tikz/.cd},]

\addplot[color=black,mark=square*,mark size=1.5pt,line width=0.6pt]
coordinates {(1,0.199) (2,0.199) (3,0.196) (4,0.200) (5,0.198) (6,0.199) (7,0.202) (8,0.204) (9,0.203) (10,0.198) };
\addplot[color=blue,mark=*,mark size=1.5pt,line width=0.6pt]
coordinates {(1,0.159) (2,0.158) (3,0.162) (4,0.160) (5,0.161) (6,0.167) (7,0.162) (8,0.161) (9,0.166) (10,0.166) };

\end{axis}
\end{tikzpicture}
\begin{tikzpicture}
\begin{axis}[
height=3cm, width=5cm,
title={\textbf{F1@20 / MF}},
xtick={1,2,3,4,5,6,7,8,9,10,11},
xticklabels={4,8,12,16,20,24,28,32,36,40},
ylabel=Performance,xlabel=The number of negative items $\mu$,
y tick label style={/pgf/number format/.cd,fixed,fixed zerofill,precision=2,/tikz/.cd},]
\addplot[color=black,mark=square*,mark size=1.5pt,line width=0.6pt]
coordinates {(1,0.056) (2,0.054) (3,0.054) (4,0.054) (5,0.055) (6,0.054) (7,0.056) (8,0.055) (9,0.057) (10,0.055) };
\addplot[color=blue,mark=*,mark size=1.5pt,line width=0.6pt]
coordinates {(1,0.042) (2,0.042) (3,0.043) (4,0.043) (5,0.043) (6,0.045) (7,0.043) (8,0.043) (9,0.044) (10,0.046) };
\end{axis}
\end{tikzpicture}
\begin{tikzpicture}
\begin{axis}[
height=3cm, width=5cm,
title={\textbf{NDCG@20 / NCF}},
xtick={1,2,3,4,5,6,7,8,9,10,11},
xticklabels={4,8,12,16,20,24,28,32,36,40},
ylabel=Performance,xlabel=The number of negative items $\mu$,
y tick label style={/pgf/number format/.cd,fixed,fixed zerofill,precision=2,/tikz/.cd},]
\addplot[color=black,mark=square*,mark size=1.5pt,line width=0.6pt]
coordinates {(1,0.195) (2,0.194) (3,0.193) (4,0.191) (5,0.192) (6,0.191) (7,0.191) (8,0.191) (9,0.190) (10,0.190) };
\addplot[color=blue,mark=*,mark size=1.5pt,line width=0.6pt]
coordinates {(1,0.152) (2,0.153) (3,0.152) (4,0.152) (5,0.151) (6,0.149) (7,0.149) (8,0.151) (9,0.150) (10,0.151) };

\end{axis}
\end{tikzpicture}
\begin{tikzpicture}
\begin{axis}[
height=3cm, width=5cm,
title={\textbf{F1@20 / NCF}},
xtick={1,2,3,4,5,6,7,8,9,10,11},
xticklabels={4,8,12,16,20,24,28,32,36,40},
ylabel=Performance,xlabel=The number of negative items $\mu$,
y tick label style={/pgf/number format/.cd,fixed,fixed zerofill,precision=2,/tikz/.cd},]

\addplot[color=black,mark=square*,mark size=1.5pt,line width=0.6pt]
coordinates {(1,0.052) (2,0.052) (3,0.051) (4,0.051) (5,0.051) (6,0.051) (7,0.051) (8,0.051) (9,0.051) (10,0.051) };
\addplot[color=blue,mark=*,mark size=1.5pt,line width=0.6pt]
coordinates {(1,0.038) (2,0.039) (3,0.038) (4,0.038) (5,0.038) (6,0.037) (7,0.038) (8,0.038) (9,0.038) (10,0.038) };

\end{axis}
\end{tikzpicture}
\\
\textbf{(b) ModCloth}\\

\caption{The effect of the number of negative items for each user in our fairness loss on the recommendation performances of the advantaged and disadvantaged groups in Task-N.
}\label{appenfig:trunc-param-numneg-taskN}
\vspace{-1mm}
\end{figure*}

\end{document}